\documentclass[11pt,letterpaper]{article}

\usepackage[letterpaper,margin=1in]{geometry}
\usepackage{amsmath}
\usepackage{amssymb}
\usepackage{amsthm}
\usepackage[ruled,vlined,linesnumbered]{algorithm2e}
\usepackage{tikz}
\usetikzlibrary{shapes.geometric}
\usepackage{graphicx}
\usepackage{cite}
\usepackage{float}
\usepackage{hyperref}
\SetArgSty{textnormal}

\definecolor{cutA}{RGB}{214,39,40}
\definecolor{cutB}{RGB}{31,119,180}
\definecolor{cutC}{RGB}{44,160,44}
\definecolor{cutD}{RGB}{23,190,207}
\definecolor{cutE}{RGB}{148,103,189}
\definecolor{cutF}{RGB}{140,86,75}

\newcommand{\br}[1]{\mathopen{}\left( #1 \right)}
\newcommand{\brc}[1]{\mathopen{}\left\{ #1 \right\}}
\newcommand{\spr}[1]{\mathopen{}\left| #1 \right|}
\newcommand{\fl}[1]{\mathopen{}\left\lfloor #1 \right\rfloor}
\newcommand{\cl}[1]{\mathopen{}\left\lceil #1 \right\rceil}

\newcommand{\OPT}{\text{OPT}}
\newcommand{\Sopt}{S_h^{*}}
\newcommand{\SoptBar}{\bar{S}_h^{*}}
\newcommand{\hw}{h_w^{\delta}}

\newcommand{\cH}{\mathcal{H}}

\newtheorem{theorem}{Theorem}[section]
\newtheorem{lemma}[theorem]{Lemma}
\newtheorem{corollary}[theorem]{Corollary}
\theoremstyle{definition}

\newcommand{\Etree}{\mathbb{E}_{T\sim\mu}}

\newcommand{\bigo}{\mathcal{O}}
\DeclareMathOperator*{\argmin}{arg\,min}
\newcommand{\minCut}[3]{\textup{min-cut}_{#3}\br{#1,#2}}
\newcommand{\phiwK}{\phi_w^{K}}
\newcommand{\OPTh}{\OPT_h^{\delta}}
\newcommand{\OPTc}{\OPT_{\psi}}
\usepackage{enumitem}

\title{Graph Partitioning with Demands: Generalized Conductance and its Applications}
\author{Michał Szyfelbein \\ Gdańsk University of Technology \and Dariusz Dereniowski \\ Gdańsk University of Technology}
\date{\today}

\begin{document}

\maketitle

\begin{abstract}
  In this work, we study various graph partitioning problems under a general demand model. In each such task, we are given a graph $G=(V,E,c,w)$ with a capacity function $c\colon E\to \mathbb{N}$ and a demand function $w\colon V\times V\to \mathbb{N}$. Our main focus is the problem of finding a cut $(S, \bar{S})$ minimizing the quantity
  \[
  \psi_w( S ) = \frac{c( S, \bar{S} )}{w( S, V )\cdot w( \bar{S}, V )}.
  \]
  Here, $c( S, \bar{S} )$ is the cost of edges between $S$ and the complement of $S$, $\bar{S}$, and $w( S, V )=w( S )+w( S, \bar{S} )$ is the sum of the internal demand within $S$, $w( S )$, and the demand between vertices of $S$ and $\bar{S}$, $w( S, \bar{S} )$. We call $\psi_w( S )$ the \emph{generalized conductance} of the cut $(S, \bar{S})$, and the task of minimizing $\psi_w( S )$ the \textsc{Generalized Conductance Problem}. Our main contribution is an algorithm with an $\mathcal{O}(\log n)$-approximation guarantee for this objective. Our result is achieved via a two-way reduction: first to the well-known \textsc{Generalized $k$-Multicut Problem}, and then to a constrained variant of the classic \textsc{Sparsest-Cut Problem}, with an additional upper-bound constraint on the amount of demand that may be cut.
  
  Moreover, we show that the above procedure can be leveraged to obtain an $\mathcal{O}(\log n)$-bicriteria approximation for \textsc{Graph Partitioning with Demands}, where the goal is to find a minimum-cost subset of edges $C$ such that for every component $H$ of $G\setminus C$, $w( H )\leq \rho\cdot w( V )$. This, in turn, yields an $\mathcal{O}(\log n)$-approximation for \textsc{Hierarchical Clustering with Demands}, the problem of finding a hierarchy of cuts that partitions the graph into increasingly refined clusters. For multiplicative demand functions, our framework improves these guarantees to $\mathcal{O}(\sqrt{\log n})$ and for trees we obtain an $\mathcal{O}(1)$-approximation for all of the above objectives.
\end{abstract}

\noindent\textbf{Keywords:} Graph partitioning with demands, Generalized conductance, Sparsest cut, Generalized multicut, Hierarchical clustering, Approximation algorithms.

\section{Introduction}

\textsc{Sparsest Cut} is a central problem among various graph partitioning tasks. Given a graph $G=(V,E, c)$, where $c\colon E \to \mathbb{N}$ is a capacity function, the goal is to find a subset of vertices $S\subseteq V$ that minimizes

\[
\phi\br{S} = \frac{c\br{S, \bar{S}}}{\spr{S}\cdot \spr{\bar{S}}}.
\]
where $\bar{S}=V\setminus S$ and $c\br{S,\bar{S}}=\sum_{u\in S}\sum_{v \in \bar{S}}c\br{u, v}$.

This problem has been extensively studied in the literature \cite{MulticommodityMaxFlowMinCutTheoremsAndTheirUseInDesigningApproximationAlgorithms} resulting in remarkable connections to topics including metric spaces \cite{TheGeometryOfGraphsAndSomeOfItsAlgorithmicApplications,AnOlogkApproximateMinCutMaxFlowTheoremAndApproximationAlgorithm,OnLipschitzEmbeddingOfFiniteMetricSpacesInHilbertSpace,EuclideanDistortionAndTheSparsestCut} and spectral graph theory \cite{ALowerBoundForTheSmallestEigenvalueOfTheLaplacian,Lambda1IsoperimetricInequalitiesForGraphsAndSuperconcentrators,SpectralPartitioningWorksPlanarGraphsAndFiniteElementMeshes} with the state-of-the-art approximation algorithm due to Arora, Rao, and Vazirani \cite{ExpanderFlowsGeometricEmbeddingsAndGraphPartitioning}, achieving ratio of $\bigo\br{\sqrt{\log n}}$, where $n=\spr{V}$. Moreover, it also has a number of applications as an essential primitive in other graph algorithms, where it serves as a building block for various divide-and-conquer strategies. This includes \textsc{Graph Partitioning} where the goal is to find a subset of edges of minimum capacity such that all of the connected components resulting from deleting these edges are of small size \cite{MulticommodityMaxFlowMinCutTheoremsAndTheirUseInDesigningApproximationAlgorithms}. Another example is the problem of \textsc{Hierarchical Clustering} with similarity measures, where the goal is to hierarchically cluster data points, so that to minimize the Dasgupta's clustering objective \cite{ACostFunctionForSimilarityBasedHierarchicalClustering,ApproximateHierarchicalClusteringViaSparsestCutAndSpreadingMetrics,HierarchicalClusteringObjectiveFunctionsAndAlgorithms}. Intuitively, this objective promotes clusters with both children of balanced size as well as a low capacity of edges cut between them. In particular, given an $\alpha$-approximation algorithm for the \textsc{Sparsest Cut}, one can obtain an $\bigo\br{\alpha}$-approximation algorithm for both of the aforementioned problems.

This usefulness of \textsc{Sparsest Cut} can be attributed to the following observation, implicitly exploited by many algorithms \cite{MulticommodityMaxFlowMinCutTheoremsAndTheirUseInDesigningApproximationAlgorithms,AnOlogkApproximateMinCutMaxFlowTheoremAndApproximationAlgorithm,ACostFunctionForSimilarityBasedHierarchicalClustering,ApproximateHierarchicalClusteringViaSparsestCutAndSpreadingMetrics,HierarchicalClusteringObjectiveFunctionsAndAlgorithms,ImprovedApproximationAlgorithmsForMinimumWeightVertexSeparators}.
The quantity $\spr{S}\cdot \spr{\bar{S}}$ can be interpreted in two separate ways. First, imagine each pair of vertices to be associated with a unit demand. Then, $\spr{S}\cdot \spr{\bar{S}}$ can be seen as the total demand crossing the cut. This interpretation is useful for the sake of developing approximation algorithms for the \textsc{Sparsest Cut} itself, since a weak duality between \textsc{Sparsest Cut} and multicommodity flow type problems has been established \cite{MulticommodityMaxFlowMinCutTheoremsAndTheirUseInDesigningApproximationAlgorithms,AnOlogkApproximateMinCutMaxFlowTheoremAndApproximationAlgorithm}. Secondly, $\spr{S}\cdot \spr{\bar{S}}$ can be observed to be large when both partitions are of roughly equal size. This comes useful when developing divide-and-conquer algorithms for other graph problems \cite{MulticommodityMaxFlowMinCutTheoremsAndTheirUseInDesigningApproximationAlgorithms,ExpanderFlowsGeometricEmbeddingsAndGraphPartitioning,ImprovedApproximationAlgorithmsForMinimumWeightVertexSeparators}, since the product in the denominator becomes a measure of the balance of a cut.

However, if one were to generalize \textsc{Sparsest Cut} to general demand setting, these two interpretations are no longer equivalent. Let $w\colon V\times V\to \mathbb{N}$ be the demand function.
The \textsc{Generalized Sparsest Cut} is to find a subset of vertices $S\subseteq V$ that minimizes
\[\phi_w\br{S} = \frac{c\br{S, \bar{S}}}{w\br{S, \bar{S}}}.
\]
This problem is also well-understood, and the state-of-the-art approximation algorithm is due to Arora, Lee, and Naor \cite{EuclideanDistortionAndTheSparsestCut}, which achieves the ratio of $\bigo\br{\sqrt{\log n}\cdot \log \log n}$. However, as discussed above, in order to use some notion related to sparsity in a wider range of applications, it would be desirable to approximate some quantity measuring the balance of a cut. 
To the best of our knowledge, no such results have been obtained in the literature yet. Moreover, the above quantity may assign very large value to a cut $(S, \bar{S})$ even if the cut is very balanced. Consider for example the case when both partitions have internal demand $w\br{S}=w\br{\bar{S}}=w\br{V}/2$ but the demand between them is $w\br{S, \bar{S}} = 0$. Then, (by our convention) $\phi_w\br{S}=\infty$, even though the cut is perfectly balanced. This motivates us to focus on a different graph parameter.
The graph \emph{conductance} is a well-known measure of the balance of a cut, defined as
\[
\psi\br{S} = \frac{c\br{S, \bar{S}}}{\operatorname{vol}(S)\cdot \operatorname{vol}(\bar{S})}.
\]
where $\operatorname{vol}(S)=\spr{\brc{e\in E\colon e\cap S\neq \emptyset}}$\footnote{Note, that there are several similar definitions of the conductance, however they are all equivalent up to constant factors.}. In fact, for the case of unit demands, minimizing the conductance is equivalent to the \textsc{Sparsest Cut}, up to a constant factor. However, for the general demands this is not a case, and this will come useful in our applications. The \emph{generalized conductance} of a cut $\br{S,\bar{S}}$ will be defined as
\[
\psi_w\br{S} = \frac{c\br{S, \bar{S}}}{w\br{S, V}\cdot w\br{\bar{S}, V}},
\]
where $w\br{X,V}=w\br{X}+w\br{X,V\setminus X}$.
The problem of $\psi_w$ minimization will be called the \textsc{Generalized Conductance}.
We initiate the study of this problem by providing an $\bigo\br{\log n}$-approximation algorithm for general graphs.
Interestingly enough, in our algorithm we use a non-trivial reduction to the former \textsc{Generalized Sparsest Cut}.
However, we additionally require in this reduction that the amount of demand cut is not too large. This is crucial for the analysis of the reduction, and it turns out that this additional constraint can be handled by a combination of known techniques.
At the core of our method are algorithmic reductions to few cut-like problem, including the one just mentioned as an example, and the notion of
cut-sparsifier tree decompositions by R\"{a}cke \cite{OptimalHierarchicalDecompositionsForCongestionMinimizationInNetworks} which allow us to simplify the problem by reducing it to solving tree instances. 


\subsection{Our results and techniques}

In this work, we show a series of algorithmic reductions between various graph partitioning tasks. The \textsc{Generalized Conductance} turns out to be the central one of them. Due to this we structure our work into two main parts, the first concerned with the latter problem and the second with its applications. Our main result is the following corollary:

\begin{corollary}
  There is a polynomial-time $\bigo(\log n)$-approximation algorithm for the \textsc{Generalized Conductance} on general graphs as well as an $\bigo(1)$-approximation algorithm for trees. Moreover, if the demand function is multiplicative, then there is a polynomial-time $\bigo(\sqrt{\log n})$-approximation algorithm.
\end{corollary}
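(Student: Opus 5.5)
Since the statement is labeled a corollary of results developed later, I will prove it by composing reductions: from \textsc{Generalized Conductance} to a demand-capped version of \textsc{Generalized Sparsest Cut}, and from that problem to tree instances via Räcke's cut-sparsifier tree decompositions. The sequence of steps is as follows.

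\emph{Reduction to capped sparsest cut.} Let $(\Sopt,\SoptBar)$ be an optimal cut, and assume w.l.o.g.\ that $w(\Sopt,V)\le w(\SoptBar,V)$. Then $w(\SoptBar,V)\ge w(V)/2$, so $\OPT$ equals $c(\Sopt,\SoptBar)/w(\Sopt,V)$ up to a factor of $\Theta(w(V))$. I will guess $D\approx w(\Sopt,V)$ up to a factor of two by scanning powers of two, which takes polynomially many guesses. For each guess, I want a cut $S$ of small cost that ``touches'' demand roughly $D$, meaning $w(S,V)\gtrsim D$, while staying on the light side, meaning $w(S,V)\lesssim w(V)/2$, so that the other side keeps demand $\Omega(w(V))$. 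The quantity $w(S,V)$ is not a cut function, because it counts internal demand of $S$. I will handle this with the \textsc{Generalized $k$-Multicut} viewpoint: separating pairs so that a chosen amount of demand is cut. The key inequality is $w(S,V)\le w(S)+w(S,\bar S)$, combined with the observation that either $w(S,\bar S)$ is already large, or $S$ contains many internal pairs and can be split further. Iterating with an upper bound on the demand cut, in the constrained \textsc{Sparsest Cut} variant from the abstract, prevents overshooting past $w(V)/2$. A ratio-preserving argument should show that an $\alpha$-approximation for the constrained sparsest cut yields an $\bigo(\alpha)$-approximation for $\psi_w$.

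\emph{Solving the constrained problem.} On a tree $T$ the cuts are edge sets, and I plan to solve the budgeted or constrained sparsest-cut problem up to a constant factor by dynamic programming over subtrees with demands rounded geometrically; this gives the $\bigo(1)$ bound for trees. For general graphs I will apply Räcke's distribution $\mu$ over trees, under which every cut of $G$ has expected cost at most $\bigo(\log n)$ times its cost in $G$, and every tree cut dominates the corresponding cut of $G$. Since the demand side of the objective depends only on the vertex partition, I will solve each tree instance, map its solution back to a vertex cut of $G$, and take the best result. Averaging via $\Etree$ gives the $\bigo(\log n)$ ratio. Because the objective is a ratio and not a sum, the averaging step needs care: I will fix $D$ first, so that the denominator is pinned within a constant factor, and then only the numerator is averaged. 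For multiplicative demands $w(u,v)=w(u)w(v)$, the quantity $w(S,V)$ becomes $w(S)\cdot w(V)$, a vertex-weight measure. The problem then becomes vertex-weighted balanced or sparsest cut, and ARV-type $\bigo(\sqrt{\log n})$ algorithms apply directly.

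\emph{Main obstacle.} I expect the hardest step to be the reduction itself: showing that the internal-demand term $w(S)$ can be controlled through cut-based quantities without losing more than a constant factor. The concern is that a cut with small $w(S,\bar S)$ but large $w(S)$ is invisible to any multicommodity-flow-type relaxation. My first approach is a charging argument: recursively cut $\Sopt$ itself with the capped sparsest-cut oracle until the demand separated is comparable to $w(\Sopt,V)$. The total cost paid should be at most $\bigo(c(\Sopt,\SoptBar))$ relative to the demand gained, and a single resulting side then serves as the output cut.
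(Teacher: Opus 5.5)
Your proposal has a genuine gap at exactly the step you flag as the main obstacle. Every tool you propose for the reduction measures progress by separating pairs of the \emph{original} demand $w$. That includes the capped sparsest cut, and also splitting $\Sopt$ further and charging the cost to $c(\Sopt,\SoptBar)$. No argument of that kind can work.

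Take two cliques $A,B$ on $m$ vertices with edge capacities $M$, a single edge of capacity $1$ between them, unit demand on every pair inside a clique, and zero demand across. For large $M$ the cut $(A,B)$ is optimal: it costs $1$ and has finite generalized conductance $1/(w(A)\,w(B))$. Yet $w(A,B)=0$, so every objective with denominator $w(S,\bar S)$, capped or not, gives it value $\infty$. A cut separates positive demand only if it splits a clique, which costs at least $(m-1)M$. Separating demand comparable to $w(\Sopt,V)=w(A)$ inside $\Sopt=A$ costs $\Omega(m^2M)$. So the bound $\bigo(c(\Sopt,\SoptBar))$ that your charging argument needs is false, and the promised ``ratio-preserving argument'' has nothing to rest on.

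The missing idea is to change the demand function so that internal demand becomes crossing demand. The paper adds a universal vertex $r$, joined to each $u\in V$ by a zero-capacity edge, with demand $w_{G'}(r,u)=\sum_{v\in V}w(u,v)$ and no other demands. Then $w(S,V)\le w_{G'}(S,\bar S)\le 2\,w(S,V)$ whenever $r\notin S$. The paper then splits on the size of $w(\Sopt,\SoptBar)$.

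If $w(\Sopt,\SoptBar)\le w(V)/3$, it takes an $\alpha$-approximate sparsest cut in $G'$ under the demand cap $K=\tfrac43 w(V)$. The cap excludes $S=V$ and forces $w(\bar S,V)\ge w(V)/6$, so this cut is an $\bigo(\alpha)$-approximation, and the capped problem is handled on R\"acke trees.

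Otherwise, it takes a $\beta$-approximate $k$-multicut with $k=w(V)/3$ and runs greedy Max-Cut on the graph of its components. This yields a bipartition separating at least $w(V)/6$ demand at cost at most $\beta\cdot c(\Sopt,\SoptBar)$.

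The vertex-weight phenomenon you noticed for multiplicative demands in fact holds for every $w$. With $\pi(u)=\sum_{v\in V}w(u,v)$ and $\pi(S)=\sum_{u\in S}\pi(u)$, one has $w(S,V)\le\pi(S)\le 2\,w(S,V)$. So $\psi_w(S)$ is within a factor $4$ of $c(S,\bar S)/(\pi(S)\,\pi(\bar S))$, which is a sparsest cut with product demands. That single fact gives the tree bound, since on a tree the sparsest cut is attained by a single edge. It also gives the general-graph bound via the same vertex-weighted reduction you use in the multiplicative case, with no case split.
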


To obtain the above result, we combine a variety of techniques in order to simplify the problem to some easy enough primitive task. Intuitively, our algorithm is sensitive to two types of instances which are tackled differently. Although they are never distinguished explicitly, by choosing the best among two obtained solutions we get a non-trivial approximation ratio of $\bigo\br{\log n}$. The two cases are dictated by the size of $w\br{S^*, \bar{S^*}}$ in an optimal solution. If this quantity is large, then the problem (up to a constant factor approximation loss) essentially becomes the well known \textsc{$k$-Multicut} problem, where the task is to find a minimum cost partition of the graph while ensuring that at least $k$ units of demand are cut in total.
This problem admits an $\beta=\bigo\br{\log n}$-approximation algorithm by combining \cite{AUnifiedApproachToApproximatingPartialCoveringProblems} with \cite{OptimalHierarchicalDecompositionsForCongestionMinimizationInNetworks}.
Such a solution is not fully suitable for our purposes, because the number of connected components obtained via \textsc{$k$-Multicut} might be arbitrarily large. To remedy this we employ a \textsc{Max-Cut} algorithm which ensures that a significant portion of the demand is still cut while partitioning the components into two groups as required.

If the aforementioned quantity is small, then we show that we can reduce the problem to the  \textsc{Generalized Sparsest Cut} with additional constraint that the demand to be cut cannot be too large. This is done by constructing an auxiliary graph with new demand function such that cutting demand in this new instance roughly corresponds to minimizing the conductance in the original instance. This reduction preserves the approximation ratio, however its analysis is sensitive and crucially relies on the fact that $w\br{S^*, \bar{S}^*}$ is assumed to be of small value.

In order to tackle this constrained version of the \textsc{Sparsest Cut} we use a general technique of R\"acke \cite{OptimalHierarchicalDecompositionsForCongestionMinimizationInNetworks}. The idea is to embed a graph into a polynomial-sized convex combination of trees, which in expectance preserves the value of any cut up to $\bigo\br{\log n}$ distortion. This allows us to obtain the $\bigo\br{\log n}$-approximation by solving the problem to optimality on all trees and outputting the best of the solutions. This indeed becomes an easy task, since the (unconstrained) \textsc{Sparsest Cut} can be found in a tree by only considering single-edge cuts. As it turns out, the same is true for the constrained version of the problem.

In the second part of the paper we move ourselves towards applications of the \textsc{Generalized Conductance}. We argue that, given constants $0< \rho< \eta <1$ and a $\gamma$-approximation algorithm for the \textsc{Generalized Conductance}, its iterative application partitions the graph into components with demand at most $\eta\cdot w\br{V}$ while paying only\footnote{The subscripts indicate which constants are hidden in the $\bigo$-notation.} $\bigo_{\rho, \eta}\br{\gamma}$ times more than it is required to partition the graph into components with demand at most $\rho\cdot w\br{V}$. This gives the following corollary:

\begin{corollary}
  There is a polynomial-time $\bigo_{\rho, \eta}(\log n)$-approximation algorithm for the \textsc{Graph Partitioning with Demands} on general graphs as well as an $\bigo_{\rho, \eta}(1)$-approximation algorithm for trees. Moreover, if the demand function is multiplicative, then there is a polynomial-time $\bigo_{\rho, \eta}(\sqrt{\log n})$-approximation algorithm.
\end{corollary}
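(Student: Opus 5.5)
The plan is to derive this corollary from the first corollary, via a reduction to be established in the second part of the paper. The reduction says: given a $\gamma$-approximation for \textsc{Generalized Conductance} and constants $0<\rho<\eta<1$, iterating it yields components of demand at most $\eta\cdot w(V)$ at cost $\bigo_{\rho,\eta}(\gamma)\cdot \OPT_\rho$. Here $\OPT_\rho$ is the cheapest edge set leaving all components with demand at most $\rho\cdot w(V)$. Substituting $\gamma=\bigo(\log n)$, $\bigo(1)$ on trees, and $\bigo(\sqrt{\log n})$ for multiplicative demands gives the three claims. The work is therefore in proving the reduction.

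The algorithm maintains the current components. While some component $H$ has $w(H)>\eta w(V)$, run the $\gamma$-approximation on $G[H]$ with demand $w$ restricted to $H\times H$. Then remove the returned cut edges.

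There is at most one heavy component, since $\eta$ may be assumed to exceed $1/2$. If it does not, every component with demand above $\eta w(V)$ is still handled one at a time, and the argument below is per component.

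The key step is a charging bound for one heavy component $H$. Let $C^*$ be an optimal $\rho$-partition. Restricted to $H$, it splits $H$ into pieces each of internal demand at most $\rho w(V)<(\rho/\eta)w(H)$.

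Group these pieces into two sides $S,\bar S$ of $H$ so that both $w(S,H)$ and $w(\bar S,H)$ are at least a constant $\epsilon_{\rho,\eta}\cdot w(H)$. To do this, order the pieces and add them greedily to $S$ until $w(S,H)$ first passes $w(H)/2$-ish. Each piece contributes a bounded amount of internal demand, and cross demand only helps both sides. This step needs care: $w(S,H)$ counts the demand inside $S$ plus the demand from $S$ to $\bar S$. If the greedy overshoots with a single piece, that piece has small internal demand. However, its cross demand could be large, which would make the other side large too. I expect to handle this with a case analysis, and this balancing lemma is the main obstacle.

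Given the balanced split, $\psi_w(S)\le c(C^*\cap E(H))/(\epsilon^2 w(H)^2)$. So the algorithm finds a cut $T$ with
\[
c(T)\le \gamma\, c(C^*\cap E(H))\,\frac{w(T,H)\,w(\bar T,H)}{\epsilon^2 w(H)^2}.
\]

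Next comes the potential argument. Define the progress of a step as the demand that leaves the heavy component, i.e. $w(H)$ minus the demand of the larger remaining side. This is at least $\min(w(T,H),w(\bar T,H))$ up to constants, and it upper bounds the fraction $w(T,H)w(\bar T,H)/w(H)^2$ times $w(H)$.

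Hence each step pays at most $\bigo_{\rho,\eta}(\gamma)\,\OPT_\rho\cdot(\text{progress}/w(H))$. The heavy component always has $w(H)>\eta w(V)$, so the progress fractions, measured relative to $w(V)$, sum to at most $1/\eta$. Summing over steps gives a total cost of $\bigo_{\rho,\eta}(\gamma)\cdot\OPT_\rho$, and the process terminates with all components below $\eta w(V)$.

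Polynomial running time follows because each iteration strictly reduces the heavy component's vertex set.
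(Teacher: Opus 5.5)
Your route differs from the paper's, and for $\eta>1/2$ it is sound. The paper balances the pieces of $G\setminus C^*$ once, relative to $w(V)$ (Lemma~\ref{lem:ABpartition}), and then repeatedly peels off the light side. Up to $(1-\eta)w(V)$ demand may already be gone, so the balance constant degrades to $\eta+\delta_\rho-1$. This forces $\eta>\max\{1/2,1-\delta_\rho\}$ in Theorem~\ref{thm:gpd-bicriteria}, and general $\eta\in(\rho,1)$ is reached only through the separate recursion of Theorem~\ref{thm:gpd-extension} (threshold $\eta_0=\max\{4/5,k'/w(V_H)\}$, induction on depth $\lceil\log_{5/4}(1/\eta)\rceil$).

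You instead rebalance inside the current heavy component, using $w(H)>\eta w(V)$ to see that every optimal piece has internal demand below $(\rho/\eta)w(H)$. The balance constant $\epsilon=\min\{1/4,1-\rho/\eta\}$ then never degrades, and your progress potential gives $c(C)\le\gamma\,\OPT_\rho/(\epsilon^2\eta)$ directly. This covers, e.g., $\eta\in(1/2,3/4]$ with small $\rho$, where the paper already needs its recursion.

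The balancing lemma you flag is exactly Lemma~\ref{lem:ABpartition} with $w(V)$ replaced by $w(H)$ and $\rho$ by $\rho/\eta$. However, your rule ``grow until $w(S,H)$ passes about $w(H)/2$'' is not the right one: the last piece can have internal demand close to $(\rho/\eta)w(H)$ and push almost all demand inside $S$. Use the paper's case split instead. If some piece $P$ has $w(P)\ge\epsilon w(H)$, take $P$ alone; the rest keeps $w(H)-w(P)\ge(1-\rho/\eta)w(H)$. Otherwise grow $A$ by whole pieces until $w(A)\ge\epsilon w(H)$. If the last piece $P$ has $w(A\setminus P,P)\ge\epsilon w(H)$, take $A\setminus P$ instead, since cross demand counts for both sides. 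Else $w(A)<3\epsilon w(H)$, and the complement keeps at least $(1-3\epsilon)w(H)\ge\epsilon w(H)$.

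The genuine gap is the case $\eta\le 1/2$. You cannot assume $\eta>1/2$, since that weakens the guarantee. Your potential argument also fails once several heavy components coexist: a split can leave both sides heavy, so total progress is no longer bounded by $w(V)$. For instance, repeated halving with no cross demand gives total progress $\Theta(w(V)\log(1/\eta))$. Saying ``per component'' does not repair this.

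What does repair it is charging to the edges of $C^*$, using the factor $c(C^*\cap E(H))$ you already have. Write the cost of the step on $H$ as $\frac{\gamma}{\epsilon^2}f_H\,c(C^*\cap E(H))$ with $f_H=w(T,H)\,w(\bar T,H)/w(H)^2\le 1$. For a fixed $e\in C^*$, the heavy components containing $e$ that get cut form a chain $H_1\supsetneq\dots\supsetneq H_m$. If $H_{j+1}$ lies in side $X$ of the cut of $H_j$ and $Y$ is the other side, then
\[
w(H_j)-w(H_{j+1})\ \ge\ w(H_j)-w(X)\ =\ w(Y,H_j)\ \ge\ f_{H_j}\,w(H_j).
\]
Hence $\sum_{j<m}f_{H_j}\le\sum_{j<m}\ln\br{w(H_j)/w(H_{j+1})}\le\ln(1/\eta)$, because $w(H_m)>\eta w(V)$, and $f_{H_m}\le 1$. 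Summing over $e\in C^*$ gives $c(C)\le\frac{\gamma}{\epsilon^2}\br{1+\ln(1/\eta)}\OPT_\rho$ for all $0<\rho<\eta<1$. This matches the paper's $d\cdot\kappa(\rho,\eta)\cdot\gamma$ up to constants and makes its recursive extension unnecessary.
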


Then, we show that the above bicriteria approximation algorithm can be used to obtain an $\bigo\br{\log \gamma}$-approximation for \textsc{Hierarchical Clustering with Demands}. This is done in a recursive top-down manner.
A single call to the algorithm partitions the graph into components with `internal' demand at most $\frac{4}{5}\cdot w\br{V}$.
Then, the rest of hierarchy is built by applying the algorithm recursively in those of the resulting components that need to be further partitioned.
Using a tight analysis generalizing the framework of \cite{ApproximateHierarchicalClusteringViaSparsestCutAndSpreadingMetrics} we show that the approximation ratio is indeed preserved.

\begin{corollary}
  There is a polynomial-time $\bigo(\log n)$-approximation algorithm for the \textsc{Hierarchical Clustering with Demands} on general graphs as well as an $\bigo(1)$-approximation algorithm for trees. Moreover, if the demand function is multiplicative, then there is a polynomial-time $\bigo(\sqrt{\log n})$-approximation algorithm.
\end{corollary}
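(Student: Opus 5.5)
The plan is to build the hierarchy top-down, using the bicriteria algorithm of the previous corollary as the only primitive. The analysis follows the charging framework of \cite{ApproximateHierarchicalClusteringViaSparsestCutAndSpreadingMetrics}, with cardinalities replaced by demands.

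First I fix the objective in Dasgupta form. A hierarchy $T$ pays, for each edge $e=(u,v)$, the amount $c(e)$ times the demand $w(\text{leaves}(T[u\vee v]))$ of the smallest cluster that still contains both endpoints. Summing over levels, this equals, up to constant factors,
\[
\sum_{t\ge 0} c(C_t),
\]
where $C_t$ is the set of edges cut by the clusters whose demand exceeds some threshold $\tau_t$ on a geometric scale, say $\tau_t=(4/5)^t w(V)$.

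The lower bound comes from the optimal hierarchy $T^*$ and holds for every threshold $\tau$. Consider the maximal clusters of $T^*$ whose demand is at most $\tau$. Cutting all edges that separate them yields a partition into components of demand at most $\tau$. Each such cut edge lies in a cluster of demand greater than $\tau$, so it is charged at least $\tau\cdot c(e)$ in $\OPT$. Hence the cheapest partition of $G$ into components of demand at most $\tau$ costs at most $\OPT$ restricted to edges whose least common ancestor has demand above $\tau$, divided by $\tau$. Applying this inside an arbitrary subgraph $H$ gives the same bound with $\OPT$ restricted to $H$. This is the role the spreading-metric constraints play in \cite{ApproximateHierarchicalClusteringViaSparsestCutAndSpreadingMetrics}.

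The algorithm on a current cluster $H$ runs the bicriteria algorithm with $\rho=1/2$ (relative to $w(H)$) and $\eta=4/5$. Every resulting component then has demand at most $\frac45 w(H)$, and the cost is $\bigo(\gamma)$ times the optimal $\rho$-partition cost of $H$. Any binary hierarchy over these components may be inserted above them; that only adds edges whose cost is already accounted for at this level, each charged $w(H)$. The procedure then recurses into each component.

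In the cost analysis, an edge cut at cluster $H$ is charged $w(H)$. Using the lower bound with $\tau=w(H)/2$, the algorithm's charge at $H$ is at most $\bigo(\gamma)\cdot \frac{w(H)}{w(H)/2}$ times the portion of $\OPT$ restricted to $H$ coming from edges whose least common ancestor in $T^*$ has demand at least $w(H)/2$.

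Summing over all clusters is the main obstacle. A given edge of $\OPT$ with least-common-ancestor demand $D$ is counted once for every algorithm cluster $H$ that contains it and satisfies $w(H)\le 2D$. Along any root-to-leaf chain, demands shrink by a factor $4/5$ per level, but only once the demand is measured consistently. Here $w(H)$ is the internal demand $w(H)$, and internal demand of nested sets is monotone. So only $\bigo(1)$ chain clusters have $w(H)\in[D/2,2D]$, and the clusters with $w(H)<D/2$ must be controlled differently. I would handle them as follows. The lower bound for $H$ counts only $\OPT$ contributions of $T^*$-clusters intersected with $H$, scaled by $w(H)$ rather than $D$. So I would weight the charge by $w(H)/D$ and get a geometric sum over the chain, giving $\sum_H (w(H)/D)=\bigo(D)$ per edge. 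This yields total cost $\bigo(\gamma)\cdot\OPT$. Making this precise, by charging with the restricted demand $w(H\cap A)$ for $T^*$-clusters $A$, is where I expect the delicate work.

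Finally, substituting $\gamma=\bigo(\log n)$, $\bigo(1)$ for trees, and $\bigo(\sqrt{\log n})$ for multiplicative demands gives the stated bounds.
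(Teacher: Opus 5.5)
Your argument is correct in substance, but it takes a different route from the paper.

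\textbf{The paper's route.} It works level by level. It uses $\tilde{c}(T^*)=\sum_{t}c(E_t^*)$ and bounds
$r_H\,c(C_H)\le 5\gamma\,(r_H-s_H)\,c(E^*_{r_H/2}\cap E_H)\le 5\gamma\sum_{t=s_H+1}^{r_H}c(E^*_{\lfloor t/2\rfloor}\cap E_H)$.
It then notes that, for each fixed $t$, the algorithm clusters with $s_H<t\le r_H$ are pairwise disjoint, which yields $10\gamma\,\tilde{c}(T^*)$.

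\textbf{Your route.} You double count per edge and sum a geometric series along the algorithm's chain of clusters. This uses the same fact $s_H\le \frac45 r_H$, but avoids both the level identity and the disjointness observation.

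\textbf{The step you flag as delicate is immediate.} You only need to drop the lossy relaxation (dividing by $\tau$ and charging $D_e$). Write $D_e$ for the demand of the least common ancestor of $e$ in $T^*$.
\begin{itemize}
\item The set $E^*_{w(H)/2}\cap E_H$ is feasible for $\rho=1/2$ on $H$. This holds simply because $w(V_H\cap A)\le w(A)\le w(H)/2$ for every maximal $T^*$-cluster $A$ of demand at most $w(H)/2$.
\item Hence $w(H)\,c(C_H)\le \gamma\sum_{e\in E_H,\ D_e>w(H)/2} c(e)\,w(H)$.
\item Swapping the sums, edge $e$ pays $c(e)\,w(H)$ at each algorithm cluster $H$ containing it with $w(H)<2D_e$.
\item These clusters form the tail of a nested chain whose demands drop by a factor $4/5$ per step. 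So $e$ pays less than $10\,D_e\,c(e)$ in total.
\end{itemize}
The total cost is therefore at most $10\gamma\,\tilde{c}(T^*)$, the same constant as the paper.

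\textbf{Two slips to fix.}
\begin{itemize}
\item Your opening identity is false as written. The sum $\sum_t c(C_t)$ over geometric thresholds $\tau_t=(4/5)^t w(V)$ counts an edge once per threshold below $D_e$, which is not proportional to $D_e$. You would need $\sum_t \tau_t\,c(C_t)$; since you never use the identity, just drop it.
\item The bound $\sum_H w(H)/D=\mathcal{O}(D)$ should read $\sum_H w(H)=\mathcal{O}(D)$.
\end{itemize}
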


Figure \ref{fig:reductions} outlines the algorithmic reductions that appear in this work.

\begin{figure}[H]
  \centering
  \resizebox{\textwidth}{!}{\begin{tikzpicture}[>=stealth, line width=0.9pt]
  \tikzstyle{box}=[draw, rounded corners, align=center, inner sep=4pt,
    minimum width=3cm, minimum height=1.10cm,
    fill=white,
    preaction={fill=black,opacity=0.24,transform canvas={xshift=1.4pt,yshift=-1.4pt}}]

  \node[box] (cst) at (0, 2.6) {Constrained sparsest cut\\on trees};
  \node[box] (kmt) at (0, 0.0) {$k$-multicut on trees};

  \node[box] (csg) at (5.1, 2.6) {Constrained sparsest cut\\on graphs};
  \node[box] (kmg) at (5.1, 0.0) {$k$-multicut on graphs};
  \node[box] (mc)  at (5.1, -2.6) {Max-cut};

  \node[box, minimum width=3cm] (gc) at (10.2, 0.0) {Generalized conductance};

  \node[box] (bpd) at (15.4, 1.4) {Balanced partitioning\\with demands};
  \node[box] (qsc) at (15.4, -1.4) {Quadratic sparsest cut};
  \node[box] (hcd) at (20.6, 1.4) {Hierarchical clustering\\with demands};

  \draw[->] (cst) -- (csg);
  \draw[->] (kmt) -- (kmg);

  \draw[->] (csg) -- (gc);
  \draw[->] (kmg) -- (gc);
  \draw[->] (mc)  -- (gc);

  \draw[->] (gc) -- (bpd);
  \draw[->] (gc) -- (qsc);
  \draw[->] (bpd) -- (hcd);
\end{tikzpicture}}
  \caption{Reduction map of problems and applications considered in this paper.}
  \label{fig:reductions}
\end{figure}
\subsection{Related work}

The modern approximation theory of \textsc{Sparsest Cut} starts from approximate max-multicommodity flow/min-sparsest cut duality. By exploiting this duality, Leighton and Rao provide an $\bigo\br{\log n}$-approximation \cite{MulticommodityMaxFlowMinCutTheoremsAndTheirUseInDesigningApproximationAlgorithms} for the \textsc{Sparsest Cut}. Aumann and Rabani refine this to $\bigo\br{\log k}$ when only $k$ demand pairs are relevant \cite{AnOlogkApproximateMinCutMaxFlowTheoremAndApproximationAlgorithm}. Those results were then improved by Arora-Rao-Vazirani by exhibiting an $\bigo\br{\sqrt{\log n}}$ approximation for \textsc{Uniform Sparsest Cut} \cite{ExpanderFlowsGeometricEmbeddingsAndGraphPartitioning}. Obtaining these results required a new approach based on semidefinite programming and high-dimension geometry, which was then further refined by Arora, Lee, and Naor to obtain an $\bigo\br{\sqrt{\log n}\cdot \log\log n}$-approximation for general demands \cite{EuclideanDistortionAndTheSparsestCut}. On hardness, Chawla, Krauthgamer, Kumar, Rabani, and Sivakumar show (under UGC) that \textsc{Multicut} and \textsc{Sparsest Cut} are hard to approximate within any constant factor; moreover, a quantitatively stronger UGC implies an inapproximability factor of $\Omega\br{\sqrt{\log\log n}}$ \cite{OnTheHardnessOfApproximatingMulticutAndSparsestCut}. Khot-Vishnoi further show UGC-based super-constant hardness for \textsc{Non-Uniform Sparsest Cut} and prove a $\br{\log\log n}^{1/6-\delta}$ lower bound/integrality gap for the associated SDP/embedding relaxation, for every fixed $\delta>0$ \cite{TheUniqueGamesConjectureIntegralityGapForCutProblemsAndEmbeddabilityOfNegativeTypeMetricsIntoL1}. Raghavendra-Steurer connect this picture to Small-Set Expansion by relating expansion hardness and UGC \cite{GraphExpansionAndTheUniqueGamesConjecture}.

\paragraph{Graph partitioning, Separators and Hierarchical clustering.}
\textsc{Balanced Graph Partitioning} is a problem of partitioning a graph into $k$ clusters, each of size roughly $n/k$ as to minimize the cost of the cut edges. Krauthgamer, Naor, and Schwartz study \textsc{$k$-Balanced Partitioning} and give a bi-criteria approximation ratio of $\bigo\br{\sqrt{\log n\cdot \log k}}$; they also show an $\Omega\br{\log k}$ integrality gap for their SDP relaxation \cite{PartitioningGraphsIntoBalancedComponents}. A related separator `machinery' is developed by Feige, Hajiaghayi, and Lee, who obtain an $\bigo\br{\sqrt{\log n}}$ approximation for \textsc{Min-Ratio Vertex Cut}, a vertex analogue of the \textsc{Sparsest Cut} which gives an $\bigo\br{\sqrt{\log n}}$ pseudo-approximation for \textsc{Vertex Separators} in general graphs \cite{ImprovedApproximationAlgorithmsForMinimumWeightVertexSeparators}. The latter problem is to remove a `cheap' subset of vertices so that the remaining graph consists of two pieces of small size.
A hierarchical analogue of \textsc{Graph Partitioning} is the problem of \textsc{Hierarchical Clustering}, where the goal is to recursively partition the graph into more refined clusters until all of them become singletons.
Dasgupta introduced the now-standard similarity-based objective and analyzed a top-down algorithm based on repeated \textsc{Sparsest Cut}s, obtaining an $\bigo\br{\alpha\cdot\log n}$ guarantee when \textsc{Uniform Sparsest Cut} admits an $\alpha$-approximation \cite{ACostFunctionForSimilarityBasedHierarchicalClustering}. Charikar and Chatziafratis tightened this analysis to $\bigo\br{\alpha}$ (thus $\bigo\br{\sqrt{\log n}}$ using ARV), and additionally gave an SDP relaxation with integrality gap at most $\bigo\br{\sqrt{\log n}}$ \cite{ApproximateHierarchicalClusteringViaSparsestCutAndSpreadingMetrics}. Authors in \cite{HierarchicalClusteringObjectiveFunctionsAndAlgorithms} obtain similar results but using a different argument. Recently, an $\bigo\br{\sqrt{\log n}}$-approximation was also obtained for the vertex variant of this problem where at each level of the hierarchy a subset of nodes is removed from the graph \cite{ApproximatingTheAverageCaseGraphSearchProblemWithNonUniformCosts}.

\paragraph{Metric spaces and embeddings.}
Metric embeddings are a crucial concept required for approximating graph partitioning tasks. The Bourgain theorem implies that every finite metric embeds into Hilbert space with logarithmic distortion; this has direct algorithmic consequences for cut and partition objectives through embedding-based relaxations and rounding \cite{OnLipschitzEmbeddingOfFiniteMetricSpacesInHilbertSpace}. Embedding such metrics and subsequent refinements for negative-type metrics are key ingredients in the progression from $\bigo\br{\log n}$-type to sub logarithmic approximations for the \textsc{Sparsest cut} \cite{TheGeometryOfGraphsAndSomeOfItsAlgorithmicApplications,EuclideanDistortionAndTheSparsestCut}. In parallel, tree-metric reductions and cut-preserving decompositions (in particular R\"acke's hierarchical decompositions) provide an alternative way of obtaining provable guarantees: one transfers the task of partitioning a general graph into partitioning trees, which usually can be done in exactly or with a constant-factor approximation ratio. Such a transformation incurs only logarithmic distortion in expectation \cite{OptimalHierarchicalDecompositionsForCongestionMinimizationInNetworks}.

\paragraph{Spectral graph theory and sparsest cut.}
The spectral graph theory relates the sparsity of the best cut to the eigenvalues of the graph Laplacian through Cheeger-type inequalities. In its original, geometric form, Cheeger's inequality concerns a compact Riemannian manifold $M$ and lower bounds the smallest positive eigenvalue $\lambda_1$ of the Laplace-Beltrami operator by the square of the isoperimetric (Cheeger) constant $h(M)=\inf_{S}\frac{\operatorname{area}(S)}{\min\br{\operatorname{vol}(A),\operatorname{vol}(B)}}$, where the hypersurface $S$ ranges over separators splitting $M$ into two parts $A,B$; concretely, $\lambda_1\geq h(M)^2/4$ \cite{ALowerBoundForTheSmallestEigenvalueOfTheLaplacian}. Alon and Milman transported this phenomenon to graphs, proving a two-sided discrete analogue that ties the second-smallest eigenvalue $\lambda_2$ of the (normalized) Laplacian to the graph conductance $\phi(G)$, namely $\tfrac{1}{2}\lambda_2\leq \phi(G)\leq \sqrt{2\lambda_2}$ \cite{Lambda1IsoperimetricInequalitiesForGraphsAndSuperconcentrators}.\footnote{Here $\phi(G)=\min_{S}\frac{c\br{S,\bar{S}}}{\min\br{\operatorname{vol}(S),\operatorname{vol}(\bar{S})}}$, which differs slightly from the notion of conductance used in this work; the two, however, agree up to a factor of $2$ and hence are equivalent for the purposes of the above discussion.} Thus a graph has no sparse (low-conductance) cut exactly when its spectral gap is large. Moreover, the inequality is algorithmic, since sweeping over the coordinates of the eigenvector of $\lambda_2$ (the Fiedler vector) extracts a cut of conductance $\bigo\br{\sqrt{\phi(G)}}$, a quadratic approximation of the optimum. Building on this connection, Spielman and Teng \cite{SpectralPartitioningWorksPlanarGraphsAndFiniteElementMeshes} showed that spectral partitioning is provably good: they establish $\lambda_2=\bigo\br{1/n}$ for bounded-degree planar graphs and $\lambda_2=\bigo\br{1/n^{2/d}}$ for well-shaped $d$-dimensional finite-element meshes, and turn these eigenvalue bounds into separators of size $\bigo\br{\sqrt{n}}$ (respectively $\bigo\br{n^{1-1/d}}$) obtained from the Fiedler vector.

\paragraph{Sparsity, expanders, and complexity applications.}
\textsc{Sparsest Cut} is the algorithmic counterpart of edge expansion: a graph is an expander precisely when it has no sparse cut, i.e. every set $S$ with $\spr{S}\leq \spr{V}/2$ satisfies $c\br{S,\bar{S}}\geq \Omega\br{\spr{S}}$. This duality is exactly what expander-flow techniques exploit: Arora, Rao, and Vazirani certify expansion by routing an ``expander flow'' inside the graph, which underlies their $\bigo\br{\sqrt{\log n}}$ approximation \cite{ExpanderFlowsGeometricEmbeddingsAndGraphPartitioning}. Beyond partitioning, expansion is a resource in its own right. Sipser and Spielman use constant-degree expanders to construct \emph{expander codes}: asymptotically good linear codes, of constant rate and constant relative distance, for which a simple bit-flipping decoder corrects a constant fraction of errors in linear time, with the correctable fraction controlled directly by the expansion of the underlying graph \cite{SipserSpielmanExpanderCodes}. Expansion is likewise central to the PCP theorem, which asserts that every NP statement admits a proof verifiable by reading only $\bigo\br{1}$ bits of the proof and using only $O\br{\log n}$ random bits \cite{ProofVerificationAndTheHardnessOfApproximationProblems}. A surprising proof of PCP by Dinur's reaches it through repeated \emph{gap amplification}, where each round first turns the constraint graph into an expander and then boosts the unsatisfiability gap by a constant factor via graph powering, the analysis of which rests on the spectral gap of that expander \cite{PCPTheoremByGapAmplification}. 

\section{Preliminaries}

Most instances of the problems considered in this work consist of a graph $G=\br{V,E,c,w}$, where $V$ is the vertex set, $E$ is the edge set, $c\colon E \to \mathbb{N}$ is the \emph{capacity} function which encodes the cost of cutting an edge, and $w\colon V\times V\to \mathbb{N}$ is the \emph{demand} function which, informally speaking, tells how important separating a given vertex pair is.
For adjacent vertices $u,v\in V$, we denote by $uv$ the edge between $u$ and $v$.
For any $C\subseteq E$, $G\setminus C$ is the graph obtained by removing the edges in $C$ from $G$.
We write $\br{V_H,E_H,c_H,w_H}$ to point out that given parameters represent a particular graph or subgraph $H$, different than $G$.

For any function $f\colon V\times V\to \mathbb{N}$ and disjoint subsets $A, B\subseteq V$, we define:
\[
f\br{A}=\sum_{u,v\in A}f\br{u, v}, \quad f\br{A, B}=\sum_{u\in A}\sum_{v\in B}f\br{u, v}
\]
with a convention that if the domain of $f$ is the edge set and $uv\notin E$, then $f\br{u,v}=0$.
Whenever $A, B\subseteq V$ are not disjoint, we define $f\br{A, B}=f\br{A}+f\br{A, B\setminus A}$ (see Figure \ref{fig:demand-decomposition} for an illustration).
For a subset $C\subseteq E$ and a function $f$ on the edge set, $f\br{C}=\sum_{e\in C}f\br{e}$.
For a subset $S\subseteq V$, we denote by $\bar{S}$ the complement of $S$ in $V$, i.~e. $\bar{S}=V\setminus S$.
A \emph{cut} is any subset $S\subseteq V$, such that $S\neq \emptyset$, $S\neq V$ and we denote it either by just $S$ or we write $\br{S, \bar{S}}$ to explicitly indicate the two sides of the cut.
It will be sometimes more convenient to call a cut the set of edges having one endpoint in $S$ and the other in $\bar{S}$.
Which definition is used will be clear from the context.
For any cut-based objective function $F$ used in this paper, $F\br{S}$ or $F\br{S, \bar{S}}$ denote the value of $F$ on a specific cut $S$, whereas $F\br{G}$ denotes the optimum value of $F$ over all cuts of the graph $G$.
Moreover, whenever an objective function is given as a ratio and its denominator equals $0$, we set the value of this objective to $\infty$.

\begin{figure}[H]
  \centering
  \begin{tikzpicture}[>=stealth,line width=0.9pt,
      vnA/.style={circle,draw=cutA,fill=cutA!12,minimum size=4.5mm,inner sep=0pt,line width=0.8pt,
          preaction={fill=black,opacity=0.32,transform canvas={xshift=1.3pt,yshift=-1.3pt}}},
      vnB/.style={circle,draw=cutB,fill=cutB!12,minimum size=4.5mm,inner sep=0pt,line width=0.8pt,
          preaction={fill=black,opacity=0.32,transform canvas={xshift=1.3pt,yshift=-1.3pt}}}]
        \draw[draw=cutA,fill=cutA!8,line width=1.2pt,
          preaction={fill=black,opacity=0.22,transform canvas={xshift=1.5pt,yshift=-1.5pt}}]
      (-3.1,0) ellipse (2.1 and 1.4);
        \draw[draw=cutB,fill=cutB!8,line width=1.2pt,
          preaction={fill=black,opacity=0.22,transform canvas={xshift=1.5pt,yshift=-1.5pt}}]
      (3.3,0) ellipse (2.1 and 1.4);
    \node[font=\bfseries,text=cutA] at (-3.1,1.7) {$S$};
    \node[font=\bfseries,text=cutB] at (3.3,1.7) {$\bar S$};

    \node[vnA] (s1) at (-3.7, 0.7) {};
    \node[vnA] (s2) at (-2.9, 0.8) {};
    \node[vnA] (s3) at (-2.3, 0.0) {};
    \node[vnA] (s4) at (-3.0,-0.7) {};
    \draw[cutA!70] (s1)--(s2)--(s3)--(s4)--(s1) (s1)--(s3) (s2)--(s4);

    \node[vnB] (t1) at (2.3, 0.7) {};
    \node[vnB] (t2) at (3.3, 0.8) {};
    \node[vnB] (t3) at (3.9, 0.1) {};
    \node[vnB] (t4) at (3.3,-0.7) {};
    \node[vnB] (t5) at (2.4,-0.7) {};
    \draw[cutB!70] (t1)--(t2)--(t3)--(t4)--(t5)--(t1) (t1)--(t3) (t2)--(t4) (t3)--(t5);

    \draw[cutC,dashed,line width=1.1pt] (s2) -- (t1);
    \draw[cutC,dashed,line width=1.1pt] (s2) to[out=-12,in=160] (t4);
    \draw[cutC,dashed,line width=1.1pt] (s3) to[out=-42,in=222] (t2);
    \draw[cutC,dashed,line width=1.1pt] (s4) to[out=-18,in=200] (t5);
  \end{tikzpicture}
  \caption{$f(S,V)=\textcolor{cutA}{f(S)}+\textcolor{cutC}{f(S,\bar S)}$: the total mass incident to $S$ decomposes into the mass inside $S$ and the mass crossing the cut.}
  \label{fig:demand-decomposition}
\end{figure}

\section{The main algorithm}

In order to tackle the \textsc{Generalized Conductance}, we simplify the optimization criterion by considering the following problem: for a given input graph $G=(V,E,c,w)$, find a cut $S$ that minimizes the following:
\begin{equation} \label{eq:h}
\hw\br{S}=\frac{c\br{S, \bar{S}}}{w\br{S, V}}, \quad \text{subject to}
\quad w\br{\bar{S}, V}\geq \delta\cdot w\br{V}
\end{equation}
for some constant $0<\delta\leq 1$ to be determined later.
We refer to the above as the \textsc{Simplified Conductance} problem.
We have the following observation.
\begin{lemma} \label{lem:h-equivalent}
For any graph $G=(V,E,c,w)$, $\OPTc(G)=\Theta\br{\OPTh(G)}/w(V)$.
\end{lemma}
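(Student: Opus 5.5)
The plan is to compare the two objectives cut by cut. The relation between them is the identity $\psi_w\br{S}=\hw\br{S}/w\br{\bar{S},V}$, which holds whenever $\bar S$ satisfies the constraint in \eqref{eq:h}. I will therefore sandwich $w\br{\bar{S},V}$ between $\delta\cdot w\br{V}$ and $w\br{V}$, after first orienting the cut so that $\bar S$ is the heavy side. Throughout I fix $\delta\le 1/2$, so that all constants hidden in $\Theta$ depend only on $\delta$. I treat $w\br{V}=0$ separately: then both optima are $\infty$ by the convention on zero denominators.

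First I record two elementary facts, both following from the decomposition in Figure~\ref{fig:demand-decomposition}. Every pair counted in $w\br{S,V}$ is a pair of $V$, so $w\br{S,V}\le w\br{V}$, and likewise $w\br{\bar{S},V}\le w\br{V}$. Moreover,
\[
w\br{S,V}+w\br{\bar{S},V}=w\br{S}+w\br{\bar{S}}+2w\br{S,\bar{S}}\ge w\br{S}+w\br{\bar{S}}+w\br{S,\bar{S}}=w\br{V}.
\]
Hence $\max\brc{w\br{S,V},w\br{\bar{S},V}}\ge w\br{V}/2$ for every cut.

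Upper bound on $\OPTh$: let $S$ be an optimal cut for $\psi_w$. Since $\psi_w$ is symmetric in $S$ and $\bar S$, I may swap the sides to assume $w\br{\bar{S},V}\ge w\br{V}/2\ge\delta\cdot w\br{V}$. Then $S$ is feasible for \eqref{eq:h}, and
\[
\hw\br{S}=\psi_w\br{S}\cdot w\br{\bar{S},V}\le \psi_w\br{S}\cdot w\br{V}.
\]
This gives $\OPTh\br{G}\le w\br{V}\cdot\OPTc\br{G}$.

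Upper bound on $\OPTc$: let $S$ be an optimal feasible cut for \eqref{eq:h}. Then $w\br{\bar{S},V}\ge \delta\cdot w\br{V}$, so
\[
\psi_w\br{S}=\frac{\hw\br{S}}{w\br{\bar{S},V}}\le \frac{\hw\br{S}}{\delta\cdot w\br{V}}.
\]
This gives $\OPTc\br{G}\le \OPTh\br{G}/\br{\delta\cdot w\br{V}}$.

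Combining the two bounds yields $\OPTh\br{G}/w\br{V}\in\left[\delta\cdot\OPTc\br{G},\,\OPTc\br{G}\right]$, which is the claim since $\delta$ is a constant. The only step needing care is the orientation in the first bound: it relies on the inequality $w\br{S,V}+w\br{\bar S,V}\ge w\br{V}$, which in turn uses the paper's convention $f\br{A,B}=f\br{A}+f\br{A,B\setminus A}$ for overlapping sets. I will check that this convention is applied consistently, so that internal and crossing demand are each counted exactly once in $w\br{V}$.
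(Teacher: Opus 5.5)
Your proof is correct, and since the paper states Lemma~\ref{lem:h-equivalent} as an observation without proof, your sandwich $\delta\cdot w(V)\le w(\bar S,V)\le w(V)$ (after making the heavier side $\bar S$ via $w(S,V)+w(\bar S,V)\ge w(V)$) is exactly the reasoning the paper implicitly relies on when it combines this lemma with Lemmas~\ref{lem:S1} and~\ref{lem:S2}. Your restriction $\delta\le 1/2$ is genuinely necessary, not just convenient: the paper only writes $0<\delta\le 1$ but uses $\delta=1/6$, and for $\delta>1/2$ the lemma fails, since two copies of $K_m$ joined by one edge, with unit demands only inside each clique, have $w(V)\cdot\OPTc(G)\le 4/w(V)$, whereas every feasible cut must split a clique, so $\OPTh(G)\ge (m-1)/w(V)$.
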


We now list three problems that will be used  as black-box subroutines in our main algorithm below.
First, we will use the \textsc{Generalized Sparsest $K$-Bounded Cut} that adds a demand restriction to the classic \textsc{Generalized Sparsest Cut} (cf. step~\ref{it:S1:alpha} of the algorithm below).
Formally, given a graph $G=(V,E,c,w)$, we consider the problem of finding $S$ that minimizes:
    \begin{equation} \label{eq:K}
      \phiwK\br{S} = \frac{c\br{S,\bar S}}{w\br{S,\bar S}}
      \quad\text{subject to}\quad 0<w\br{S,\bar S}\le K.
    \end{equation}

We will also utilize (cf. step~\ref{it:S2:beta}) the \textsc{Generalized $k$-Multicut}: given a graph $G=(V,E,c,w)$ with capacities and demands, find a minimum-cost set of edges $C\subseteq E$, such that the demand cut by $C$ is at least $k$.

Finally, we recall the classical \textsc{Max-Cut} problem, where for a given graph $G$, as finding a cut $\br{C_G, \bar{C}_G}$ that maximizes $w\br{C_G, \bar{C}_G}$.
To solve the latter (approximately due to its NP-hardness), we utilize a folklore greedy procedure (in step~\ref{it:S2:R-cut} below) that we recall in the following form.
\begin{lemma}[Folklore] \label{pro:max-cut}
For any graph $G=(V,E,w)$ there exists a polynomial-time greedy algorithm that computes a cut $(C_G,\bar{C}_G)$ such that $w\br{C_G,\bar{C}_G}\geq\frac{1}{2}\cdot w\br{V}$.
\end{lemma}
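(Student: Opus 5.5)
The plan is the standard local-search (or equivalently greedy) argument for \textsc{Max-Cut}, applied to the demand function $w$ viewed as edge weights on the complete graph on $V$. Self-pairs $w(v,v)$ can never be separated by any cut. So we read $w(V)$ here as the total demand over unordered pairs of distinct vertices, with any diagonal terms discarded beforehand; they are irrelevant to every cut value. We only need some cut with $w(C_G,\bar C_G)\ge \frac12 w(V)$, computed in polynomial time.

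\emph{Greedy version.} Order the vertices arbitrarily as $v_1,\dots,v_n$ and place them one at a time. When $v_i$ is processed, compute $a_i=w(v_i,C_G\cap\{v_1,\dots,v_{i-1}\})$ and $b_i=w(v_i,\bar C_G\cap\{v_1,\dots,v_{i-1}\})$. Put $v_i$ on the side opposite to the larger of the two: into $\bar C_G$ if $a_i\ge b_i$, otherwise into $C_G$. Then the demand between $v_i$ and earlier vertices that becomes cut is $\max(a_i,b_i)\ge \frac12(a_i+b_i)$.

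Every unordered pair $\{v_j,v_i\}$ with $j<i$ is counted exactly once, namely in $a_i+b_i$. Summing over $i$ therefore gives
\[
w(C_G,\bar C_G)=\sum_i \max(a_i,b_i)\ge \frac12\sum_i (a_i+b_i)=\frac12 w(V).
\]
The procedure runs in $\bigo(n^2)$ time. If the greedy rule leaves one side empty, then all demand among distinct vertices is $0$. In that case any nontrivial cut satisfies the bound trivially, since both sides are $0$.

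The only point needing care, and the expected obstacle, is the bookkeeping convention. It must be consistent with the paper's definition of $f(A)$ as a sum over $u,v\in A$: either ordered pairs, which introduces a factor of $2$ on both sides that cancels, or unordered pairs. Diagonal demands must also be excluded, since cuts cannot separate them. Once $w(V)$ is interpreted as the total separable demand, the inequality above is exact.
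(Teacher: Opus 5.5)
Your proposal is correct and follows the paper's proof. The paper also inserts $v_1,\dots,v_l$ one at a time on the side that cuts more demand to the already-placed vertices, and sums the resulting half-guarantee over the disjoint sets $X_i=\{(v_j,v_i)\colon j<i\}$, which is exactly your $\sum_i\max(a_i,b_i)\ge\frac12\sum_i(a_i+b_i)$. Reading $w(V)$ as the demand over unordered pairs of distinct vertices is the right convention and the one the paper implicitly uses, but drop the aside that an ordered-pair convention would produce a cancelling factor $2$: $w(C_G,\bar C_G)=\sum_{u\in C_G}\sum_{v\in\bar C_G}w(u,v)$ does not double, and under that reading the bound already fails for a unit-demand triangle.
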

\begin{proof}
Order the vertices of $G$ arbitrarily as $v_1,\ldots,v_l$.
Start with $C_G=\brc{v_1}$ and $\bar{C}_G=\emptyset$.
For each $i\in\brc{2,\ldots,l}$, if $w\br{\brc{v_i},\bar{C}_G}>w\br{C_G,\brc{v_i}}$, then add $v_i$ to $C_G$, otherwise add $v_i$ to $\bar{C}_G$.

The final $(C_G,\bar{C}_G)$ is a cut because $C_G\cup\bar{C}_G=V$.
For each $i\in\brc{2,\ldots,l}$ define $X_i=\brc{\br{v_j,v_i}\colon j<i}$.
For each $i\in\brc{2,\ldots,l}$, $w\br{\br{C_G,\bar{C}_G}\cap X_i}\geq \frac{1}{2}\cdot w\br{X_i}$, which follows from the algorithm's greedy rule.
Because $X_i$'s are pairwise disjoint and the union of $X_i$'s is $V\times V$, we get
\[
w\br{C_G,\bar{C}_G}=\sum_{i=1}^l w\br{\br{C_G,\bar{C}_G}\cap X_i}\geq\frac{1}{2}\cdot\sum_{i=1}^l w\br{X_i}=\frac{1}{2}\cdot w(V).
\qedhere
\]
\end{proof}

We now introduce a reduction\footnote{We use the word `reduction' since both $S_1$ and $S_2$ are obtained via an algorithmic reduction to another combinatorial problem, either \textsc{Generalized Sparsest Cut} (with restrictions) or \textsc{Generalized $k$-Multicut}.} scheme (an algorithm) used in this section.
The scheme is applied to an input graph $G=(V,E,c,w)$ and it computes independently two candidate cuts $S_1$ and $S_2$ via two different methods.
The returned solution is $S_i$ that minimizes the generalized conductance.
Formally, the scheme is the following:
\begin{enumerate}
  \item \label{it:S1} Compute a candidate set $S_1$ (see Figure~\ref{fig:aux-graph-r}):
  \begin{enumerate}[label=(\theenumi\alph*)]
    \item\label{it:S1:Gprime} Build an auxiliary graph $G'=(V',E',c_{G'},w_{G'})$ by adding to $G$ a universal vertex $r$, i.e. $V'=V\cup\brc{r}$ and $E'=E\cup\brc{ru\colon u\in V}$, setting the capacities to $c_{G'}\br{r,u}=0$ for each $u\in V$ and $c_{G'}\brc{u,v}=c\brc{u,v}$ for each $u,v\in V$, and the demands to $w_{G'}\br{r,u}=\sum_{v\in V}w\br{u,v}$ for each $u\in V$ and $w_{G'}\br{u,v}=0$ for each pair $u,v\in V$.
    \item\label{it:S1:alpha} Run an $\alpha$-approximation algorithm for the \textsc{Sparsest $K$-Bounded Cut} with $K=\frac{4}{3}\cdot w\br{V}$.
    Denote the returned cut by $S_1$.
  \end{enumerate}
  \item \label{it:S2} Compute a candidate set $S_2$ (see Figure~\ref{fig:multicut-maxcut-steps}):
  \begin{enumerate}[label=(\theenumi\alph*)]
    \item\label{it:S2:beta} Run a $\beta$-approximation algorithm for the \textsc{Generalized $k$-Multicut} with $k=w\br{V}/3$ to obtain a multicut $C$.
    \item\label{it:S2:R} Build a complete graph $R$ whose vertices are the components of $G\setminus C$, with edge weights $w_R\br{H_1,H_2}=w\br{V_{H_1},V_{H_2}}$ for any pair of components $H_1$ and $H_2$.
    \item\label{it:S2:R-cut} Compute a cut $(C_R,\bar{C}_R)$ by solving \textsc{Max-Cut} on $R$ by using Proposition~\ref{pro:max-cut}.
    \item\label{it:S2:S2} Map $(C_R,\bar{C}_R)$ back to a cut $S_2$ in $G$: $S_2=\bigcup_{H\in C_R}V_H$.
  \end{enumerate}
  \item Return the better of $S_1,S_2$, i.e., $\argmin_{X\in\brc{S_1,S_2}}\psi_w\br{X}$.
\end{enumerate}

\begin{figure}[H]
  \centering
  \begin{tikzpicture}[>=stealth,line width=1.0pt,
      vn/.style={circle,draw=cutB,fill=white,minimum size=7mm,inner sep=0pt,line width=1pt,font=\small,
             preaction={fill=black,opacity=0.28,transform canvas={xshift=1.3pt,yshift=-1.3pt}}}]
    \node[vn,draw=cutA,fill=cutA!15] (r) at (-5.0,0.0) {$r$};
        \draw[draw=cutB,fill=cutB!8,line width=1.2pt,
          preaction={fill=black,opacity=0.24,transform canvas={xshift=1.5pt,yshift=-1.5pt}}]
      (1.3,-0.1) ellipse (1.1 and 2.1);
    \node[font=\bfseries\small,text=cutB] at (1.3,2.4) {$G$};

    \node[vn] (u) at (1.2, 1.3) {};
    \node[vn] (v) at (1.8, 0.6) {};
    \node[vn] (x) at (1.0,-0.6) {};
    \node[vn] (w) at (1.6,-1.5) {};

    \draw[cutB!70,line width=1.0pt] (u)--(v)--(w)--(x)--(u) (u)--(x) (v)--(x);
    \draw[cutA,dashed,line width=1.1pt] (r) -- (u);
    \draw[cutA,dashed,line width=1.1pt] (r) -- (v);
    \draw[cutA,dashed,line width=1.1pt] (r) -- (x);
    \draw[cutA,dashed,line width=1.1pt] (r) -- (w);
  \end{tikzpicture}
  \caption{The auxiliary graph $G'$ defined by adding a universal vertex $r$. The new demand from $r$ to each $u\in V$ equals $w(u,V)$, all other demands are 0. The costs stay the same and the edges from $r$ to $V$ have cost 0.}
  \label{fig:aux-graph-r}
\end{figure}

\begin{figure}[H]
  \centering
  \begin{tikzpicture}[>=stealth,line cap=round,line join=round,line width=1.0pt,
      every node/.style={font=\footnotesize},
      vtx/.style={circle,minimum size=5mm,inner sep=0pt,
                  preaction={fill=black,opacity=0.28,transform canvas={xshift=1.2pt,yshift=-1.2pt}}}]
    \draw[fill=gray!8,draw=gray!55,line width=1.2pt,rounded corners=8pt] (-0.4,-1.5) rectangle (4.0,2.0);
    \draw[fill=gray!8,draw=gray!55,line width=1.2pt,rounded corners=8pt] (4.4,-1.5) rectangle (8.8,2.0);
    \draw[fill=gray!8,draw=gray!55,line width=1.2pt,rounded corners=8pt] (9.2,-1.5) rectangle (14.4,2.0);

    \node[font=\bfseries\footnotesize,gray!70] at (1.8,2.35) {(a) multicut on $G$};
    \node[font=\bfseries\footnotesize,gray!70] at (6.6,2.35) {(b) component graph $R$};
    \node[font=\bfseries\footnotesize,gray!70] at (11.8,2.35) {(c) max-cut};

    \node[vtx,draw=cutA!80!black,fill=cutA!18] (a1) at (0.6, 1.45) {$H_1$};
    \node[vtx,draw=cutB!80!black,fill=cutB!18] (a2) at (2.2, 1.45) {$H_2$};
    \node[vtx,draw=cutA!80!black,fill=cutA!18] (a3) at (3.0, 0.55) {$H_3$};
    \node[vtx,draw=cutB!80!black,fill=cutB!18] (a4) at (1.8, 0.55) {$H_4$};
    \node[vtx,draw=gray!70,fill=gray!18] (a5) at (0.5,-0.25) {$H_5$};
    \node[vtx,draw=cutA!80!black,fill=cutA!18] (a6) at (1.2,-0.95) {$H_6$};
    \node[vtx,draw=cutB!80!black,fill=cutB!18] (a7) at (2.5,-0.75) {$H_7$};

    \draw[cutA,dashed,shorten <=2pt,shorten >=2pt] (a1)--(a2);
    \draw[cutA,dashed,shorten <=2pt,shorten >=2pt] (a2)--(a3);
    \draw[cutA,dashed,shorten <=2pt,shorten >=2pt] (a3)--(a7);
    \draw[cutA,dashed,shorten <=2pt,shorten >=2pt] (a5)--(a6);
    \draw[cutA,dashed,shorten <=2pt,shorten >=2pt] (a1) to[bend left=12] (a4);
    \draw[cutA,dashed,shorten <=2pt,shorten >=2pt] (a1) to[bend right=12] (a4);
    \draw[cutA,dashed,shorten <=2pt,shorten >=2pt] (a2) to[bend left=10] (a4);
    \draw[cutA,dashed,shorten <=2pt,shorten >=2pt] (a2) to[bend right=10] (a4);
    \draw[cutA,dashed,shorten <=2pt,shorten >=2pt] (a4) to[bend left=12] (a6);
    \draw[cutA,dashed,shorten <=2pt,shorten >=2pt] (a4) to[bend right=12] (a6);
    \draw[cutA,dashed,shorten <=2pt,shorten >=2pt] (a6) to[bend left=12] (a7);
    \draw[cutA,dashed,shorten <=2pt,shorten >=2pt] (a6) to[bend right=12] (a7);
    \draw[cutA,dashed,shorten <=2pt,shorten >=2pt] (a4) to[bend left=18] (a5);
    \draw[cutA,dashed,shorten <=2pt,shorten >=2pt] (a4)--(a5);
    \draw[cutA,dashed,shorten <=2pt,shorten >=2pt] (a4) to[bend right=18] (a5);

    \draw[->,cutB,line width=2.0pt] (4.1,0.0) -- (4.35,0.0);
    \draw[->,cutB,line width=2.0pt] (8.9,0.0) -- (9.15,0.0);

    \node[vtx,draw=cutB!80!black,fill=cutB!12] (r1) at (5.6, 1.45) {$H_1$};
    \node[vtx,draw=cutB!80!black,fill=cutB!12] (r2) at (7.1, 1.45) {$H_2$};
    \node[vtx,draw=cutB!80!black,fill=cutB!12] (r3) at (7.8, 0.55) {$H_3$};
    \node[vtx,draw=cutB!80!black,fill=cutB!12] (r4) at (6.7, 0.55) {$H_4$};
    \node[vtx,draw=cutB!80!black,fill=cutB!12] (r5) at (5.4,-0.25) {$H_5$};
    \node[vtx,draw=cutB!80!black,fill=cutB!12] (r6) at (6.3,-0.95) {$H_6$};
    \node[vtx,draw=cutB!80!black,fill=cutB!12] (r7) at (7.3,-0.75) {$H_7$};

    \draw[gray!55,shorten <=2pt,shorten >=2pt] (r1)--(r2);
    \draw[gray!55,shorten <=2pt,shorten >=2pt] (r1)--(r4);
    \draw[gray!55,shorten <=2pt,shorten >=2pt] (r2)--(r3);
    \draw[gray!55,shorten <=2pt,shorten >=2pt] (r2)--(r4);
    \draw[gray!55,shorten <=2pt,shorten >=2pt] (r3)--(r7);
    \draw[gray!55,shorten <=2pt,shorten >=2pt] (r4)--(r5);
    \draw[gray!55,shorten <=2pt,shorten >=2pt] (r4)--(r6);
    \draw[gray!55,shorten <=2pt,shorten >=2pt] (r5)--(r6);
    \draw[gray!55,shorten <=2pt,shorten >=2pt] (r6)--(r7);
    \node[font=\footnotesize] at (6.6,1.85) {$w_R(H_i,H_j)=w_G(H_i,H_j)$};

    \node[vtx,draw=cutA!80!black,fill=cutA!22] (c1) at (9.9, 1.45) {$H_1$};
    \node[vtx,draw=cutA!80!black,fill=cutA!22] (c4) at (10.7, 0.55) {$H_4$};
    \node[vtx,draw=cutA!80!black,fill=cutA!22] (c5) at (10.0,-0.75) {$H_5$};
    \node[vtx,draw=cutB!80!black,fill=cutB!22] (c2) at (13.0, 1.45) {$H_2$};
    \node[vtx,draw=cutB!80!black,fill=cutB!22] (c3) at (13.7, 0.55) {$H_3$};
    \node[vtx,draw=cutB!80!black,fill=cutB!22] (c6) at (12.6,-0.75) {$H_6$};
    \node[vtx,draw=cutB!80!black,fill=cutB!22] (c7) at (13.4,-1.00) {$H_7$};

    \draw[cutA,shorten <=2pt,shorten >=2pt] (c1)--(c4);
    \draw[cutA,shorten <=2pt,shorten >=2pt] (c1)--(c5);
    \draw[cutA,shorten <=2pt,shorten >=2pt] (c4)--(c5);
    \draw[cutB,shorten <=2pt,shorten >=2pt] (c2)--(c3);
    \draw[cutB,shorten <=2pt,shorten >=2pt] (c2)--(c6);
    \draw[cutB,shorten <=2pt,shorten >=2pt] (c3)--(c7);
    \draw[cutB,shorten <=2pt,shorten >=2pt] (c6)--(c7);
    \draw[cutC,dashed,shorten <=2pt,shorten >=2pt] (c1)--(c2);
    \draw[cutC,dashed,shorten <=2pt,shorten >=2pt] (c4)--(c3);
    \draw[cutC,dashed,shorten <=2pt,shorten >=2pt] (c5)--(c6);
    \node[cutA,font=\bfseries\small,draw=none,fill=none] at (10.55, 1.75) {$A$};
    \node[cutB,font=\bfseries\small,draw=none,fill=none] at (13.85, 1.50) {$B$};
    \draw[dashed,gray!60,line width=1.2pt] (11.7,-1.5) -- (11.7,2.0);
  \end{tikzpicture}
  \caption{The second branch of the algorithm: a generalized $k$-multicut produces components, these become vertices of $R$, and max-cut on $R$ determines the final split.}
  \label{fig:multicut-maxcut-steps}
\end{figure}

Let for the rest of this section $\Sopt$ be an optimal cut for the \textsc{Simplified Conductance}, where we without loss of generality assume $w\br{\Sopt, V}\le w\br{\SoptBar, V}$.
Our analysis consists of two separate cases depending on the size of $w\br{\Sopt, \SoptBar}$.
First we prove (cf. Lemma~\ref{lem:S1}) that if $w\br{\Sopt, \SoptBar}$ is sufficiently small, then the candidate solution $S_1$ is the one that guarantees the required approximation ratio (determined in Section~\ref{sec:alpha}).
The complementary case is handled by Lemma~\ref{lem:S2} (where the appropriate approximation ratio of $\beta=\bigo\br{\log n}$ will come from \cite{AUnifiedApproachToApproximatingPartialCoveringProblems,OptimalHierarchicalDecompositionsForCongestionMinimizationInNetworks}).
(We remark that the exact choice of constants in our construction, like the one in Lemma~\ref{lem:S1}, is somewhat arbitrary and we do not attempt to optimize them).
\begin{lemma} \label{lem:S1}
  Assume that $w\br{\Sopt, \SoptBar}\leq w\br{V}/3$. 
  Then, $S_1$ is an $\bigo\br{\alpha}$-approximate solution to the \textsc{Simplified Conductance} problem with $\delta=1/6$.
\end{lemma}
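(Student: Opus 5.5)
The plan is to compare the optimal \textsc{Simplified Conductance} cut $\Sopt$ with cuts in the auxiliary graph $G'$ from step~\ref{it:S1:Gprime}. The first step is to express $G'$'s quantities in terms of $G$. For a cut of $G'$ we always name $S$ the side not containing $r$, so $S\subseteq V$. Since all edges at $r$ have capacity $0$, the cost of the cut in $G'$ is exactly $c\br{S,\bar S}$, computed in $G$. The only demands in $G'$ are between $r$ and vertices of $V$, so the demand cut in $G'$ is
\[
D\br{S}=\sum_{u\in S}w\br{u,V}.
\]
Because $w\br{u,V}$ sums over all $v\in V$, pairs inside $S$ are counted from both endpoints and crossing pairs once. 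This gives the sandwich
\[
w\br{S,V}\le D\br{S}\le 2\,w\br{S,V},
\]
with $D\br{S}\le w\br{S}+w\br{S,V}$ under the convention $w\br{S,V}=w\br{S}+w\br{S,\bar S}$. I expect the main (minor) obstacle to be exactly this bookkeeping of internal pairs; every subsequent step only uses the two-sided bound up to a factor $2$.

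Second, I show that $\Sopt$ is feasible for the \textsc{Sparsest $K$-Bounded Cut} with $K=\frac43 w\br{V}$. Every demand pair is counted once in $w\br{\Sopt,V}+w\br{\SoptBar,V}$, except the crossing pairs, which are counted twice. Hence this sum equals $w\br{V}+w\br{\Sopt,\SoptBar}\le \frac43 w\br{V}$ by the hypothesis of the lemma. Together with $w\br{\Sopt,V}\le w\br{\SoptBar,V}$ this gives $w\br{\Sopt,V}\le \frac23 w\br{V}$, and so
\[
D\br{\Sopt}\le 2\,w\br{\Sopt,V}\le \tfrac43 w\br{V}=K.
\]
Positivity of $D\br{\Sopt}$ follows since $\OPTh$ is finite. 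The ratio then satisfies
\[
\phi\br{\Sopt}=\frac{c\br{\Sopt,\SoptBar}}{D\br{\Sopt}}\le \frac{c\br{\Sopt,\SoptBar}}{w\br{\Sopt,V}}=\OPTh\br{G}.
\]
Consequently the $\alpha$-approximate cut $S_1$ satisfies $c\br{S_1,\bar S_1}/D\br{S_1}\le\alpha\,\OPTh\br{G}$ and $0<D\br{S_1}\le K$.

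Third, I transfer the guarantee for $S_1$ back to $G$. For the objective,
\[
\hw\br{S_1}=\frac{c\br{S_1,\bar S_1}}{w\br{S_1,V}}\le \frac{2\,c\br{S_1,\bar S_1}}{D\br{S_1}}\le 2\alpha\,\OPTh\br{G}.
\]
For the constraint with $\delta=1/6$: every pair not lying entirely inside $S_1$ touches $\bar S_1$, so $w\br{\bar S_1,V}\ge w\br{V}-w\br{S_1}$. The pairs inside $S_1$ are counted twice in $D\br{S_1}$, so $w\br{S_1}\le D\br{S_1}/2\le \frac23 w\br{V}$. Therefore $w\br{\bar S_1,V}\ge \frac13 w\br{V}\ge \frac16 w\br{V}$.

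Finally, $S_1$ is a genuine cut of $V$. It is nonempty because $D\br{S_1}>0$. It is not all of $V$ because $D\br{V}$ counts every pair and thus $D\br{V}\ge w\br{V}+w\br{V}$ up to the internal-pair convention, which exceeds $K$ whenever $w\br{V}>0$. Hence $S_1$ is a feasible, $2\alpha$-approximate solution, as claimed.
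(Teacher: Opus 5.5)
Your proof is correct and has the same structure as the paper's: the sandwich $w(S,V)\le w_{G'}(S,\bar S)\le 2\,w(S,V)$, feasibility of $\Sopt$ for the $K$-bounded problem via $w(\Sopt,V)\le \frac23 w(V)$, and the factor-$2\alpha$ transfer of the ratio back to $\hw$.

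The one step you argue differently is the constraint $w(\bar S_1,V)\ge w(V)/6$. The paper splits into cases: either $w(S_1,V)\le \frac56 w(V)$, or else $w(S_1)\ge w(V)/2$, or else $w(S_1,\bar S_1)\ge w(V)/3$. You instead combine the exact identity $w(\bar S_1,V)=w(V)-w(S_1)$ with the double-counting bound $2w(S_1)\le w_{G'}(S_1,\bar S_1)\le K$. This gives $w(\bar S_1,V)\ge w(V)/3$ directly. It is shorter than the case analysis, and it shows that $\delta=1/3$ would already suffice.

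You also check explicitly that $S_1\neq V$, using $w_{G'}(V,\{r\})=2w(V)>K$. The paper states this only as intuition, so your version makes that point rigorous.
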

\begin{proof}
Recall the auxiliary graph $G'$ defined in step~\ref{it:S1:Gprime}.
For any cut $S\subseteq V$ for which we have without loss of generality that $r\in \bar{S}$, it holds
\begin{equation} \label{eq:Stwo}
w_{G}\br{S, V}\leq w_{G}\br{S, V}+w\br{S}= w_{G'}\br{S, \bar{S}}\leq 2\cdot w_{G}\br{S, V}.
\end{equation}
Here, the second equality is due to the fact that for any $S\subseteq V_{G'}$ the quantity $w_{G'}\br{S, \bar{S}}$ double counts the contribution of edges between the vertices of $S$ in $w_{G}\br{S, V}$.

Note that a trivial cut $S=V$ has cost $0$ in $G'$.
Hence, the intuition behind the additional bound using the $K$ in \eqref{eq:K} instead of simply solving \textsc{Generalized Sparsest Cut} on $G'$ is that when $w\br{S, V}$ is forced to be roughly smaller than $w\br{\bar{S}, V}$, then this eliminates $S=V$ as a possible solution.
Hence the choice of $K$ in the first branch for the constraint in \eqref{eq:K}.

We now prove that the optimal solution $\Sopt$ that minimizes $\hw$, i.e., $\hw\br{\Sopt}=\hw\br{G}$ is a valid solution to the problem in \eqref{eq:K}.
Observe that $\Sopt$ satisfies
\[
w\br{\Sopt, V}+w\br{\SoptBar, V}=w\br{V}+w\br{\Sopt, \SoptBar}\leq \frac{4}{3}\cdot w\br{V}
\]
\[
w_{G'}\br{\Sopt, \SoptBar}\leq 2\cdot w\br{\Sopt, V}\leq \frac{4}{3}\cdot w\br{V}=K.
\]
Therefore, we get that $\Sopt$ is a valid solution to the problem in~\eqref{eq:K}.

Now we analyze the opposite direction, namely, that any solution to the problem in~\eqref{eq:K}, in particular $S_1$, gives a solution to the \textsc{Simplified Conductance}.
There are 3 cases:
\begin{enumerate}
  \item $w_{G}\br{S_1, V}\leq \frac{5}{6}\cdot w\br{V}$. This means that  \[w\br{\bar{S}_1, V}\geq w\br{V}-w_{G}\br{S_1, V}\geq w\br{V}/6.
  \]
  \item $w_{G}\br{S_1, V}\geq \frac{5}{6}\cdot w\br{V}$. There are two subcases:
  \begin{itemize}
    \item $w\br{S_1}\geq w\br{V}/2$.
    Since by assumption $r\in\bar{S}_1$, we get that
    \[w\br{S_1, V}=w_{G'}\br{S_1, \bar{S}_1}-w\br{S_1}\leq K-w\br{S_1} \leq \frac{4}{3}\cdot w\br{V}-w\br{V}/2 = \frac{5}{6}\cdot w\br{V}.
    \]
    This further means that
    \[w\br{\bar{S}_1, V}\geq w\br{V}-w\br{S_1, V}\geq w\br{V}/6.
    \]
  \item $w\br{S_1, \bar{S}_1}\geq w\br{V}/3$. Here, we trivially get that $w\br{\bar{S}_1, V}\geq w\br{S_1, \bar{S}_1}\geq w\br{V}/3$.
  \end{itemize}
  It should be noted that at least one of the two above conditions must hold since $w\br{S_1}+w\br{S_1, \bar{S}_1}=w_{G}\br{S_1, V}$.
\end{enumerate}
Therefore, $S_1$ is a valid solution to the \textsc{Simplified Conductance} problem with $\delta=1/6$. We now obtain, by the construction of $G'$, \eqref{eq:Stwo} and the fact that an $\alpha$-approximation is used in step~\ref{it:S1:alpha} that
\[
 \hw\br{S_1} = \frac{c_G\br{S_1,\bar{S}_1}}{w_G\br{S_1,V}} \leq 2\cdot \frac{c_{G'}\br{S_1,\bar{S}_1}}{w_{G'}\br{S_1,\bar{S_1}}} \leq 2\alpha\cdot \frac{c_{G'}\br{\Sopt,\SoptBar}}{w_{G'}\br{\Sopt,\SoptBar}} \leq 2\alpha\cdot \frac{c_G\br{\Sopt,\SoptBar}}{w_G\br{\Sopt,V}} 
\]
as required by the lemma.
\end{proof}

Now we analyze, the second case regarding the size of $w\br{\Sopt, \SoptBar}$, which is handled by the solution $S_2$ found by the algorithm.
\begin{lemma} \label{lem:S2}
  Assume that $w\br{\Sopt, \SoptBar}\geq w\br{V}/3$.
  Then, $S_2$ is an $\bigo\br{\beta}$-approximate solution to the \textsc{Simplified Conductance} problem with $\delta=1/6$.
\end{lemma}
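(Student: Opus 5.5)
The plan has two parts. First, I will show that the cut $\Sopt$ itself is a feasible solution for the \textsc{Generalized $k$-Multicut} with $k=w\br{V}/3$. Then I will track both the cost and the demand of $S_2$.

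\emph{Feasibility of $\Sopt$ and the cost bound.} The edge set of the cut $\br{\Sopt,\SoptBar}$ separates at least $w\br{\Sopt,\SoptBar}\geq w\br{V}/3=k$ units of demand, so it is feasible for the multicut instance. Hence the multicut $C$ returned in step~\ref{it:S2:beta} satisfies $c\br{C}\leq \beta\cdot c\br{\Sopt,\SoptBar}$. Every edge of the final cut $\br{S_2,\bar S_2}$ joins two distinct components of $G\setminus C$, so it lies in $C$. This gives $c\br{S_2,\bar S_2}\leq c\br{C}\leq\beta\cdot c\br{\Sopt,\SoptBar}$.

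\emph{Demand bound.} The demand cut by $C$ is exactly $w_R\br{V_R}$, the total weight of the complete graph $R$, and this is at least $w\br{V}/3$. By Lemma~\ref{pro:max-cut}, the cut $\br{C_R,\bar C_R}$ has weight at least $w\br{V}/6$. Since $w\br{S_2,\bar S_2}=w_R\br{C_R,\bar C_R}$, we get
\[
w\br{S_2,\bar S_2}\geq \frac{w\br{V}}{6}.
\]
It follows that both $w\br{S_2,V}$ and $w\br{\bar S_2,V}$ are at least $w\br{V}/6$. In particular, the constraint $w\br{\bar S_2,V}\geq w\br{V}/6$ with $\delta=1/6$ holds, so $S_2$ is feasible for the \textsc{Simplified Conductance}. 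We also get $\hw\br{S_2}\leq 6\,c\br{S_2,\bar S_2}/w\br{V}$.

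\emph{Comparison with $\OPTh$.} Trivially $w\br{\Sopt,V}\leq w\br{V}$, so
\[
\hw\br{\Sopt}\geq \frac{c\br{\Sopt,\SoptBar}}{w\br{V}}.
\]
Chaining the bounds gives
\[
\hw\br{S_2}\leq \frac{6\beta\, c\br{\Sopt,\SoptBar}}{w\br{V}}\leq 6\beta\,\hw\br{\Sopt},
\]
which is an $\bigo\br{\beta}$-approximation. Swapping sides is harmless because the bound is symmetric in $S_2$ and $\bar S_2$.

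\emph{Main obstacle.} The delicate points are mild. One must check that the max-cut weight on $R$ really equals the original demand crossing $S_2$; this holds because components are merged wholesale into $C_R$ or $\bar C_R$. One must also handle the convention that $w\br{V}$ counts each unordered pair once, so that the factor $1/2$ from Lemma~\ref{pro:max-cut} applies to $w_R\br{V_R}\geq k$. If the pair-counting convention differs by a factor of $2$, only the constants change.
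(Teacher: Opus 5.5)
Your proof is correct and follows the paper's argument. You show that $\Sopt$ is a feasible $k$-multicut, so $c\br{C}\leq\beta\cdot c\br{\Sopt,\SoptBar}$; every edge of $\br{S_2,\bar S_2}$ lies in $C$; and the greedy max-cut on $R$ gives $w\br{S_2,\bar S_2}\geq w\br{V}/6$. The one difference is the final comparison: you lower-bound $\hw\br{\Sopt}$ directly by $c\br{\Sopt,\SoptBar}/w\br{V}$ using $w\br{\Sopt,V}\leq w\br{V}$, whereas the paper goes through $w\br{\Sopt,\SoptBar}\geq w\br{V}/3$. Your route is slightly more direct and yields the explicit constant $6\beta$.
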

\begin{proof}
  By assumption $w\br{V}\geq w\br{\Sopt,V}\geq w\br{\Sopt,\SoptBar}\geq w\br{V}/3$, which gives $w\br{\Sopt, V}\geq w\br{V}/3$.
  Analogously, $w\br{\SoptBar, V}\ge w\br{V}/3$.
  Therefore, by the same equivalence
  it suffices to minimize $c\br{S, \bar{S}}$ subject to $w\br{S, \bar{S}}\geq w\br{V}/3$ with a constant-factor loss.
  Notably, this task is not equivalent to the \textsc{Generalized $k$-Multicut}, which is approximated in the second branch.
  The reason for this is that $G\setminus C$ has in general many connected components.
  Observe that the cut $\Sopt$ gives a $k$-multicut in $R$.
  Hence, the cost of an optimal $k$-multicut $C^*$ is bounded $c\br{C^*}\leq c\br{\Sopt,\SoptBar}$.
  Therefore, $c\br{C}\leq \beta\cdot c\br{\Sopt,\SoptBar}$.


Since the total demand in $(R,w_R')$ is lower-bounded by $w_R'(V_R)\geq w\br{V}/3$, the standard \textsc{Max-Cut} guarantee gives that $S_2$ cuts at least $w\br{V}/6$ of the total demand in $G$.
Therefore, $w\br{S_2,V},w\br{\bar{S}_2,V}\geq w\br{V}/6$.
Hence, $\delta=1/6$ suffices.
Since $E\br{S_2, \bar{S}_2}\subseteq C$, the cost of the outputed cut is $c\br{S_2, \bar{S}_2}\leq c\br{C}$.
We obtain
\[
 \hw\br{S_2} = \frac{c_G\br{S_2,\bar{S}_2}}{w_G\br{S_2,V}} \leq \frac{c_G\br{C}}{w_G\br{V}/6} \leq 2\beta\cdot \frac{c_G\br{\Sopt,\SoptBar}}{w_G\br{\Sopt,\SoptBar}},
\]
as required by the lemma.
\end{proof}

Lemmas~\ref{lem:h-equivalent}, \ref{lem:S1} and~\ref{lem:S2} give together the following result.
\begin{theorem}
  There exists an $\bigo\br{\alpha+\beta}$-approximation polynomial-time algorithm for the \textsc{Generalized Conductance} problem.
  \qed
\end{theorem}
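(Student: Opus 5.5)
The plan is to combine the three earlier lemmas. The algorithm is the reduction scheme above, with the $\alpha$-approximation for the \textsc{Generalized Sparsest $K$-Bounded Cut} in step~\ref{it:S1:alpha} and the $\beta$-approximation for the \textsc{Generalized $k$-Multicut} in step~\ref{it:S2:beta}. Polynomial running time is immediate: building $G'$ and $R$ is polynomial, the greedy \textsc{Max-Cut} of Lemma~\ref{pro:max-cut} is polynomial, and the two black boxes are polynomial by assumption. Throughout I fix $\delta=1/6$.

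First I do a case split on $w\br{\Sopt,\SoptBar}$, where $\Sopt$ is an optimal cut for the \textsc{Simplified Conductance} with $\delta=1/6$. The two hypotheses of Lemma~\ref{lem:S1} and Lemma~\ref{lem:S2} together cover every value. So in every case at least one of the candidates $S_i$ satisfies $w\br{\bar S_i,V}\ge w(V)/6$ and
\[
\hw\br{S_i}\le \bigo\br{\alpha+\beta}\cdot\OPTh(G).
\]

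Next I convert this to generalized conductance using Lemma~\ref{lem:h-equivalent}. I need a per-cut inequality, not only the equality of optima. For a feasible cut $S$ (feasible meaning $w\br{\bar S,V}\ge \delta w(V)$), we have $w\br{\bar S,V}\in[\delta w(V),\,w(V)]$. Hence
\[
\psi_w(S)=\frac{\hw(S)}{w\br{\bar S,V}}\le \frac{\hw(S)}{\delta\, w(V)}.
\]
Combining with Lemma~\ref{lem:h-equivalent}, which gives $\OPTh(G)=\Theta\br{w(V)\cdot\OPTc(G)}$, yields
\[
\psi_w(S_i)\le 6\,\hw(S_i)/w(V)\le \bigo\br{\alpha+\beta}\cdot\OPTc(G).
\]
Finally, the algorithm returns $\argmin_{X\in\{S_1,S_2\}}\psi_w(X)$, so its output is at least as good as the good candidate $S_i$. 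This finishes the argument.

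The main obstacle is the direction of Lemma~\ref{lem:h-equivalent} used here. I need $\OPTh(G)\le \bigo\br{w(V)\,\OPTc(G)}$ for the same $\delta=1/6$ used in Lemmas~\ref{lem:S1} and~\ref{lem:S2}. To check this, take a $\psi_w$-optimal cut $S$ and orient it so that $w\br{\bar S,V}\ge w\br{S,V}$. Since $w\br{S,V}+w\br{\bar S,V}\ge w(V)$, the larger side has $w\br{\bar S,V}\ge w(V)/2\ge\delta w(V)$, so $S$ is feasible for \eqref{eq:h}. Its value is then
\[
\hw(S)=\psi_w(S)\,w\br{\bar S,V}\le \psi_w(S)\,w(V)\cdot 2.
\]
Here the factor $2$ uses $w\br{\bar S,V}\le w(V)+w\br{S,\bar S}\le 2w(V)$. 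This gives the required bound on $\OPTh(G)$. With this check in place, the remaining steps are direct substitutions.
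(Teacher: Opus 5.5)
Your proposal is correct and follows the paper's route. You case-split on whether $w\br{\Sopt,\SoptBar}$ is at most or at least $w(V)/3$, invoke Lemma~\ref{lem:S1} or Lemma~\ref{lem:S2}, translate back via Lemma~\ref{lem:h-equivalent}, and return the better candidate. Your per-cut bound $\psi_w(S)\le \hw(S)/(\delta\, w(V))$ for feasible $S$, together with your direct check that $\OPTh(G)\le 2w(V)\cdot\OPTc(G)$, spells out the conversion step that the paper's one-line justification leaves implicit.
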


In the next section we develop an algorithm that determines the ratios of $\alpha$.
We finish this section with a remark that our method can be improved for a natural class of multiplicative demand functions.
Formally, $w$ is \emph{multiplicative} if there exists a potential function $\pi\colon V\to \mathbb{N}$ such that for every $u,v\in V$, $w\br{u,v}=\pi\br{u}\cdot \pi\br{v}$.
Notably, unit demand is multiplicative with $\pi\br{u}=1$ for every $u\in V$. For such demand functions, we have the following strengthening of the approximation guarantee:
\begin{theorem}
  If the demand function $w$ is multiplicative, then there is a polynomial-time $\bigo\br{\sqrt{\log n}}$-approximation algorithm for the \textsc{Generalized Conductance} problem.
\end{theorem}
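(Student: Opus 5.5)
For multiplicative demands I would bypass the two-branch scheme and reduce \textsc{Generalized Conductance} directly to uniform-type sparsest cut with vertex weights, which ARV-type algorithms approximate within $\bigo(\sqrt{\log n})$.

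Write $w(u,v)=\pi(u)\pi(v)$ and $\Pi(X)=\sum_{u\in X}\pi(u)$. Then
\[
w\br{S,V}=\sum_{u\in S}\sum_{v\in V}\pi(u)\pi(v)=\Pi(S)\Pi(V)
\]
up to the factor-$2$ ambiguity in how ordered pairs inside $S$ are counted. In any case $w\br{S,V}=\Theta\br{\Pi(S)\Pi(V)}$, and likewise $w\br{\bar S,V}=\Theta\br{\Pi(\bar S)\Pi(V)}$. Hence
\[
\psi_w(S)=\Theta\br{\frac{c\br{S,\bar S}}{\Pi(V)^2\,\Pi(S)\Pi(\bar S)}}=\Theta\br{\frac{\phi_w(S)}{\Pi(V)^2}},
\]
because $w\br{S,\bar S}=\Pi(S)\Pi(\bar S)$ exactly. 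Since $\Pi(V)$ is independent of $S$, minimizing $\psi_w$ is equivalent, up to a constant factor, to minimizing the \textsc{Generalized Sparsest Cut} objective $\phi_w$ with product demands $\pi(u)\pi(v)$. This is the step where I would be careful with the paper's convention for $f(A)$ (ordered versus unordered pairs, and diagonal terms $w(u,u)$). Only constant factors are at stake, and diagonal terms contribute at most $\Pi(S)^2\le\Pi(S)\Pi(V)$, so the $\Theta$ survives.

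It remains to approximate sparsest cut with product demands within $\bigo(\sqrt{\log n})$. The ARV argument handles exactly this case: uniform sparsest cut with vertex weights $\pi$, whose demand matrix is $\pi(u)\pi(v)$. I would invoke this directly. If only the unit-weight version is available as a black box, I would fall back on replacing each vertex $u$ by a cluster of $\pi(u)$ copies joined by infinite-capacity edges (a uniform instance) and then argue that no optimal cut splits a cluster. This blows up the size pseudo-polynomially, so the cleaner route is to use that the ARV SDP with triangle inequalities and the spreading constraint $\sum_{u,v}\pi(u)\pi(v)\|x_u-x_v\|^2=1$ admits the same structure theorem for weighted point sets. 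The ARV/ALN machinery is known to give $\bigo(\sqrt{\log n})$ for such product demands; the $\log\log n$ loss of Arora--Lee--Naor arises only for general demand matrices.

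The main obstacle is justifying this last step polynomially. Specifically, one must confirm that the ARV rounding extends to weighted vertices with $n$ (not $\Pi(V)$) in the bound. I would try to cite the weighted version of ARV, whose analysis only uses the measure on points. Failing that, I would round $\pi$ to powers of $(1+\epsilon)$ and bucket. Combining the two steps gives a cut $S$ with $\psi_w(S)\le\bigo(\sqrt{\log n})\cdot\OPTc(G)$, proving the theorem.
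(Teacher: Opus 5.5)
Your proposal is correct and follows the paper's route. You reduce $\psi_w$ for product demands to the product-demand sparsest cut $c(S,\bar S)/(\Pi(S)\Pi(\bar S))$, up to a constant and an $S$-independent normalization, and then apply an $\bigo(\sqrt{\log n})$ SDP-based black box.

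\textbf{Normalization step.} Your bound is symmetric and gives $\psi_w(S)\,\Pi(V)^2\in[\phi_w(S),4\phi_w(S)]$ directly under the conventions the paper uses ($w(A)=\Pi(A)^2$ or $\Pi(A)^2/2$). The paper instead takes the lighter side, shows $w(S)\le\frac12 w(S,\bar S)$, and uses that the heavier side carries a constant fraction of $w(V)$. That lighter-side argument adapts to any convention, whereas your remark that ``only constant factors are at stake'' does not. If diagonal terms were dropped, then for $S=\{a\}$ with $\pi(a)\gg\Pi(\bar S)$ one gets $w(S,V)=w(S,\bar S)\ll\Pi(S)\Pi(V)$. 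Your conclusion still survives, with $w(V)$ as the normalizer instead of $\Pi(V)^2$.

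\textbf{The black box.} The real difference is at the step you call the main obstacle. The paper needs no weighted extension of ARV: it cites the $\bigo(\sqrt{\log n})$ algorithm of Feige--Hajiaghayi--Lee for \textsc{Min-Ratio Vertex Cut}, whose objective $c_H(S)/(w_H(A\cup S)\,w_H(B\cup S))$ has vertex weights built in. The reduction works as follows.
\begin{itemize}
\item Subdivide each edge $uv$ by a new vertex of capacity $c(u,v)$ and weight $0$, and give each original vertex weight $\pi(u)$ and infinite capacity.
\item Every feasible $(A,S,B)$ then yields the edge cut $X=A\cap V$, so $\bar X=B\cap V$, with $c(X,\bar X)\le c_H(S)$ and denominator exactly $\Pi(X)\Pi(\bar X)$.
\item Conversely, every edge cut of $G$ lifts to a vertex cut with the same ratio.
\item The instance has $n+|E|$ vertices, so the guarantee is $\bigo(\sqrt{\log n})$ with no dependence on $\Pi(V)$.
\end{itemize}

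Citing a measure-weighted ARV directly is also legitimate, so your fallbacks are unnecessary. The bucketing fallback would not help as stated anyway, since rounding $\pi$ to powers of $1+\epsilon$ still leaves arbitrary weights to handle.
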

\begin{proof}
  Consider any graph $G=(V,E,c,w)$.
  Take an arbitrary cut $\br{S,\bar{S}}$ and Without loss of generality assume that $\pi\br{S}\leq \pi\br{\bar{S}}$.
  For multiplicative demands and disjoint $A,B$ we have
  \[
  w\br{A, B}= \sum_{u\in A}\sum_{v\in B}w\br{u,v} = \sum_{u\in A}\pi\br{u}\cdot \sum_{v\in B}\pi\br{v}= \pi\br{A}\cdot \pi\br{B}.
  \] 
  Also, for any vertex subset $A$,
  \[w(A) = \frac{1}{2}\cdot\br{\sum_{v\in A} \pi\br{v}}^2 = \frac{\pi\br{A}^2}{2}.\]
 We get that
 \[w\br{S, V} = w\br{S}+w\br{S,\bar{S}} = \frac{\pi\br{S}^2}{2}+w\br{S,\bar{S}} \leq \frac{\pi\br{\bar{S}}^2}{2}+w\br{S,\bar{S}}=w\br{\bar{S}, V}.\]
Moreover,
\[w\br{S} = \frac{\pi\br{S}^2}{2} \leq \frac{1}{2}\cdot\pi\br{S}\pi\br{\bar{S}}= \frac{1}{2}\cdot w\br{S, \bar{S}}.\]
We have by the above
  \[\frac{c\br{S, \bar{S}}}{3w\br{S,\bar{S}}/2} \leq \frac{c\br{S, \bar{S}}}{w\br{S}+w\br{S,\bar{S}}} \leq \frac{c\br{S, \bar{S}}}{w\br{S,\bar{S}}}.
  \]
Thus, minimizing $\psi_w\br{G}$ is up to a constant-factor equivalent to minimizing
  \[\frac{c\br{S, \bar{S}}}{w\br{S, \bar{S}}}=\frac{c\br{S, \bar{S}}}{\pi\br{S}\cdot \pi\br{\bar{S}}}.
  \]
  This can be then reduced to the \textsc{Min-Ratio Vertex Cut} problem which is as follows: Given a graph $H=\br{V_H,E_H,c_H,w_H}$ with a capacity vertex function $c_H\colon V_H\to \mathbb{N}$ and a weight function $w_H\colon V_H\to \mathbb{N}$, find a partition $A, S, B$ of $V_H$ that minimizes
  \[\frac{c_H\br{S}}{w_H\br{A\cup S}\cdot w_H\br{B\cup S}}
  \]
  such that there are no edges between $A$ and $B$ in $H$.
  This problem admits an $\bigo\br{\sqrt{\log n}}$-approximation algorithm \cite{ImprovedApproximationAlgorithmsForMinimumWeightVertexSeparators}.
  To reduce the \textsc{Generalized Conductance} on $G$ to this problem we construct $H$ as follows.
  Take initially $H=G$ and then each edge $uv\in E_{H}$ is subdivided by introducing a new vertex of capacity $c\br{u,v}$ and weight $0$ and each `old' vertex $u\in V$ is assigned the capacity $\infty$ and weight $\pi\br{u}$. It follows from construction that an optimal solution to \textsc{Min-Ratio Vertex Cut} on $H$ corresponds to an optimal solution to \textsc{Generalized Conductance} on $G$, and thus we get the desired approximation ratio.
\end{proof}

\newcommand{\paths}{P^*}

\section{Algorithm for the sparsest $K$-bounded cut}
\label{sec:alpha}

Since solving tasks related to sparsest cut is easier on trees, an efficient approach to solving such problems is to first embed the graph into a tree while partially preserving the value of all cuts. To do so, one can use the cut-sparsifier tree decomposition of R\"{a}cke \cite{OptimalHierarchicalDecompositionsForCongestionMinimizationInNetworks}. This decomposition allows for an embedding of any graph into a probability distribution over a polynomial number of trees such that for every cut, the expected value of the cut in the tree is at most $\bigo\br{\log n}$ times the value of the cut in the original graph, and for every cut, its value in the original graph is at most its value in every tree in the support of the distribution. Therefore, if we can solve our problem on trees, then we can solve it on general graphs with an $\bigo\br{\log n}$-approximation ratio by first embedding it into a distribution over trees and then solving it on each tree in the support of the distribution. Then, the approximation ratio is achieved by choosing the best cut among all cuts produced for these trees. In what follows we formalize these ideas.


\subsection{Decomposition trees and multicommodity flows}
Given a graph $G=(V,E,c)$, denote by $\paths$ the set of all paths in $G$.
A \emph{decomposition tree} for $G$ is a rooted tree
$T=(V_T,E_T, c_T)$ together with:
\begin{enumerate}
  \item a node map $m_V:V_T\to V$ that is bijective on leaves of $T$ and $V$,
  \item an edge map $m_E:E_T\to \paths$ sending each tree edge $uv$ to a path in $G$
  between the mapped endpoints $m_V(u)$ and $m_V(v)$,
  \item induced maps $m'_V:V\to V_T$ which maps each vertex of $V$ to the corresponding leaf of $T$ and $m'_E:E\to \paths_T$ which maps an edge $uv\in E$ to the unique (shortest) path between $m'_V(u)$ and $m'_V(v)$ in $T$,
  \item For a tree edge $e=(u,v)$ of $T$, let $V_{u},V_{v}$ be the leaf partition
induced by deleting $e$. Its capacity is then defined as
\[
  c_T(e)=\sum_{x\in V_{u}, y\in V_{v}} c(x,y).
\]
\end{enumerate}

A \emph{multicommodity flow} in a graph $G$ is a collection of flows $f^{(i)}:E\to\mathbb{R}_{\ge 0}$ for $i\in\brc{1,\ldots,k}$, where each flow $f^{(i)}$ routes $d_i$ units between a source-sink pair $(s_i,t_i)$. The flows needs to satisfy the demand constraints and should respect edge capacities.

The \emph{congestion} of a multicommodity flow is a way to measure how much these capacities are violated.
Formally, we define it as the maximum ratio of total flow to capacity over all edges:
\[
  \xi_G = \max_{e\in E} \frac{\sum_{i=1}^{k} f^{(i)}(e)}{c(e)}.
\]
A multicommodity flow is \emph{feasible} if its congestion is at most $1$. Given a flow $f$ in $G$, we can map it to a flow in a decomposition tree $T$ using the edge map $m_E$. The mapped flow $m'(f)$ is defined as follows: for each tree edge $e_t\in E_T$, we sum the flow of all edges in $G$ that are mapped to paths in $T$ that include $e_t$. Formally:
\[
  m'(f)(e_t) = \sum_{e\in E: e_t\in m'_E(e)} f(e).
\]

R\"{a}cke's theorems relate the congestion of flows in a graph to the congestion in decomposition trees:

\begin{theorem}[R\"{a}cke (STOC 2008, Theorem 2)]\label{thm:flow_mapping}
Let $f$ be any multicommodity flow in $G$ with congestion $\xi_G$.
For any decomposition tree $T$ of $G$, the mapped flow $m'(f)$ in $T$ has congestion $\xi_T\le \xi_G$.
\end{theorem}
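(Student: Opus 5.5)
The plan is to prove Theorem~\ref{thm:flow_mapping} edge by edge. I will fix a tree edge $e_t=uv\in E_T$ and show that the mapped flow through it is at most $\xi_G\cdot c_T(e_t)$. Since $\xi_T=\max_{e_t}m'(f)(e_t)/c_T(e_t)$, this gives $\xi_T\le\xi_G$ directly. The key structural fact is this: removing $e_t$ splits the leaves, hence via the bijection $m'_V$ splits $V$, into $V_u$ and $V_v$. Moreover, for an edge $xy\in E$, the tree path $m'_E(xy)$ contains $e_t$ if and only if $m'_V(x)$ and $m'_V(y)$ lie on different sides of $e_t$. This holds because paths in a tree are unique, and the unique path between leaves on opposite sides of an edge must use that edge, while leaves on the same side are joined within one subtree.

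With this characterization, the mapped flow on $e_t$ becomes a sum over the edges of $G$ crossing the cut $(V_u,V_v)$. Writing $f(e)=\sum_i f^{(i)}(e)$ for the total flow on $e$, we get
\[
m'(f)(e_t)=\sum_{x\in V_u,\,y\in V_v,\,xy\in E} f(xy)\le \sum_{x\in V_u,\,y\in V_v,\,xy\in E}\xi_G\cdot c(x,y)=\xi_G\cdot c_T(e_t).
\]
The inequality uses $f(e)\le \xi_G\, c(e)$ for every $e\in E$, which is just the definition of congestion in $G$. The final equality is the definition of $c_T(e_t)$, with the paper's convention that $c(x,y)=0$ for non-edges. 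Dividing by $c_T(e_t)$ gives the claim for $e_t$. If $c_T(e_t)=0$, then no edge of $G$ crosses the cut, so $m'(f)(e_t)=0$ and such edges contribute nothing, under the convention $0/0=0$. Taking the maximum over all $e_t\in E_T$ completes the argument.

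The only delicate point is the path-crossing characterization, and in particular the interpretation of the mapped flow. I read $m'(f)$ as charging the aggregate flow $\sum_i f^{(i)}$ on each graph edge to every tree edge on that edge's tree path. If it were instead defined per commodity by rerouting each $s_i$–$t_i$ demand along the tree path, I would argue the same way: each unit of commodity $i$ crossing $e_t$ must cross the cut $(V_u,V_v)$ in $G$ at least once. Hence the tree load is bounded by the flow across the cut in $G$, which is at most $\xi_G\cdot c(V_u,V_v)=\xi_G\cdot c_T(e_t)$. Apart from this, the proof is routine and does not depend on how $T$ was constructed. In particular, the edge map $m_E$ plays no role; only $m'_V$ and the definition of $c_T$ matter.
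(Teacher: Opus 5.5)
Your proof is correct. The paper gives no proof of this statement; it imports it from R\"acke as a black box. Your argument is the standard justification of this ``easy direction'': $c_T(e_t)$ is by definition the capacity of the cut $(V_u,V_v)$ in $G$, and the load that $m'(f)$ places on $e_t$ is exactly the total $G$-flow on edges crossing that cut, so $m'(f)(e_t)\le \xi_G\cdot c_T(e_t)$ is immediate. Your per-commodity variant also correctly handles the alternative reading of $m'(f)$.

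The only quibble is the degenerate case. Since $c$ takes values in $\mathbb{N}$, a zero-capacity edge of $G$ may cross the cut, so ``no edge crosses'' is not literally justified when $c_T(e_t)=0$. This does not break anything: for finite $\xi_G$ (with $0/0=0$) such an edge carries no flow, so $m'(f)(e_t)\le \xi_G\cdot c_T(e_t)=0$ still holds. Your $0/0=0$ convention, rather than the paper's blanket rule that a ratio with zero denominator is $\infty$, is the right one for congestion.
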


\begin{theorem}[R\"{a}cke (STOC 2008, Theorem 4)]\label{thm:convex_decomposotion}
There exists a polynomial-time computable convex combination of decomposition
trees $\brc{(T_i,\lambda_i)}_{i=1}^{N}$, $\sum_{i=1}^{N}\lambda_i=1$, such that for any family of multicommodity flows
$\brc{f_i}_{i=1}^{N}$ with congestion at most $1$ in $T_i$, the mapped flow
\[
  f=\sum_{i=1}^{N} \lambda_i m(f_i)
\]
has congestion at most $\rho=\bigo(\log n)$ in $G$.
\end{theorem}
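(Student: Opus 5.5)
The plan is to reduce the statement to a statement about \emph{relative loads} of single trees and then combine trees by a min--max argument, as in R\"acke's original approach. For a decomposition tree $T$ and an edge $e\in E$, define the load
\[
\operatorname{load}_T(e)=\sum_{e_t\in E_T\colon e\in m_E(e_t)} c_T(e_t)
\]
and the relative load $\operatorname{rload}_T(e)=\operatorname{load}_T(e)/c(e)$. If a flow $f_i$ has congestion at most $1$ in $T_i$, then each tree edge $e_t$ carries at most $c_T(e_t)$ units, and these units are routed in $G$ along the path $m_E(e_t)$. Hence the mapped flow puts at most $\sum_i\lambda_i\operatorname{load}_{T_i}(e)$ on each $e\in E$. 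It therefore suffices to find trees and weights with
\[
\sum_i\lambda_i\operatorname{rload}_{T_i}(e)\le\bigo(\log n)\quad\text{for every } e\in E.
\]

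\textbf{Reduction to a single-tree statement.} I will view the problem as a zero-sum game between a player who picks a tree and an adversary who picks a length function $\ell\colon E\to\mathbb{R}_{\ge0}$. By LP duality (or the minimax theorem), the desired convex combination exists if the following holds for every $\ell$: there is a single tree $T$ with
\[
\sum_{e}\ell(e)\operatorname{load}_T(e)\le\bigo(\log n)\sum_e\ell(e)c(e).
\]
I will prove this using the FRT probabilistic tree embedding applied to the shortest-path metric $d_\ell$. The FRT hierarchical clustering yields a laminar family of clusters, and hence a rooted tree $T$ whose leaves are $V$. Each cluster node is mapped to a vertex inside the cluster, and each tree edge is mapped to a shortest path in $G$ between the mapped endpoints. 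The edge $e_t$ joining a level-$j$ cluster to its parent then has $m_E(e_t)$ of $\ell$-length $\bigo(2^j)$. Exchanging the order of summation gives
\[
\sum_e\ell(e)\operatorname{load}_T(e)=\sum_{e_t}c_T(e_t)\,\ell\bigl(m_E(e_t)\bigr).
\]
Since $c_T(e_t)$ is exactly the capacity of graph edges $xy$ separated by $e_t$, the right-hand side equals $\sum_{xy\in E}c(x,y)\sum_{e_t\text{ on the } T\text{-path of }x,y}\ell(m_E(e_t))$. This is at most $\sum_{xy}c(x,y)\cdot\bigo(d_T(x,y))$, where $d_T$ is the FRT tree metric with level-$j$ edges weighted by $2^j$. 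FRT guarantees $\mathbb{E}[d_T(x,y)]\le\bigo(\log n)\,d_\ell(x,y)\le\bigo(\log n)\,\ell(xy)$, so some tree in the support satisfies the inequality. That tree can be found deterministically by derandomizing FRT, or with high probability by repeated sampling.

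\textbf{Polynomial-time construction.} I will turn the existence argument into an algorithm via multiplicative weights. Maintain the current combination with accumulated relative loads $r(e)$, and set $\ell(e)=e^{r(e)}/c(e)$ up to normalization. Compute a tree $T$ for this $\ell$ as above, and add it with a small weight chosen so that no $\operatorname{rload}$ increases by more than $1$. A standard potential argument on $\sum_e e^{r(e)}$ shows that after polynomially many rounds the normalized combination has maximum relative load $\bigo(\log n)$. The number of trees is then polynomial, and each step is a polynomial-time computation.

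\textbf{Main obstacle.} The delicate step is the inequality $\sum_{e_t}c_T(e_t)\,\ell(m_E(e_t))\le\bigo(1)\sum_{xy}c(x,y)\,d_T(x,y)$. It requires the mapped paths of tree edges to have length controlled by the cluster diameters. It also requires that an edge $xy$ cut at a given level be charged only $\bigo(2^j)$ per level along its tree path, so that the level sum telescopes into $d_T(x,y)$. I would first attempt this by choosing each cluster's representative as its FRT center and bounding $\ell(m_E(e_t))\le\operatorname{diam}$ of the parent cluster. A secondary technical point is bounding the number of MWU rounds polynomially, since capacities may be large. I would handle this by making the step size inversely proportional to the maximum relative load of the new tree, which makes each round increase the potential by a controlled amount.
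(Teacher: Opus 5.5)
The paper gives no proof of this statement; it imports it as a black box from R\"acke (STOC 2008). Your outline is essentially R\"acke's own argument: an FRT-based single-tree bound $\sum_e \ell(e)\,\mathrm{load}_T(e)\le\bigo(\log n)\sum_e \ell(e)c(e)$ for every length function, combined by multiplicative weights with step $1/\max_e \mathrm{rload}_T(e)$. It is correct.

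Two small points need tightening. First, FRT centers need not lie in their own clusters, so bound $\ell(m_E(e_t))$ by the sum of the child and parent radii, or choose representatives inside the clusters, rather than by the parent's diameter. Second, the clean round count is that every non-final round raises the relative load of the maximizing edge by exactly $1$, while all relative loads stay $\bigo(\log n)$; this gives $\bigo(m\log n)$ trees independently of the capacities.
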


\subsection{Cut value preservation bounds}

For disjoint sets $A,B\subseteq V$ and any graph $G$, define
$\minCut{A}{B}{G}$ to be a cut $C$ in $G$ of minimum cost $c_G\br{C}$ that
separates $A$ from $B$.

To bound how well cuts are preserved by R\"{a}cke's decomposition, we use an argument analogous to max-flow/min-cut duality: capacities in each tree dominate the original cut capacities pointwise, and by combining tree flows through the convex decomposition we obtain a matching upper bound in expectation.
This gives the following cut-value preservation statement.
\begin{theorem}\label{thm:racke-dual}
Let $G=(V,E,c_G)$ and let $\brc{(T_i,\lambda_i)}_{i=1}^{N}$ be as in Theorem~\ref{thm:convex_decomposotion}.
Define $\mu$ by $\Pr_{T\sim\mu}[T=T_i]=\lambda_i$ and set
$\rho=\bigo(\log n)$. Then for every cut $U\subsetneq V$:
\begin{enumerate}
  \item[(i)] $c_G\br{U,\bar{U}}\le \minCut{U}{V\setminus U}{T}$ for every $T$ in the support of $\mu$.
  \item[(ii)] $\Etree\left[\minCut{U}{V\setminus U}{T}\right]\le \rho\cdot c_G\br{U,\bar{U}}$.
\end{enumerate}
The proof is deferred to Appendix~\ref{app:racke-proof}.
\end{theorem}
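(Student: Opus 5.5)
Part (i) is a pointwise domination statement, and I would prove it directly from the definition of the tree capacities $c_T$. Fix $T$ in the support of $\mu$ and a cut $U$. Let $F\subseteq E_T$ be any edge set of $T$ separating the leaves $m'_V(U)$ from $m'_V(\bar U)$. Every edge $xy\in E$ with $x\in U$, $y\in\bar U$ has its endpoints mapped to leaves on opposite sides of $F$. Hence the tree path $m'_E(xy)$ contains some edge $e\in F$. For that $e$, the pair $(x,y)$ lies on opposite sides of the leaf bipartition induced by deleting $e$, so $c(x,y)$ is counted in $c_T(e)$. Summing over $F$ gives $c_T(F)\ge\sum_{e\in F}c_G(\text{pairs split by }e)\ge c_G(U,\bar U)$, because each crossing pair is split by at least one edge of $F$. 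Taking $F$ to be the minimum separating cut yields (i).

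For part (ii) I would argue by flow duality, combining Theorem~\ref{thm:convex_decomposotion} with max-flow/min-cut on each tree. Fix $U$ and add a super-source $s$ attached to the leaves of $U$ and a super-sink $t$ attached to the leaves of $\bar U$, with infinite capacities. In each $T_i$ the maximum $s$-$t$ flow $f_i$ has value $M_i=\minCut{U}{V\setminus U}{T_i}$ and congestion $1$. This is a single-commodity flow, or equivalently a multicommodity flow between leaf pairs in $U\times\bar U$ after path decomposition.

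By Theorem~\ref{thm:convex_decomposotion}, the combination $f=\sum_i\lambda_i m(f_i)$ routes in $G$ a flow between $U$ and $\bar U$ of total value $\sum_i\lambda_iM_i=\Etree[\minCut{U}{V\setminus U}{T}]$, with congestion at most $\rho$. The flow value is preserved because each tree path between leaves $a,b$ maps to a walk in $G$ between $m_V(a)$ and $m_V(b)$, obtained by concatenating the paths $m_E(e)$. Every unit of this flow crosses the cut $(U,\bar U)$ in $G$, so its value is at most $\rho\cdot c_G(U,\bar U)$. This gives (ii).

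The main obstacle is making the flow-mapping step rigorous. The theorem is stated for multicommodity flows with demand pairs between tree nodes, so the flow $f_i$ must first be decomposed into leaf-to-leaf path flows. One then has to check that the mapped flow $m(f_i)$ has the same endpoints, which requires the leaves to map bijectively to $V$. It also has to be checked that walks, not only simple paths, still each cross the cut at least once, which holds since they start in $U$ and end in $\bar U$. I would handle this by path-decomposing each $f_i$ into commodities $(a,b)$ with $a\in U$, $b\in\bar U$, then applying the theorem verbatim.
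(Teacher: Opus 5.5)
Your proof is correct. Part (ii) is essentially the paper's argument: take a maximum $U$--$\bar U$ flow in each $T_i$, combine them via Theorem~\ref{thm:convex_decomposotion} into a flow in $G$ whose value is the expected tree min-cut and whose congestion is at most $\rho$, and observe that such a flow is bounded by $\rho\cdot c_G(U,\bar U)$. Your explicit decomposition into leaf-to-leaf commodities, and your check that mapped tree paths become $U$-to-$\bar U$ walks in $G$, only fill in details the paper leaves implicit.

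Part (i) is where you genuinely differ. The paper proves it by duality in the other direction. It takes a maximum $U$--$\bar U$ flow in $G$ of value $c_G(U,\bar U)$ and pushes it into $T$ with R\"{a}cke's flow-mapping theorem (Theorem~\ref{thm:flow_mapping}); congestion at most $1$ in $T$ then forces the tree min-cut to be at least that value. You instead argue directly from the definition of $c_T$. Every crossing edge $xy$ has its tree path hit by some edge $e$ of any separating set $F$, and $c(x,y)$ is counted in $c_T(e)$, so $c_T(F)\ge c_G(U,\bar U)$.

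Your route is more elementary and self-contained. It uses only the leaf bijection and the definition of tree capacities, not the edge map $m_E$, not max-flow/min-cut in $G$, and not the cited theorem (whose own proof rests on exactly this cut-counting fact). The paper's route is shorter given the black box, and it presents (i) and (ii) symmetrically as the two directions of the same flow/cut duality.
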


\subsection{Algorithm for the \textsc{Generalized Sparsest $K$-Bounded Cut}}

Armed with Theorem~\ref{thm:racke-dual}, we can now show the main result of this section:
\begin{theorem} \label{thm:phiK}
  There exists an $\alpha=\bigo\br{\log n}$-approximation algorithm for general graphs and an exact algorithm for trees for the \textsc{Generalized Sparsest $K$-Bounded Cut} problem, for any choice of $K$.
\end{theorem}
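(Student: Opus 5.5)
The plan has two parts: solve the problem exactly on trees, then lift the tree algorithm to general graphs through the R\"acke distribution of Theorem~\ref{thm:racke-dual}. The lifting step is routine. The tree step is where the real difficulty lies, and one reduction inside it is not yet proved.

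\emph{Lifting to general graphs.} Let $G=(V,E,c,w)$ and let $\brc{(T_i,\lambda_i)}_{i=1}^N$ be the decomposition of Theorem~\ref{thm:convex_decomposotion}. In each $T_i$, place the demand $w$ on the leaves via the bijection $m_V$, and give internal nodes zero demand. A cut $X\subseteq V_{T_i}$ induces the leaf bipartition $U=m_V(X\cap\text{leaves})$ of $V$. The two cuts cross exactly the same demand pairs, so $w_{T_i}(X,\bar X)=w(U,\bar U)$ and the constraint $0<w\le K$ transfers unchanged. By Theorem~\ref{thm:racke-dual}(i), $c_G(U,\bar U)\le \minCut{U}{V\setminus U}{T_i}\le c_{T_i}(X,\bar X)$. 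Hence mapping the exact tree optimum of each $T_i$ back to $G$ gives a feasible cut whose ratio in $G$ is at most its ratio in $T_i$.

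For the matching upper bound, fix an optimal $S^*$ in $G$. By Theorem~\ref{thm:racke-dual}(ii),
\[
\sum_i\lambda_i\,\minCut{S^*}{\bar S^*}{T_i}\le\rho\, c(S^*,\bar S^*),
\]
so some $T_i$ contains a cut separating the leaves of $S^*$ from those of $\bar S^*$ with cost at most $\rho\, c(S^*,\bar S^*)$. That tree cut has the same demand $w(S^*,\bar S^*)$, so it is feasible, and the tree optimum on $T_i$ therefore has ratio at most $\rho\cdot\phiwK(S^*)$. Returning the best of the $N$ mapped cuts gives $\alpha=\rho=\bigo(\log n)$.

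\emph{Exact algorithm on trees.} The target is to show that some optimal constrained cut in a tree crosses exactly one tree edge. The algorithm is then to enumerate the $n-1$ edge cuts, keep those with $0<w(S_e,\bar S_e)\le K$, and return the one with the smallest ratio.

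The argument goes by induction on the number of crossing edges. Let $S$ be an optimal cut whose crossing edge set $F$ has $|F|\ge 2$. Choose a component $H$ of $T\setminus F$ that is attached to the rest of the tree by a single edge $e\in F$; such an $H$ exists by taking a leaf of the tree obtained by contracting the components of $T\setminus F$. Let $S'$ be the cut obtained from $S$ by moving $H$ to the other side. Then $c(S')=c(S)-c(e)$ and $w(S,\bar S)=w(S',\bar S')+\Delta$, where $\Delta$ is the demand between $H$ and the side opposite to it in $S$, minus the demand between $H$ and the same side.

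\begin{itemize}
\item If $\Delta\le 0$ and $w(S',\bar S')>0$, then $S'$ has smaller cost, at least as much demand, and demand at most $K$ whenever the original demand was at most $K$ only if $\Delta=0$; this case needs care, since $\Delta<0$ increases the demand and $S'$ may then violate $K$.
\item If $\Delta>0$, then $\phiwK(S)$ is the mediant of $c(S')/w(S',\bar S')$ and $c(e)/\Delta$, so one of these two ratios is at most $\phiwK(S)$. If it is the first, $S'$ is feasible (its demand is smaller) and we recurse on it.
\end{itemize}

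The remaining subcase, where $c(e)/\Delta$ is the better ratio, is the crux, because $\Delta$ is not the demand of an actual cut. The single-edge cut $\{H\}$ has demand $w(H,V\setminus H)\ge\Delta$, which may exceed $K$. The plan here is to exploit the freedom in choosing $H$: pick $H$ to be a deepest boundary component (all of its other neighbours lie on the same side), or apply the exchange to the complementary side instead. The aim is to find an actual cut whose demand lies in $(0,K]$ and whose ratio is at most $c(e)/\Delta$.

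\emph{Fallback.} If the single-edge structure fails under the constraint, I would replace it with a dynamic program over the rooted tree that tracks, for each subtree, the minimum cost achievable for each possible amount of cut demand. The issue is that the demand between a subtree and the outside depends on how the outside is labelled, so the DP state would need to carry the side-wise demand profile of the subtree. This is only tractable after rounding, which would weaken the tree guarantee from exact to $(1+\varepsilon)$-approximate. That loss is harmless for the downstream $\bigo(\log n)$ bound.
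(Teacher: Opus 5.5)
Your lifting step is correct and is the paper's own argument: demands are copied onto the leaves, Theorem~\ref{thm:racke-dual}(i) maps each tree optimum back without increasing its ratio, and (ii) with averaging gives $\alpha=\rho$. The gap is the tree step. It cannot be closed along your route, because the single-edge claim you are aiming for is false.

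Take the path with edges $ab,bc$, $c(ab)=c(bc)=1$, $w(a,c)=100$, $w(a,b)=w(b,c)=1$, and $K=2$. Both single-edge cuts separate demand $101>K$, so the minimum over feasible single edges is over the empty set. Meanwhile $\{b\}$ is feasible with ratio $1$. In your exchange, both boundary components $\{a\}$ and $\{c\}$ have $\Delta=-99$, and moving either one raises the demand to $101>K$. This is exactly the $\Delta<0$ case you flagged, and there is no other choice of $H$.

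Single-source demands, which are what Lemma~\ref{lem:S1} actually feeds into this routine, do not help. Consider the star with center $x$ and leaves $r,a,b$, with $c(rx)=c(xb)=1$, $c(xa)=10$, $w(r,a)=1$, $w(r,b)=100$, and $K=2$. The optimum $\{x,a\}$ crosses two edges and has ratio $2$. The only feasible single edge, $xa$, has ratio $10$.

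The paper's proof of its tree lemma fails at the same point. It asserts $w_e\le w_T\br{S^*,\bar{S^*}}$ for every crossing edge $e$, which is false in the path example ($w_{ab}=101>2$). The mediant argument is only valid when every crossing edge of the optimum has $w_e\le K$, for instance when the constraint is vacuous. So the ``exact on trees'' part is not established, by you or by the paper. The $\bigo\br{\log n}$ bound inherits the gap, since the lifting needs a polynomial-time exact (or $\bigo(1)$-approximate) tree solver.

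An exact polynomial-time tree algorithm should not be expected at all. Take a star with center $x$ and leaves $r,v_1,\dots,v_n$. Set $c(xv_i)=2s_i$, $w(r,v_i)=s_i$, $w(r,x)=d$, $c(xr)=3T+d-2\sum_i s_i$ with $d=\max\brc{0,2\sum_i s_i-3T}$, and $K=T+d$. Cuts with $x$ on $r$'s side have ratio exactly $2$. A cut with $x$ on the far side is feasible only if its selected $s_i$ sum to at most $T$, and it has ratio at most $1$ only if they sum to at least $T$. Hence the optimum is at most $1$ if and only if some subset of the $s_i$ sums to $T$, which is Subset Sum.

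Your fallback does not rescue exactness either. It is approximate by design. Moreover, for general $w$ the crossing demand depends on the sides of arbitrary vertex pairs that are unrelated to the tree, so rounding magnitudes does not yield an evident polynomial-size DP state. For single-source demands (the case of Lemma~\ref{lem:S1}) there is an exact DP over subtrees indexed by the amount of cut demand, but it is only pseudo-polynomial.
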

\begin{proof}
In order to avoid ambiguity, by $\phi_{w,G}^K(X_{T_i})$ we mean the sparsity of the cut $\br{X_{T_i},\bar{X}_{T_i}}$ in the graph $G$ with respect to the demand function $w$ and the bound $K$.
The algorithm is as follows:
\begin{enumerate}
  \item Compute the convex combination of decomposition trees $\brc{(T_i,\lambda_i)}_{i=1}^{N}$ as in Theorem~\ref{thm:convex_decomposotion} with respect to the values of $c$ in $G$.
  \item For any $u,v\in V$, set $w_{T_i}(m'_V(u),m'_V(v))=w(u,v)$, i.e., the demand function is the same in $G$ and in every tree $T_i$. For any non-leaf vertex of $T_i$, $u$ and every $v\in V\br{T_i}$, set $w_{T_i}(u,v)=0$.
  \item For each tree $T_i$, compute an optimal sparsest $K$-bounded cut $\br{X_{T_i},\bar{X}_{T_i}}$ in $T_i$.
  \item Return the cut $\br{X_{T_i},\bar{X}_{T_i}}$ with the minimum sparsity in $G$, i.~e., $\argmin_{X_{T_i}, i\in [N]} \brc{\phi_{w,G}^K(X_{T_i})}$.
\end{enumerate}

To prove the approximation guarantee, let $\br{X^*,\bar{X}^*}$ be an optimal cut for the \textsc{Generalized Sparsest $K$-Bounded Cut} in $G$.
Since the demand function $w$ is the same in $G$ and in every tree $T$, the cut $X^*$ is feasible in every tree, i.e., $0<w_{T}\br{X^*,\bar{X}^*}\le K$.
Therefore by Theorem~\ref{thm:racke-dual},
\begin{align*}
  \Etree\left[\phi_{w,G}^K(X_{T})\right]
  &\leq
  \Etree\left[\frac{\minCut{X^*}{V\setminus X^*}{T}}{w_T\br{X^*,\bar{X}^*}}\right]
  \\&=
  \frac{1}{w\br{X^*,\bar{X}^*}}\,\Etree\left[\minCut{X^*}{V\setminus X^*}{T}\right]
  \\&\leq
  \frac{\rho\cdot c\br{X^*,\bar{X}^*}}{w\br{X^*,\bar{X}^*}}
  \\&=
  \rho\cdot\phi_{w,G}^K(X^*).
\end{align*}
By averaging, some tree $T$ in the support satisfies
$\phi_{w,G}^K(X_{T})\le \rho\cdot \phi_{w,G}^K(X^*)$, so the returned cut is an $\bigo\br{\log n}$-approximation.

It remains to justify that the problem is exactly solvable on trees.
It is a well known fact that an optimal sparsest cut in a tree is achieved by a single edge removal.
Below, we show that this is also the case for the \textsc{Sparsest $K$-Bounded Cut}.
For each edge $e$ of a tree $T$ we define $w_e$ to be the total demand cut by removing $e$ from $T$.
\begin{lemma}
For a tree $T=(V_T,E_T,c_T,w_T)$,
\[
  \phiwK(T)
  =
  \min_{e\in E_T, 0<w_e\le K} \brc{\frac{c_T(e)}{w_e}}.
\]
\end{lemma}\
\begin{proof}
Take an optimal cut $(S^*,\bar{S^*})$ and denote for brevity $E^*=E_T(S,\bar{S})$.
Then,
\[
  w_T\br{S^*,\bar{S^*}} \leq \sum_{e\in E^*} w_e.
\]
Since $w_T\br{S^*,\bar{S^*}}\le K$ and all $w_e\ge 0$, we have
$w_e\le w_T\br{S^*,\bar{S^*}}\le K$ for every $e\in E^*$.
Also $w_e>0$ whenever $e\in E^*$ (otherwise the edge can be trivially dropped from $E^*$).
Therefore,
\[
  \frac{c_T(E^*)}{w_T\br{S^*,\bar{S^*}}}
  \geq
  \sum_{e\in E^*} \frac{w_e}{\sum_{e'\in E^*}w_{e'}}\cdot\frac{c_T(e)}{w_e}
  \geq
  \min_{e\in E_T, 0<w_e\le K} \brc{\frac{c_T(e)}{w_e}}.
\]
so there exists an edge $e\in E_T$, such that the cut $V_T\setminus \brc{e}$ is optimal and the claim follows.
\end{proof}
Hence it suffices to scan all edges and pick the minimum feasible ratio $c_T(e)/w_e$, which is polynomial-time (in particular linear in the number of edges).
\end{proof}

\section{Applications}

Below we show few applications of the \textsc{Generalized Conductance}. We show how to use the approximation algorithm for the \textsc{Generalized Conductance} to obtain approximation algorithms for the \textsc{Quadratic Sparsest Cut}, \textsc{Graph Partitioning with Demands} and \textsc{Hierarchical Clustering with Demands} problems.
\subsection{A warm up: Quadratic Sparsest Cut}

In the problem we call the \textsc{Quadratic Sparsest Cut}, one is required to find a subset of vertices $S$ that minimizes 
\[\frac{c\br{S, \bar{S}}}{\br{\spr{S}\cdot \spr{\bar{S}}}^2}.\]

Observe, that if one assigns each pair of vertices $u,v$ a demand of $w\br{u,v}=1$, then
\[w\br{S, V}\cdot w\br{\bar{S}, V}=\br{\spr{S}+\spr{S}\cdot \spr{\bar{S}}}\cdot \br{\spr{\bar{S}}+\spr{S}\cdot \spr{\bar{S}}}=\spr{S}\cdot \spr{\bar{S}}+\spr{S}^2\cdot \spr{\bar{S}}+\spr{S}\cdot \spr{\bar{S}}^2+\br{\spr{S}\cdot \spr{\bar{S}}}^2.
\] 
Since $\spr{S},\spr{\bar{S}}\geq 1$, we get that 
\[\br{\spr{S}\cdot \spr{\bar{S}}}^2\leq w\br{S, V}\cdot w\br{\bar{S}, V}\leq 4\cdot\br{\spr{S}\cdot \spr{\bar{S}}}^2
\]
and the two objectives are equivalent up to a factor of $4$. Since $w$ is multiplicative, we obtain the following corollary:
\begin{corollary}
    There is a polynomial-time $\bigo(\sqrt{\log n})$-approximation algorithm for the \textsc{Quadratic Sparsest Cut}.
\end{corollary}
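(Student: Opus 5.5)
The plan is to derive this corollary directly from the preceding computation together with the multiplicative-demand theorem for the \textsc{Generalized Conductance}. Set $w(u,v)=1$ for every pair $u,v\in V$. This demand function is multiplicative with potential $\pi(u)=1$. By the theorem on multiplicative demands there is a polynomial-time algorithm that returns a cut $S$ with
\[
\psi_w(S)\le \bigo\br{\sqrt{\log n}}\cdot \psi_w(G).
\]

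Next I transfer this guarantee to the quadratic objective $q(S)=c\br{S,\bar{S}}/\br{\spr{S}\cdot\spr{\bar{S}}}^2$. The sandwich inequality displayed just before the statement, $\br{\spr{S}\spr{\bar{S}}}^2\le w\br{S,V}\,w\br{\bar{S},V}\le 4\br{\spr{S}\spr{\bar{S}}}^2$, holds for every cut $S$. It gives
\[
\tfrac14\, q(S)\le \psi_w(S)\le q(S).
\]

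Let $S^{q}$ be an optimal cut for $q$. Then
\[
q(S)\le 4\,\psi_w(S)\le \bigo\br{\sqrt{\log n}}\,\psi_w(G)\le \bigo\br{\sqrt{\log n}}\,\psi_w(S^{q})\le \bigo\br{\sqrt{\log n}}\,q(S^{q}).
\]
This is the claimed approximation factor. The running time is that of the multiplicative-demand algorithm, which is polynomial.

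The only thing to check is that the sandwich inequality is valid for every nontrivial cut, since it is applied both to the returned cut and to the optimum. For the upper bound, each of the four terms in the expansion of $w(S,V)\,w(\bar{S},V)$ is at most $\br{\spr{S}\spr{\bar{S}}}^2$ because $\spr{S},\spr{\bar{S}}\ge 1$. For the lower bound, the term $\br{\spr{S}\spr{\bar{S}}}^2$ itself appears in the sum. I do not expect any real obstacle; the argument is a constant-factor equivalence of objectives followed by invoking an earlier result.
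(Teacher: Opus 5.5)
You follow the paper's route exactly (unit demands, the displayed sandwich, then the multiplicative-demand theorem), and you inherit its error. The upper half of the sandwich is false, so your first step $q(S)\le 4\,\psi_w(S)$ fails. Only $(|S|\,|\bar S|)^2\le w(S,V)\,w(\bar S,V)$ survives.

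The displayed expansion uses $w(S,V)=|S|+|S|\,|\bar S|$, i.e.\ $w(S)=|S|$. With $w\equiv 1$, however, $w(S)$ counts the pairs inside $S$, so $w(S)\ge\binom{|S|}{2}$ under any convention; it equals $|S|^2$ under the literal definition in the preliminaries. Your verification only checks the algebra after that expansion. In fact $w(S,V)=\Theta(|S|\,n)$, so $w(S,V)\,w(\bar S,V)=\Theta(|S|\,|\bar S|\,n^2)$. Already for $|S|=1$ the product is at least $(n-1)\bigl(\binom{n-1}{2}+n-1\bigr)=n(n-1)^2/2$, which exceeds $4(n-1)^2$ for $n>8$. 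This agrees with the paper's own multiplicative theorem: for $\pi\equiv 1$ it shows that $\psi_w$ is, up to constants, the uniform sparsest-cut objective $c(S,\bar S)/(|S|\,|\bar S|)$ divided by $n^2$.

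Hence $q(S)/\psi_w(S)=\Theta\bigl(n^2/(|S|\,|\bar S|)\bigr)$, which is $\Theta(n)$ on very unbalanced cuts, and it is the approach that fails, not only the proof. Let $v$ be joined by a unit-capacity edge to one of two cliques on $(n-1)/2$ vertices with huge internal capacities, the cliques being joined by edges of total capacity $n$. The only cheap cuts are $\{v\}$ and the two clique cuts, and $\{v\}$ minimizes $\psi_w$ by a constant factor. Yet $q(\{v\})=1/(n-1)^2$, while the clique cut has $q=16n/(n^2-1)^2$. So even an exact \textsc{Generalized Conductance} oracle is only an $\Omega(n)$-approximation here.

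The statement itself is still true, but it needs a different argument that uses uniform sparsest cut directly. Repeatedly run an $\alpha=\bigo(\sqrt{\log n})$-approximation for uniform sparsest cut (Arora--Rao--Vazirani) on the remaining graph and delete the smaller side $X_i$. Set $U_i=X_1\cup\dots\cup X_i$ and output the prefix $U_i$ with the least $q$-value.

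For the analysis, let $S^*$ be $q$-optimal with $a=|S^*|\le n/2\le b=|\bar S^*|$ and $c^*=c(S^*,\bar S^*)$. While fewer than $a/2$ vertices have been deleted, $S^*$ restricted to the remaining graph is a cut of sparsity at most $4c^*/(ab)$. So the edges from $X_i$ to the rest of the remaining graph cost at most $4\alpha c^*|X_i|\,n/(ab)\le 8\alpha c^*|X_i|/a$. Summing up to the first $i$ with $|U_i|\ge a/2$ gives $c(U_i,\bar U_i)\le 8\alpha c^*|U_i|/a$. Moreover $|U_i|<n/2+a/4$, so $n-|U_i|>3n/8\ge 3b/8$. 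Therefore $q(U_i)\le\frac{1024}{9}\,\alpha\,q(S^*)$.
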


\subsection{Graph Partitioning with Demands}
\newcommand{\OPTdem}[2]{\OPT_{#2}\br{#1}}

Below, we show how to use the approximation algorithm for the \textsc{Generalized Conductance} to obtain an approximation algorithm for \textsc{Graph Partitioning with Demands}. In the latter problem, we are given a graph $G=(V,E,c,w)$ and a fixed constant $\rho\in(0,1)$, and the goal is to find a subset of edges $C\subseteq E$ that minimizes $c\br{C}$ such that for every $H\in G\setminus C$, we have $w\br{V_H}\leq \rho\cdot w\br{V}$.
Let $\OPTdem{G}{\rho}=c\br{C^*}$, where $C^*$ is an optimal cut.

Let $\rho<\eta<1$ be a constant. We will obtain a bi-criteria approximation algorithm for \textsc{Graph Partitioning with Demands} that returns a subset of edges $C$ such that $c\br{C}=\bigo_{\rho, \eta}\br{\gamma}\cdot \OPTdem{G}{\rho}$ and for every $H\in G\setminus C$, we have $w\br{V_H}\leq \eta \cdot w\br{V}$, where $\gamma$ is the approximation ratio for the \textsc{Generalized Conductance}. This is done in two parts. Firstly, we show an algorithm for a limited range of $\eta$ and then extend it to the entirety of $\eta\in(\rho,1)$.

Let $
\delta_{\rho}=\min\brc{1/4,1-\rho}$ and $
\eta\in\br{\max\brc{1/2,1-\delta_{\rho}},1}$.
The analysis uses a generalization of the standard technique of dealing with \textsc{Balanced Cut}/\textsc{Graph Partitioning} problems using \textsc{Sparsest Cut} \cite{MulticommodityMaxFlowMinCutTheoremsAndTheirUseInDesigningApproximationAlgorithms,ImprovedApproximationAlgorithmsForMinimumWeightVertexSeparators}. However, due to the fact that we need to control both the demand in the components and the demand crossing the cut, a more careful consideration is required. The algorithm is simple:
\begin{enumerate}
    \item $C\gets\emptyset$
    \item $G'\gets \br{V, E,c,w}$
    \item While $w_{G'}\br{V_{G'}}> \eta \cdot w\br{V}$ do
    \begin{itemize}
        \item Use the $\gamma$-approximation algorithm for the \textsc{Generalized Conductance} to find a cut $(S,\bar{S})$ in $G'$, where $w_{G'}\br{S}\leq w_{G'}\br{\bar{S}}$.
        \item $C\gets C\cup E_{G'}\br{S, \bar{S}}$
        \item Remove the vertices in $S$ from $G'$, i.e., $G'\gets G'[V_{G'}\setminus S]$
    \end{itemize}
    \item Return $C$
\end{enumerate}

The above algorithm is characterized by the following claim.
\begin{theorem}\label{thm:gpd-bicriteria}
    For every fixed $\rho\in(0,1)$ and every fixed $\eta\in\br{\max\brc{1/2,1-\delta_{\rho}},1}$, the above algorithm runs in polynomial-time and returns, a subset of edges $C$ such that
    \[
    c\br{C}=\bigo\br{\gamma/(\eta+\delta_{\rho}-1)^2}\cdot \OPTdem{G}{\rho}
    \]
    and for every $H\in G\setminus C$, we have $
    w\br{V_H}\leq \eta \cdot w\br{V}$,
    where $\delta_{\rho}=\min\brc{1/4,1-\rho}$ and $\gamma$ is the approximation ratio for the \textsc{Generalized Conductance}.
\end{theorem}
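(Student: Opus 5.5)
Every iteration removes a nonempty $S$, so the loop runs at most $n$ times and the running time is polynomial. The component bound is also routine. When the loop stops, the remaining graph $G'$ satisfies $w_{G'}(V_{G'})\le \eta\, w(V)$. Each removed set $S$ has $w_{G'}(S)\le w_{G'}(\bar S)$, hence $w(S)\le w_{G'}(V_{G'})/2\le w(V)/2\le \eta\, w(V)$, using $\eta\ge 1/2$. Every component $H$ of $G\setminus C$ lies entirely inside one removed set or inside the final $G'$, because all edges leaving a removed set were put into $C$. So $w(V_H)\le \eta\, w(V)$. The real work is the cost bound, which I would prove by a charging argument in the style of Leighton--Rao.

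\textbf{Step 1: a good cut exists in every iteration.} Fix an iteration with current graph $G'$, write $W'=w_{G'}(V_{G'})>\eta\, w(V)$, and let $C^*$ be an optimal partition for $\rho$. The components of $G'\setminus C^*$ are subsets of components of $G\setminus C^*$, so each has $w_{G'}$-demand at most $\rho\, w(V)$, and $c(C^*\cap E_{G'})\le \OPTdem{G}{\rho}$. I would group these components greedily into two sides $A,\bar A$ with $c_{G'}(A,\bar A)\le \OPTdem{G}{\rho}$ and both $w_{G'}(A,V_{G'})$ and $w_{G'}(\bar A,V_{G'})$ at least a constant times $W'$; I expect a constant like $\delta_\rho/2$ to be achievable. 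If some component already has demand at least roughly $\delta_\rho W'$, take it as one side. Otherwise add components until the accumulated demand passes a threshold; every small component moves the total by only a small amount, so the stopping point leaves both sides balanced. I would measure balance through $w(X,V)\ge w(X)$, since cross demand only helps. This gives
\[
\psi_w(A)\le \frac{\OPTdem{G}{\rho}}{\Omega(\delta_\rho^2)\,W'^2},
\]
and the $\gamma$-approximation then returns $S$ with $c(S,\bar S)\le \gamma\,\OPTdem{G}{\rho}\cdot w_{G'}(S,V)\,w_{G'}(\bar S,V)/(\Omega(\delta_\rho^2)W'^2)$. Using $w_{G'}(\bar S,V)\le W'$, this becomes $c(S,\bar S)\le \bigo(\gamma/\delta_\rho^2)\,\OPTdem{G}{\rho}\cdot w_{G'}(S,V)/W'$.

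\textbf{Step 2: summing the costs.} The cost of an iteration is charged to the demand it removes, $w_{G'}(S,V)$, which is exactly $W'-w_{G'[\bar S]}(\bar S)$, the drop in remaining demand. Since $W'>\eta\, w(V)$ throughout, the charges sum to at most
\[
\bigo\bigl(\gamma/(\eta\,\delta_\rho^{2})\bigr)\,\OPTdem{G}{\rho}\cdot \frac{w(V)-\eta\, w(V)+\text{(last overshoot)}}{w(V)}.
\]
The final iteration may drop the demand far below $\eta\, w(V)$. Its cost, however, is bounded by the single-iteration bound with $w_{G'}(S,V)\le W'$, namely $\bigo(\gamma/\delta_\rho^2)\,\OPTdem{G}{\rho}$, so it adds only a constant.

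\textbf{Main obstacle.} The hard part is Step 1 combined with the exact dependence $(\eta+\delta_\rho-1)^{-2}$. Balance must be guaranteed relative to $W'$ rather than $w(V)$. Components of $G\setminus C^*$ are bounded by $\rho\, w(V)$, whereas $W'$ can be as small as $\eta\, w(V)$. A single component may therefore hold a fraction as large as $\rho/\eta$ of $W'$, and the complementary side keeps only about $W'-\rho\, w(V)\ge (\eta-\rho)w(V)\ge(\eta+\delta_\rho-1)w(V)$. I would therefore redo the grouping in Step 1 with slack parameter $\epsilon=\eta+\delta_\rho-1$, showing both sides carry demand $\Omega(\epsilon)\, w(V)$. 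Squaring this slack gives the $1/\epsilon^2$ factor in the final bound.
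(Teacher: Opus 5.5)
Your overall scheme matches the paper's. In each iteration you compare the $\gamma$-approximate cut with a cut $(A,\bar A)$ of $G'$ built from components of $G\setminus C^*$, so that $c(A,\bar A)\le \OPT_\rho(G)$. You then charge $c(S,\bar S)$ to $w_{G'}(S,V_{G'})$ and sum. The running-time and component-size parts are fine.

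The gap is Step~1, which is the heart of the argument. As you state it, with both sides carrying a constant fraction such as $\delta_\rho/2$ of $W'$, it is false. A component of $G\setminus C^*$ with demand $\rho\,w(V)$ can lie intact in $G'$ while $W'$ is barely above $\eta\,w(V)$. Whichever side avoids that component then has $w_{G'}(\cdot,V_{G'})\le W'-\rho\,w(V)$, a vanishing fraction of $W'$ as $\eta\to\rho$ (allowed when $\rho\ge 3/4$). So the Step~2 bound $\bigo(\gamma/\delta_\rho^2)\cdot\OPT_\rho(G)\cdot w_{G'}(S,V_{G'})/W'$ is unsupported.

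Your last paragraph diagnoses this correctly, but the replacement claim (both sides $\Omega(\epsilon)\,w(V)$) is only asserted. The justification you sketch for the greedy phase also fails. Adding a component $H$ to the current set $A'$ raises $w(A)$ by $w(V_H)+w(A',V_H)$, and the cross term need not be small, so ``small components move the total only a little'' does not hold. Moreover, lower-bounding each side's $w(X,V_{G'})$ by its internal demand $w(X)$ cannot work in general. If all demand is incident to one vertex, the side not containing it has zero internal demand in every partition.

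The paper closes this in two moves. First, it builds one partition $(A,B)$ of the components of $G\setminus C^*$ in $G$ itself (Lemma~\ref{lem:ABpartition}). Apart from the case of a single component of demand at least $\delta_\rho w(V)$, it adds components until $w(A)\ge \delta_\rho w(V)$. If the last step was driven by a cross term $w(A',V_H)\ge\delta_\rho w(V)$, it stops at $A'$ and lets that cross demand certify both sides. Otherwise $w(A)<3\delta_\rho w(V)$, so $w(B,V)=w(V)-w(A)\ge\delta_\rho w(V)$.

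Second, it restricts to $G'$. Every pair counted in $w(A,V)$ but not in $w(A\cap V_{G'},V_{G'})$ has an endpoint outside $V_{G'}$, so $w(A\cap V_{G'},V_{G'})\ge w(A,V)-(w(V)-w(V_{G'}))>(\eta+\delta_\rho-1)\,w(V)$, and likewise for $B$. This is exactly where $(\eta+\delta_\rho-1)^{-2}$ comes from. Normalizing by $w(V)$ rather than $W'$, each iteration costs at most $\gamma\,\OPT_\rho(G)\,w_{G'}(S,V_{G'})/((\eta+\delta_\rho-1)^2 w(V))$. The pairs counted in $w_{G'}(S,V_{G'})$ are disjoint across iterations, so they sum to at most $w(V)$ and no overshoot bookkeeping is needed.

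Your per-iteration regrouping can also be completed. Add components of $G'\setminus C^*$ to $A$ until $w(A,V_{G'})\ge t:=(\eta-\rho)w(V)/2$. With $H$ the last component added and $A'=A\setminus V_H$, you get $w(A)\le w(A',V_{G'})+w(V_H)<t+\rho\,w(V)$. Hence $w(\bar A,V_{G'})=W'-w(A)>t$, and $\eta-\rho\ge\eta+\delta_\rho-1$. Some argument of this kind has to be supplied for the proof to go through.
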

\begin{proof}
    We first prove the component-size guarantee. Consider all components $H\in G\setminus C$. If $H$ is the same graph as $G'$ after the last iteration of the while loop, then by the halting condition of this loop we have $w\br{V_H}\leq \eta \cdot w\br{V}$. Otherwise, $H$ was the smaller side of the cut in some iteration of the while loop, and thus $w\br{V_H}\leq w\br{V}/2\leq \eta \cdot w\br{V}$, where we use the assumption $\eta\in\br{\max\brc{1/2,1-\delta_{\rho}},1}$ (in particular, $\eta\geq 1/2$). In either case, $w\br{V_H}\leq \eta \cdot w\br{V}$.

    Consider an optimal cat, i.e., the one that achieves $\OPTdem{G}{\rho}=c\br{C^*}$.
    Now we want to reason about $c\br{C}$.
    To do so, we consider the components $G\setminus C^*$ produced by the optimal solution $C^*$ and partition them into two vertex sets $A$ and $B$, each inducing a significant fraction of the total demand as stated in the next lemma.
    \begin{lemma} \label{lem:ABpartition}
                For every graph $G$ and demand function $w$, there exists a partition of the components of $G\setminus C^*$ into two vertex sets $A$ and $B$ such that $w\br{A, V}\geq \delta_{\rho}\cdot w\br{V}$ and $w\br{B, V}\geq \delta_{\rho}\cdot w\br{V}$.
    \end{lemma}
    \begin{proof}
    We have the following cases:
    \begin{enumerate}
                \item There exists $H\in G\setminus C^*$ such that $w\br{V_H}\geq \delta_{\rho}\cdot w\br{V}$.
        In this case, we set $A=V_H$ and $B=V\setminus V_H$. We have
                \[w\br{A, V}\geq w\br{A} \geq \delta_{\rho}\cdot w\br{V}\]
         and 
                \[w\br{B, V}=w\br{V}-w\br{A}\geq \br{1-\rho}\cdot w\br{V}\geq \delta_{\rho}\cdot w\br{V}\]
                since $w\br{A}=w\br{V_H}\leq \rho \cdot w\br{V}$.
            The latter is by the fact that by definition of $C^*$, each connected component $H\in G\setminus C^*$ satisfies $w(V_H)\leq \rho \cdot w(V)$; the last inequality uses the definition $\delta_{\rho}=\min\brc{1/4,1-\rho}$.
        \item For every $H\in G\setminus C^*$, we have $w\br{V_H}\leq \delta_{\rho}\cdot w\br{V}$. We proceed in iterations. Start with $A=\emptyset$ and keep adding vertices of components of $G\setminus C^*$ to $A$ until $w\br{A}\geq \delta_{\rho}\cdot w\br{V}$. Let $H$ be the last component added to $A$ and let $A'$ be $A$ one iteration before that, i.e., $A'=A\setminus V_H$.
        Consider the two cases:
        \begin{itemize}
                        \item If $w\br{A', V_H}\geq \delta_{\rho}\cdot w\br{V}$, then $A=A'$ and $B=V\setminus A'$. We have
            \[
                        w\br{A, V}\geq w\br{A, B} \geq w\br{A', V_H}\geq \delta_{\rho}\cdot w\br{V}
            \]
                        and similarly, using $w\br{A,B}=w\br{B,A}$ for disjoint $A$ and $B$, $w\br{B, V}\geq \delta_{\rho}\cdot w\br{V}$.
            \item Otherwise, $A$ stays unchanged and we set $B=V\setminus A$. We have $w\br{A, V}\geq \delta_{\rho}\cdot w\br{V}$ by to the halting condition of the above greedy rule, and
              \begin{align*}
              w\br{B, V}
              &= 
              w\br{V}-w\br{A}= w\br{V}-\br{w\br{A'}+w\br{A', V_H}+w\br{V_H}}
              \\&\geq 
                            w\br{V}-3\delta_{\rho}\cdot w\br{V}\geq \delta_{\rho}\cdot w\br{V}.
            \end{align*}
                        where the last inequality follows from $\delta_{\rho}\leq 1/4$, again by $\delta_{\rho}=\min\brc{1/4,1-\rho}$.
        \end{itemize}
    \end{enumerate}
    This concludes the proof of the lemma.
  \end{proof}
    Take the partition $\br{A,B}$ from Lemma~\ref{lem:ABpartition}.
        Suppose that we are at some iteration when $w\br{V_{G'}}> \eta \cdot w\br{V}$. We have that
    \[
        w\br{A\cap V_{G'}, V_{G'}}\geq w\br{A, V}-w\br{V\setminus V_{G'}}\geq \delta_{\rho}\cdot w\br{V}-\br{1-\eta}\cdot w\br{V}=\br{\eta+\delta_{\rho}-1}\cdot w\br{V}.
    \]
        Here we used the halting condition $w\br{V_{G'}}>\eta \cdot w\br{V}$, i.e. $w\br{V\setminus V_{G'}}<\br{1-\eta}\cdot w\br{V}$.
        Symmetrically, for $B$ we get that, $w\br{B\cap V_{G'}, V_{G'}}\geq \br{\eta+\delta_{\rho}-1}\cdot w\br{V}$. Since $\eta>1-\delta_{\rho}$ by assumption on $\eta$, we have $\eta+\delta_{\rho}-1>0$. Let $\br{S, \bar{S}}$ be the cut returned by the $\gamma$-approximation algorithm for the \textsc{Generalized Conductance} in $G'$. We have
\begin{align*}
        \frac{c\br{S, \bar{S}}}{w\br{S, V_{G'}}\cdot w\br{\bar{S}, V_{G'}}} & \leq \gamma \cdot \frac{c\br{A, B}}{w\br{A\cap V_{G'}, V_{G'}}\cdot w\br{B\cap V_{G'}, V_{G'}}} \\
        &\leq \frac{\gamma\cdot c\br{A, B}}{\br{\eta+\delta_{\rho}-1}^2w\br{V}^2}
\end{align*}

Now, by rearranging the above inequality we get:
\begin{align*}
        c\br{S, \bar{S}} & \leq \frac{\gamma\cdot c\br{A, B}}{\br{\eta+\delta_{\rho}-1}^2w\br{V}^2}\cdot w\br{S, V_{G'}}\cdot w\br{\bar{S}, V_{G'}} \\
        &\leq  \frac{\gamma\cdot \OPTdem{G}{\rho}\cdot w\br{S, V_{G'}}}{\br{\eta+\delta_{\rho}-1}^2w\br{V}}
\end{align*}
since trivially we have $c\br{A, B}\leq \OPTdem{G}{\rho}$ and $w\br{\bar{S}, V_{G'}}\leq w\br{V_{G'}}\leq w\br{V}$.

Let $\ell$ be the index of some iteration, let $G_{\ell}'$ be $G'$ at this iteration and let $\br{S_{\ell}, \bar{S_{\ell}}}$ be the cut at this iteration. By summing the above inequality over all iterations we get:
\begin{align*}
    c\br{C} & \leq \sum_{\ell} c\br{S_{\ell}, \bar{S_{\ell}}} \leq \frac{\gamma\cdot \OPTdem{G}{\rho}}{\br{\eta+\delta_{\rho}-1}^2w\br{V}}\cdot \sum_{\ell} w\br{S_{\ell}, V_{G'_{\ell}}} \\
    &\leq \frac{\gamma\cdot \OPTdem{G}{\rho}}{\br{\eta+\delta_{\rho}-1}^2w\br{V}}\cdot w\br{V} = \frac{\gamma}{\br{\eta+\delta_{\rho}-1}^2}\cdot \OPTdem{G}{\rho}
\end{align*}
where the last inequality follows since the sets $S_{\ell}$ are disjoint and thus $\sum_{\ell} w\br{S_{\ell}, V_{G'_{\ell}}}\leq w\br{V}$.
\end{proof}

As an immediate corollary we get:
\begin{corollary}
    For fixed $\rho\in(0,1)$ and $\eta\in\br{\max\brc{1/2,1-\delta_{\rho}},1}$ (where $\delta_{\rho}=\min\brc{1/4,1-\rho}$), there is a pseudo-approximation algorithm for \textsc{Graph Partitioning with Demands} with the following guarantees:
    \[
    c\br{C}\leq \bigo\br{\frac{\log n}{\br{\eta+\delta_{\rho}-1}^2}}\cdot \OPTdem{G}{\rho}
    \quad\text{on general graphs,}
    \]
    \[
    c\br{C}\leq \bigo\br{\frac{1}{\br{\eta+\delta_{\rho}-1}^2}}\cdot \OPTdem{G}{\rho}
    \quad\text{on trees,}
    \]
    \[
    c\br{C}\leq \bigo\br{\frac{\sqrt{\log n}}{\br{\eta+\delta_{\rho}-1}^2}}\cdot \OPTdem{G}{\rho}
    \quad\text{for multiplicative demands.}
    \]
    In all cases, every connected component $H$ of $G\setminus C$ satisfies
    \[
    w\br{V_H}\leq \eta \cdot w\br{V}.
    \]
\end{corollary}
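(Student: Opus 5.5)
The corollary follows by plugging the available approximation ratios $\gamma$ for the \textsc{Generalized Conductance} into Theorem~\ref{thm:gpd-bicriteria}. That theorem already gives $c\br{C}=\bigo\br{\gamma/(\eta+\delta_{\rho}-1)^2}\cdot\OPTdem{G}{\rho}$ and the component bound $w\br{V_H}\leq\eta\cdot w\br{V}$, for every fixed $\rho\in(0,1)$ and every $\eta$ in the stated range. So the only work is to certify the value of $\gamma$ in each of the three settings, and to check that the algorithm is polynomial.

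\emph{General graphs.} The theorem of Section~3 gives an $\bigo\br{\alpha+\beta}$-approximation. Theorem~\ref{thm:phiK} gives $\alpha=\bigo\br{\log n}$. The $\beta=\bigo\br{\log n}$ bound for \textsc{Generalized $k$-Multicut} comes from the cited combination of partial covering with R\"acke trees. Hence $\gamma=\bigo\br{\log n}$. One subtlety is that the while loop calls the oracle on the shrinking graph $G'$, whose vertex count is at most $n$. The guarantee is therefore uniformly $\bigo\br{\log n}$ across all calls, and the inequality in the proof of Theorem~\ref{thm:gpd-bicriteria} holds at each iteration with the same $\gamma$.

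\emph{Multiplicative demands.} I invoke the $\bigo\br{\sqrt{\log n}}$ theorem for multiplicative $w$. I must check that the restriction of $w$ to $V_{G'}$ is still multiplicative, which holds with the same potential $\pi$ restricted.

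\emph{Trees.} Here I need $\gamma=\bigo\br{1}$. The sparsest $K$-bounded cut is solved exactly on trees (Theorem~\ref{thm:phiK}), so $\alpha=1$. For $\beta$, I will use that \textsc{Generalized $k$-Multicut} on trees admits a constant-factor approximation, the tree case of the cited partial covering framework, which is exactly the tree box in the reduction map. I must also note that $G'$ stays a forest under vertex deletion. If $G'$ is disconnected, I either handle each component or observe that the Section~3 scheme never used connectivity.

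Polynomial running time follows because each iteration removes a nonempty set $S$, so there are at most $n$ iterations of a polynomial oracle. This step is routine. The main obstacle is justifying the $\bigo\br{1}$ bound for $\beta$ on trees, since the paper only cites it. I would first try to derive it directly from the partial-covering/LP approach on trees.
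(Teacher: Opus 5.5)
Your proposal is correct and follows the paper's route: the corollary is Theorem~\ref{thm:gpd-bicriteria} with the three available values of $\gamma$ for \textsc{Generalized Conductance} plugged in. Your extra checks are sound details the paper leaves implicit: a single uniform $\gamma$ for all calls on the shrinking $G'$, preservation of multiplicativity and of the forest property under vertex deletion, and at most $n$ iterations. One small point in the tree case: the \textsc{Sparsest $K$-Bounded Cut} step runs on the auxiliary graph with the universal vertex $r$, which is not a tree even when the input is, so before citing the tree part of Theorem~\ref{thm:phiK} you should keep only one zero-capacity edge from $r$ into each component, which changes no cut value or demand and leaves a tree.
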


Next, we show how to extend the above procedure into a recursive algorithm that works for any choice of $0<\rho < \eta<1$. 
Since for $\rho\geq 3/4$ the previous procedure works for any $\eta\in(\rho,1)$, we focus on the case $\rho<3/4$. In such case, observe that $\delta_{\rho}=1/4$, and therefore we know, that the previous procedure works for any $\eta\in(3/4,1)$.

Fix $k'=\eta\cdot w\br{V}$ and $k=\rho\cdot w\br{V}$ to be global parameters for the recursion, and note that $k/k'=\rho/\eta$. Given a graph $H$, the recursive algorithm proceeds as follows:
\begin{enumerate}
    \item If $w\br{V_H}\leq k'$, return $\emptyset$.
    \item $\eta_0\gets \max\brc{4/5,\, k'/w\br{V_H}}$.
    \item Run the algorithm of Theorem~\ref{thm:gpd-bicriteria} on $H$ with parameter $\eta_0$ and let $C$ be the returned subset of edges.
    \item $C_H\gets C$.
    \item For every connected component $H'$ of $H\setminus C$, recursively partition $H'$ and add the returned edges $C_{H'}$ to $C_H$.
    \item Return $C_H$.
\end{enumerate}

\begin{theorem}\label{thm:gpd-extension}
    Fix $0<\rho\leq \eta< 1$, and let $
        \kappa\br{\rho,\eta}=\max\brc{400,\ \frac{1}{\br{\br{1-\rho/\eta}\cdot 4/5}^2}}.$
    The recursive algorithm runs in polynomial time and returns a subset of edges $C_G$ such that every connected component $H$ of $G\setminus C_G$ satisfies $w\br{V_H}\leq \eta\cdot w\br{V}$, and
    \[
        c\br{C_G}\leq d\cdot \kappa\br{\rho,\eta}\cdot \gamma\cdot \OPTdem{G}{\rho},
    \]
    where $d=\cl{\log_{5/4}\br{1/\eta}}$, $C^*$ is an optimal solution and $\gamma$ is the approximation ratio for the \textsc{Generalized Conductance}.
\end{theorem}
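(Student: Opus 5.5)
We prove Theorem~\ref{thm:gpd-extension} by analysing the recursion tree level by level. There are three parts: correctness of component sizes, a bound on recursion depth, and a per-level cost bound charged against $\OPTdem{G}{\rho}$.

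\textbf{Correctness.} The algorithm recurses on every component of weight above $k'$. It returns $\emptyset$ only when $w(V_H)\le k'=\eta\cdot w(V)$. Hence every final component satisfies the size guarantee, provided the recursion terminates.

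\textbf{Depth.} Take a call on $H$ with $w(V_H)>k'$. The parameter is $\eta_0\ge 4/5$, and since $w(V_H)>k'$ we have $k'/w(V_H)<1$, so $\eta_0<1$. Theorem~\ref{thm:gpd-bicriteria} then returns components $H'$ with $w(V_{H'})\le \eta_0\, w(V_H)$. There are two cases:
\begin{itemize}
\item If $\eta_0=k'/w(V_H)$, every child has weight at most $k'$ and is a leaf.
\item Otherwise every child has weight at most $\frac45 w(V_H)$.
\end{itemize}
So after $j$ levels the weight is at most $(4/5)^j w(V)$, and the recursion stops once this drops to $\eta\, w(V)$. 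This gives at most $d=\cl{\log_{5/4}(1/\eta)}$ levels with nontrivial work. Polynomial running time follows, because each level partitions vertex sets and performs polynomially many calls.

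\textbf{Applicability of Theorem~\ref{thm:gpd-bicriteria} inside $H$.} Restrict $C^*$ to $H$. Its components inside $H$ have weight at most $k=\rho\, w(V)$. Relative to $w(V_H)$ this is the ratio $\rho_H=k/w(V_H)$, and $\rho_H<k/k'=\rho/\eta<1$ because $w(V_H)>k'$. Also $c(C^*\cap E_H)\le \OPTdem{G}{\rho}$ restricted to $H$. We apply the bound with parameters $\rho_H$ and $\eta_0$.

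\textbf{Per-call cost.} We need $\eta_0+\delta_{\rho_H}-1$ bounded below. There are two cases:
\begin{itemize}
\item If $\rho_H\le 3/4$, then $\delta_{\rho_H}=1/4$ and $\eta_0+\delta-1\ge 4/5+1/4-1=1/20$. This gives the factor $400$.
\item If $\rho_H>3/4$, then $\delta=1-\rho_H$ and $\eta_0-\rho_H\ge \max\{4/5,\,k'/w(V_H)\}-k/w(V_H)$. With $x=1/w(V_H)$ this is at least $\max\{4/5,\,k'x\}\cdot(1-\rho/\eta)$, because $kx=(\rho/\eta)k'x\le(\rho/\eta)\max\{4/5,k'x\}$. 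Hence the gap is at least $\frac45(1-\rho/\eta)$, matching the second term of $\kappa$.
\end{itemize}
Also $\eta_0\ge 1/2$ and $\eta_0>1-\delta_{\rho_H}$, so the hypotheses of Theorem~\ref{thm:gpd-bicriteria} hold. Each call therefore costs at most $\kappa(\rho,\eta)\,\gamma\, c(C^*\cap E_H)$.

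\textbf{Summing.} Calls at the same recursion level act on vertex-disjoint subgraphs $H$. The edge sets $C^*\cap E_H$ are therefore disjoint, and they sum to at most $c(C^*)=\OPTdem{G}{\rho}$. Summing over at most $d$ levels gives the claim.

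\textbf{Main obstacle.} The delicate points are the case analysis on $\rho_H$ and verifying that the gap $\eta_0-\rho_H$ scales correctly. A further subtlety is that Theorem~\ref{thm:gpd-bicriteria} compares with the optimum for $H$ under its own total weight $w(V_H)$. This optimum is at most the cost of $C^*\cap E_H$, since that set is feasible for ratio $\rho_H$ in $H$. I would state this explicitly as a lemma before summing.
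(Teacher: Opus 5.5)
Your proof is correct and is essentially the paper's argument: the same two-case bound $1/(\eta_0+\delta_{\rho_H}-1)^2\le\kappa(\rho,\eta)$, the same depth bound $d$, and the same charging of each call to a disjoint piece of an optimal solution. The only differences are organizational: you unroll the paper's induction on $d(H)$ into a level-by-level sum charged to restrictions of the single global $C^*$ (the paper instead charges to the subproblem optima $\OPT_{\rho_H}(H)$), and your $\max\{4/5,k'/w(V_H)\}$ estimate handles the paper's base case and inductive step at once.
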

\begin{proof}
    It is easy to see that the algorithm runs in polynomial time. Hence, we focus on the approximation ratio.

    For a graph $H$ with $w\br{V_H}\geq k'$ (recall $k'=\eta\cdot w\br{V}$) we set $\eta_H=k'/w\br{V_H}$ and $\rho_H=k/w\br{V_H}$ (recall $k=\rho\cdot w\br{V}$); observe that $\rho_H/\eta_H=k/k'=\rho/\eta$.
    Recall that $\OPTdem{H}{\rho_H}$ is the cost of the optimal \textsc{Graph Partitioning with Demands} solution on $H$ with threshold $\rho_H$.

For a graph $H$, let $d\br{H}$ denote the number of recursion levels the algorithm performs on $H$. Since every recursive call is invoked with $\eta_0\geq 4/5$ and produces components of demand at most $\eta_0 \cdot w\br{V_H}\leq \frac{4}{5} w\br{V_H}$, this also means that for every child $H'\in H\setminus C$ in the recursion tree, we get $\eta_{H'}\geq 5/4\cdot \eta_H$. Because the last partitioning step happens when $\eta_0=\eta_H\geq 4/5$, the recursion depth is bounded by
\[
    d\br{H}\leq \cl{\log_{5/4}\br{w\br{V_H}/k'}}.
\]
due to the fact after reaching $\eta_{H}\geq 4/5$ the algorithm perform one last partitioning step, i.e., the last recursive call. Since $\log_{5/4}\br{w\br{V_H}/k'}$ is the number of iterations required to reach $\eta_H\geq 1$, which is one more than we need to reach $\eta_H\geq 4/5$, we get that the above quantity is correct.

Now, we prove by induction on $d\br{H}$ that the recursive algorithm returns a subset of edges of cost at most $d\br{H}\cdot \kappa\br{\rho,\eta}\cdot \gamma\cdot \OPTdem{H}{\rho_H}$.

    \emph{Base cases, $d\br{H}=0$ and $d\br{H}=1$.} If $d\br{H}=0$, then $w\br{V_H}\leq k'$, since the algorithm terminates without any partitioning, and the cost is $0$. If $d\br{H}=1$, then $k'/w\br{V_H}\geq 4/5$, so $\eta_0=\eta_H=k'/w\br{V_H}$, and after a single partitioning step for each $H'\in H\setminus C$, $w\br{V_{H'}}\leq \eta_0 \cdot w\br{V_H}=k'$; hence the algorithm terminates. By the explicit bound established in the proof of Theorem~\ref{thm:gpd-bicriteria}, this step has cost at most $\frac{\gamma}{\br{\eta_H+\delta_{\rho_H}-1}^2}\cdot \OPTdem{H}{\rho_H}$. We bound $\frac{1}{\br{\eta_H+\delta_{\rho_H}-1}^2}\leq \kappa\br{\rho,\eta}$ in two cases:
    \begin{enumerate}
        \item If $\rho_H\leq 3/4$, then $\delta_{\rho_H}=1/4$ and $\eta_H+\delta_{\rho_H}-1\geq 4/5+1/4-1=1/20$, using $\eta_H\geq 4/5$. Hence $\frac{1}{\br{\eta_H+\delta_{\rho_H}-1}^2}\leq 400\leq \kappa\br{\rho,\eta}$.
        \item If $\rho_H> 3/4$, then $\delta_{\rho_H}=1-\rho_H$ and, using $\rho_H=\br{\rho/\eta}\cdot\eta_H$ and $\eta_H\geq 4/5$,
        \[
            \eta_H+\delta_{\rho_H}-1=\eta_H-\rho_H=\br{1-\rho/\eta}\cdot\eta_H\geq \br{1-\rho/\eta}\cdot\frac{4}{5}.
        \]
        Hence $\frac{1}{\br{\eta_H+\delta_{\rho_H}-1}^2}\leq \frac{1}{\br{\br{1-\rho/\eta}\cdot 4/5}^2}\leq \kappa\br{\rho,\eta}$.
    \end{enumerate}
    In both cases the cost is at most $\kappa\br{\rho,\eta}\cdot\gamma\cdot \OPTdem{H}{\rho_H}=d\br{H}\cdot \kappa\br{\rho,\eta}\cdot\gamma\cdot\OPTdem{H}{\rho_H}$.

    \emph{Inductive step.} Suppose the claim holds for every graph $H'$ with $d\br{H'}\leq d-1$, and let $H$ satisfy $d\br{H}=d\geq 2$. Then $k'/w\br{V_H}<4/5$, so $\eta_0=4/5$, and every $H'\in H\setminus C$ has demand at most $\eta_0\cdot w\br{V_H}=\frac{4}{5}\cdot w\br{V_H}$, hence $d\br{H'}\leq d-1$. By Theorem \ref{thm:gpd-bicriteria}, the current partitioning step has cost at most $\frac{\gamma}{\br{\eta_0+\delta_{\rho_H}-1}^2}\cdot\OPTdem{H}{\rho_H}$, and we again bound $\frac{1}{\br{\eta_0+\delta_{\rho_H}-1}^2}\leq \kappa\br{\rho,\eta}$:
    \begin{enumerate}
        \item If $\rho_H\leq 3/4$, then $\delta_{\rho_H}=1/4$ and $\eta_0+\delta_{\rho_H}-1=4/5+1/4-1=1/20$, so the term is at most $400\leq \kappa\br{\rho,\eta}$.
        \item If $\rho_H> 3/4$, then $\delta_{\rho_H}=1-\rho_H$ and, using $\rho_H=\br{\rho/\eta}\cdot\eta_H<\br{\rho/\eta}\cdot\eta_0$,
        \[
            \eta_0+\delta_{\rho_H}-1=\eta_0-\rho_H> \eta_0-\br{\rho/\eta}\cdot\eta_0=\br{1-\rho/\eta}\cdot\frac{4}{5},
        \]
        so the term is at most $\frac{1}{\br{\br{1-\rho/\eta}\cdot 4/5}^2}\leq \kappa\br{\rho,\eta}$.
    \end{enumerate}
    Thus the top level contributes at most $\kappa\br{\rho,\eta}\cdot\gamma\cdot\OPTdem{H}{\rho_H}$. By the inductive hypothesis, for each $H'\in H\setminus C$, $c\br{C_{H'}}\leq \br{d-1}\cdot \kappa\br{\rho,\eta}\cdot\gamma\cdot\OPTdem{H'}{\rho_{H'}}$. Since the components in $H\setminus C$ are vertex and edge-disjoint and $C^*\cap E_{H'}$ is a valid solution for \textsc{Graph Partitioning with Demands} on $H'\in H\setminus C$ with $\rho_{H'}$, we have that $\OPTdem{H'}{\rho_H}\leq c\br{C^*\cap E_{H'}}$. Moreover, by summing up over all $H'$ we get
    $\sum_{H'\in H\setminus C}\OPTdem{H'}{\rho_{H'}}\leq \OPTdem{H}{\rho_H}$, and therefore
    \begin{align*}
        c\br{C}&=c\br{C_H}+\sum_{H'\in H\setminus C}c\br{C_{H'}}
        \\&\leq \kappa\br{\rho,\eta}\cdot\gamma\cdot\OPTdem{H}{\rho_H}+\br{d-1}\cdot\kappa\br{\rho,\eta}\cdot \gamma\sum_{H'\in H\setminus C}\OPTdem{H'}{\rho_{H'}}
        \\&\leq 
        d\cdot \kappa\br{\rho,\eta}\cdot\gamma\cdot\OPTdem{H}{\rho_H}.
    \end{align*}
    Therefore, the induction is complete. Applying it to $H=G$, where $\spr{G}=w\br{V}$ and $k'=\eta\cdot w\br{V}$, gives $d\br{G}\leq \cl{\log_{5/4}\br{1/\eta}}$, which yields the statement of the theorem.
\end{proof}

As an immediate corollary we get:
\begin{corollary}
    For fixed $\rho\in(0,1)$ and $\eta\in(0,1)$, there is an approximation algorithm for \textsc{Graph Partitioning with Demands} that returns a subset of edges $C_G$ such that every connected component $H$ of $G\setminus C_G$ satisfies $w\br{V_H}\leq \eta\cdot w\br{V}$ and, writing $\Lambda=d\cdot \kappa\br{\rho,\eta}$ with $d=\cl{\log_{5/4}\br{1/\eta}}$ and $\kappa\br{\rho,\eta}=\max\brc{400,\,1/\br{\br{1-\rho/\eta}\cdot4/5}^2}$, the following holds:
    \[
    c\br{C_G}\leq \bigo\br{\Lambda\cdot\log n}\cdot \OPTdem{G}{\rho}
    \quad\text{on general graphs,}
    \]
    \[
    c\br{C_G}\leq \bigo\br{\Lambda}\cdot \OPTdem{G}{\rho}
    \quad\text{on trees,}
    \]
    \[
    c\br{C_G}\leq \bigo\br{\Lambda\cdot\sqrt{\log n}}\cdot \OPTdem{G}{\rho}
    \quad\text{for multiplicative demands.}
    \]
    In particular, for any fixed $\rho\in(0,1)$ and $\eta\in(0,1)$, the value $\Lambda$ is a constant, and the above bounds are $\bigo\br{\log n}$, $\bigo\br{1}$ and $\bigo\br{\sqrt{\log n}}$, respectively.
\end{corollary}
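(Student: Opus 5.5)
The corollary follows by plugging the three available approximation ratios $\gamma$ for the \textsc{Generalized Conductance} into Theorem~\ref{thm:gpd-extension}. That theorem already gives, for every fixed $0<\rho\le\eta<1$, a polynomial-time algorithm whose output $C_G$ leaves every component $H$ of $G\setminus C_G$ with $w\br{V_H}\le\eta\cdot w\br{V}$. Its cost satisfies $c\br{C_G}\le d\cdot\kappa\br{\rho,\eta}\cdot\gamma\cdot\OPTdem{G}{\rho}=\Lambda\cdot\gamma\cdot\OPTdem{G}{\rho}$. The remaining work is bookkeeping: identify $\gamma$ in each setting and check that $\Lambda$ is a constant.

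The values of $\gamma$ are as follows.

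\emph{General graphs.} The theorem of Section~3 gives an $\bigo\br{\alpha+\beta}$-approximation. Here $\alpha=\bigo\br{\log n}$ by Theorem~\ref{thm:phiK}, and $\beta=\bigo\br{\log n}$ from the cited \textsc{Generalized $k$-Multicut} results combined with R\"acke's trees. Hence $\gamma=\bigo\br{\log n}$.

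\emph{Trees.} Theorem~\ref{thm:phiK} solves the $K$-bounded sparsest cut exactly, so $\alpha=1$. The $k$-multicut on trees admits a constant-factor approximation, so $\beta=\bigo\br{1}$. Hence $\gamma=\bigo\br{1}$.

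\emph{Multiplicative demands.} The multiplicative-demand theorem gives $\gamma=\bigo\br{\sqrt{\log n}}$ directly.

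One point needs checking. The recursion applies the conductance algorithm to subgraphs $G'$ and components $H'$, not to $G$ itself. Each such subgraph is again a tree (when $G$ is a tree, components are subtrees), or carries the restricted, still multiplicative demand $\pi|_{V_{H'}}$. It also has at most $n$ vertices. So the same ratio $\gamma$ applies at every call.

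For constancy of $\Lambda$, note that $d=\cl{\log_{5/4}\br{1/\eta}}$ depends only on $\eta$. The factor $\kappa\br{\rho,\eta}$ is finite whenever $\rho<\eta$. The corollary allows $\rho$ and $\eta$ independently in $(0,1)$, so I would read it under the standing assumption $\rho<\eta$ of the bicriteria setting. If $\eta\le\rho$ is allowed, the guarantee is interpreted relative to $\OPTdem{G}{\rho'}$ for some fixed $\rho'<\eta$, since otherwise $\kappa$ blows up.

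The main obstacle is not computational. It is making sure the ratio $\beta$ for trees is genuinely constant, since the corollary's tree claim rests on this. I would invoke the cited partial-covering framework, which gives an $\bigo\br{1}$-approximation for $k$-multicut on trees; the Max-Cut step of Lemma~\ref{lem:S2} only loses a factor of $2$. Multiplying out $\bigo\br{\Lambda\gamma}$ in the three settings then yields the three displayed bounds.
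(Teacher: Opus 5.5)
Your proposal is correct and follows the paper's argument: the paper states the corollary as immediate, obtained by substituting $\gamma=\bigo(\log n)$, $\bigo(1)$ and $\bigo(\sqrt{\log n})$ into Theorem~\ref{thm:gpd-extension}. Your caveat that $\rho<\eta$ must be assumed is right and is only implicit in the paper, since Theorem~\ref{thm:gpd-extension} assumes $\rho\le\eta$ and $\kappa=\infty$ at equality; note, though, that for $\rho>\eta$ the formula for $\kappa$ stays finite, and the real obstruction is that $\OPTdem{G}{\rho}$ then no longer lower-bounds the cost of a partition meeting the $\eta$ threshold, so no multiplicative guarantee is possible.
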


\subsection{Hierarchical clustering with demands}
\newcommand{\Dasgupta}{\tilde{c}}

In this problem one is given a graph $G=(V,E,c,w)$ with cost and demand functions $c\colon E\to \mathbb{N}$ and $w\colon V\to \mathbb{N}$, and the goal is to find a (hierarchical) clustering tree of $G$. The \emph{clustering tree} $T=(V_T,E_T)$ of $G$ is a binary tree where each node $u\in V_T$ is associated with a subgraph $H_u$ of $G$, that we call a \emph{cluster}, such that the subsets associated with children of $u$ form a partition of $H_u$.
The root's cluster is $G$ and the leafs of $T$ are associated with singletons. The Dasgupta's clustering objective is to find a clustering tree $T$ minimizing
\begin{equation} \label{eq:Dasgupta}
\Dasgupta\br{T} = \sum_{u\in V_T} w\br{V_{H_u}}\cdot c\br{V_{H_u}, V\setminus V_{H_u}}.
\end{equation}
We refer to the above problem as \textsc{Hierarchical Clustering with Demands}.
We show how to use the approximation algorithm for \textsc{Graph Partitioning with Demands} to obtain an approximation algorithm for the former task. We note that the assumption of $T$ being binary can be relaxed since it is easy to see that any non-binary tree\footnote{In fact the algorithm given below produces a clustering tree which might be non-binary.} can be transformed into a binary one without increasing the cost \cite{ACostFunctionForSimilarityBasedHierarchicalClustering}.

The algorithm for \textsc{Hierarchical Clustering with Demands} is as follows:
\begin{enumerate}
 \item Let initially $T$ be a single (root) node associated with $G$.
 \item While $T$ has a node $u$ whose associated cluster is not a singleton, call a $\gamma$-approximation algorithm for \textsc{Graph Partitioning with Demands} on $H_u$ with $
\rho=1/2$ and $\eta=4/5$. Let $C_u$ be the set of edges returned by the algorithm. For each connected component $H\in H_u\setminus C_u$, add a child node $v$ to $u$ in $T$ associated with the cluster $H_v=H$.
\end{enumerate}

In order to show that the above algorithm achieves an $\bigo\br{\gamma}$-approximation ratio, where $\gamma$ is the approximation ratio for the \textsc{Graph Partitioning with Demands}, we use a generalized decomposition technique into levels due to \cite{ApproximateHierarchicalClusteringViaSparsestCutAndSpreadingMetrics}.
Denote by $T^*$ a tree that minimizes the objective in~\eqref{eq:Dasgupta}.
Let $\cH_{t}$ be the set of all maximal clusters $H$ in $T^*$ such that $w\br{V_H}\leq t$. Observe that $\cH_t$ gives a partition of $V$. Let $E_t^*$ be the set of all edges between clusters in $\cH_t$. For convenience let $E_0^*=E$. We call $E_t^*$ the \emph{cut at level} $t$. We have the following lemma:
\begin{lemma}
    $\Dasgupta\br{T^*} = \sum_{t=1}^{w\br{V}-1}c\br{E_t^*}$
\end{lemma}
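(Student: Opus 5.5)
The plan is to prove the identity by exchanging the order of summation. Both sides will be rewritten as $\sum_{e\in E} c(e)\cdot N(e)$, where $N(e)$ is an integer determined by the position of $e$ in $T^*$; the lemma then reduces to checking that the two coefficients agree edge by edge.

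For an edge $e=uv$, let $H_e$ be the cluster of the lowest common ancestor of the leaves $u$ and $v$ in $T^*$, i.e.\ the smallest cluster containing both endpoints. The first step is to characterize level membership. Fix $t$. The clusters of $T^*$ that contain $u$ form a chain from the leaf $\{u\}$ up to the root, and their weights $w(V_H)$ are non-decreasing along this chain, since clusters are nested and $w$ is nonnegative. Hence the maximal cluster in $\cH_t$ containing $u$ is the topmost cluster on this chain with weight at most $t$, and likewise for $v$. The two chains coincide exactly from $H_e$ upward. So $u$ and $v$ lie in the same cluster of $\cH_t$ if and only if $w(V_{H_e})\le t$. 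In other words,
\[
e\in E_t^*\iff t< w\br{V_{H_e}}.
\]
The degenerate case $t<w(\{u\})$, where no cluster containing $u$ has weight at most $t$, must be handled separately. The convention I will use is that $u$ then forms its own part, which is consistent with $\cH_t$ being a partition. Summing over $t=1,\dots,w(V)-1$ and using $w(V_{H_e})\le w(V)$, the right-hand side equals $\sum_{e} c(e)\cdot\spr{\brc{t\in\{1,\dots,w(V)-1\}\colon t<w(V_{H_e})}}$. Since $w$ is integer-valued, each coefficient is an explicit count: it equals $w(V_{H_e})-1$ whenever $w(V_{H_e})\ge 1$.

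The second step is to put $\Dasgupta(T^*)$ from \eqref{eq:Dasgupta} into the same edge form. An edge $e$ crosses the boundary $c(V_{H_x},V\setminus V_{H_x})$ of a node $x$ exactly when $H_x$ contains one endpoint of $e$ but not the other, that is, when $x$ lies strictly below $H_e$ on one of the two leaf-to-$H_e$ paths. Grouping the terms by edge therefore gives a per-edge charge expressed through the weights of the clusters strictly below $H_e$ that separate $u$ from $v$. I will then argue, using the same chain structure and integrality of $w$, that this charge counts the integer thresholds $t$ in the range $\{1,\dots,w(V)-1\}$ at which $e$ is still cut. That is the same count as in the first step, and comparing coefficients finishes the proof.

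The main obstacle is this second step: matching the node-based form \eqref{eq:Dasgupta} with the threshold count exactly, including the boundary value $t=0$ and the top threshold $t=w(V)$, so that the summation range is precisely $1,\dots,w(V)-1$. I would first try to reduce this to the per-edge identity $\sum_{t\ge 1}\mathbf{1}[t<w(V_{H_e})]$ and treat the charges of the separating clusters as a telescoping sum of weight increments along the two chains. If that telescoping only works after adjusting the convention for singletons or for the root, I would fix the convention explicitly at that point, in the same way as was done for $E_0^*$.
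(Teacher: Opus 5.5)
Your first step is correct and matches the second half of the paper's argument ($e\in E_t^*$ iff $t<w(V_{H_e})$). The gap is in your second step, which cannot be completed as planned.

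If you expand \eqref{eq:Dasgupta} literally through cluster boundaries, an edge $e=uv$ is charged $c(e)\sum_x w(V_{H_x})$, where $x$ ranges over all clusters strictly below $H_e$ that contain exactly one endpoint. That is a sum of weights, not of weight increments, so nothing telescopes, and it is not a threshold count. Concretely, take the path with edges $ab,bc$, unit costs and unit vertex weights, and the (optimal) tree that first splits off $\{c\}$ and then splits $\{a,b\}$. The literal formula charges $bc$ the amount $w(\{a,b\})+w(\{b\})+w(\{c\})=4$ and gives a total of $6$. Meanwhile $\sum_t c(E_t^*)$ equals $3$ over $t\in\{1,2\}$ and $5$ over $t\in\{0,1,2\}$. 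So under the boundary reading, the identity you set out to prove in step 2 is false.

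The paper's proof does not expand boundaries at all. It takes, ``by definition'', that each edge is charged once, at the cluster $H_e$ where it is split, with coefficient $w(V_{H_e})$. This is the standard Dasgupta cost, and it is also how the objective is used later, where the edges cut at $H$ contribute $r_H\cdot c(C_H)$. That per-edge charge is the missing ingredient; with it, your step 1 finishes the proof immediately.

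Even then, your coefficient comparison does not close on the printed range, and your own count shows why. Over $t\in\{1,\dots,w(V)-1\}$ you get $w(V_{H_e})-1$, while the left side has $w(V_{H_e})$. The paper's count (``$uv\in E_t^*$ for every $t<w(V_H)$'') implicitly includes $t=0$, which is what the convention $E_0^*=E$ is for. What the argument actually establishes is $\tilde{c}(T^*)=\sum_{t=0}^{w(V)-1}c(E_t^*)$; starting the sum at $t=1$ loses exactly $c(E)$. For a single edge of cost $1$ between two vertices of weight $1$, the left side is $2$ but the printed sum is $c(E_1^*)=1$. So the right repair is to include $t=0$, not to alter the conventions for singletons or for the root.
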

\begin{proof}
    Consider any edge $uv$. Let $H$ be the minimal cluster in $T^*$ containing both $u$ and $v$. The contribution of $uv$ to $\Dasgupta\br{T^*}$, by definition, is $w\br{V_H}\cdot c\br{uv}$. Moreover, we have that $uv\in E_t^*$ for every $t$ such that $t < w\br{V_H}$, so it contributes $c\br{uv}$ to $c\br{E_t^*}$ for every $t < w\br{V_H}$ which gives the claim.
\end{proof}

We also use the following upper bound on the $\Dasgupta\br{T^*}$ which is a direct consequence of the previous lemma:
\begin{corollary}\label{cor:upper-bound}
  It holds
    $2\cdot \Dasgupta\br{T^*} = 2\cdot \sum_{t=1}^{w\br{V}-1}c\br{E_t^*}\geq \sum_{t=0}^{w\br{V}}c\br{E_{\fl{t/2}}^*}$.
\end{corollary}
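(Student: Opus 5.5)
The equality $\Dasgupta\br{T^*} = \sum_{t=1}^{w\br{V}-1}c\br{E_t^*}$ is the preceding lemma, so the plan is to prove only the inequality
\[
2\cdot\sum_{t=1}^{w\br{V}-1}c\br{E_t^*}\geq \sum_{t=0}^{w\br{V}}c\br{E_{\fl{t/2}}^*}
\]
by reindexing the right-hand side. Each index $s=\fl{t/2}$ arises from at most two values of $t$, namely $t=2s$ and $t=2s+1$. As $t$ ranges over $0,\ldots,w\br{V}$, the index $s$ ranges over $0,\ldots,\fl{w\br{V}/2}$. The right-hand side is therefore at most $2\sum_{s=0}^{\fl{w\br{V}/2}}c\br{E_s^*}$, because all capacities are nonnegative. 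Since $\fl{w\br{V}/2}\leq w\br{V}-1$ whenever $w\br{V}\geq 1$, every such $s\geq 1$ appears on the left-hand side.

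What remains is to account for the terms with $s=0$, arising from $t=0$ and $t=1$. By the convention $E_0^*=E$, these contribute $2c\br{E}$, which the left sum does not contain directly. This is the main obstacle. I would first try to show that $E_1^*$ already equals $E$, or nearly so. Consider an edge $uv$ and the minimal cluster $H$ of $T^*$ containing both endpoints. The edge lies in $E_t^*$ exactly when $t<w\br{V_H}$. If $w\br{V_H}\geq 2$ for every such cluster, then $E_1^*=E$, so $c\br{E_0^*}=c\br{E_1^*}$. The two $s=0$ terms can then be charged to the $t=1$ term of the left sum, with the $s\geq 1$ terms charged as above. The boundary term $t=w\br{V}$ maps to $s=\fl{w\br{V}/2}\leq w\br{V}-1$, so it is already covered by the same bookkeeping.

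If some edge has $w\br{V_H}\leq 1$, this charging fails, and the corrected bound carries an additive $2c\br{E}$ term. I would then add the nonnegative quantity $\sum_{uv}c\br{uv}$ to the $\Dasgupta$-side. This is justified because every edge contributes at least $c\br{uv}\cdot w\br{V_H}$ to the objective, and a zero-demand cluster contributes nothing in either expression. I expect this to be the only delicate point, since it depends on the convention for $E_0^*$ and on demands being positive. Once that is settled, the factor $2$ follows from the two-to-one reindexing alone.
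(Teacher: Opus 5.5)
Write $W=w\br{V}$. Your two-to-one reindexing is the argument the paper intends by calling the corollary a ``direct consequence'' of the lemma. You also correctly identify the two $s=0$ terms as the obstacle. Your way around that obstacle, however, fails.

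In the case where every edge has $w\br{V_H}\geq 2$, you charge the $s=0$ terms ($t=0,1$) to the $t=1$ term $2c\br{E_1^*}$ of the left sum. But under your ``$s\geq 1$ as above'' bookkeeping, that same term already pays for the $s=1$ terms ($t=2,3$). The only left-hand terms still unused are $2c\br{E_s^*}$ with $\fl{W/2}<s\leq W-1$, and these can all vanish.

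In fact the inequality between the middle and right expressions is false. Let $H$ be the minimal cluster of $T^*$ containing $uv$, and let $h=w\br{V_H}\geq 1$. Then $uv$ contributes $2\br{h-1}c\br{uv}$ to $2\sum_{t=1}^{W-1}c\br{E_t^*}$, but $\min\brc{2h,W+1}\cdot c\br{uv}$ to $\sum_{t=0}^{W}c\br{E_{\fl{t/2}}^*}$. That is a deficit of $2c\br{uv}$ whenever $h\leq W/2$. For a concrete instance, take two disjoint unit-capacity edges $ab$ and $cd$ with unit vertex demands. Let $T^*$ be the optimal tree that first separates $\brc{a,b}$ from $\brc{c,d}$. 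Every $w\br{V_H}$ equals $2$, so your special case applies, yet the middle expression is $4$ and the right-hand side is $8$.

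The real culprit is the preceding lemma, which is off by one. Using $E_0^*=E$, the edge $uv$ lies in $E_t^*$ for exactly the $w\br{V_H}$ indices $t=0,\ldots,w\br{V_H}-1$. Hence $\tilde{c}\br{T^*}=\sum_{t=0}^{W-1}c\br{E_t^*}=c\br{E}+\sum_{t=1}^{W-1}c\br{E_t^*}$; in the example, $\tilde{c}\br{T^*}=4$, not $2$.

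Your last paragraph gestures at this by adding $c\br{E}$ to the $\tilde{c}$-side. But that contradicts your opening use of the lemma as an equality, and you offer it as a patch for a special case when it is needed in general. With the corrected lemma, the reindexing alone finishes the proof:
\[
\sum_{t=0}^{W}c\br{E_{\fl{t/2}}^*}\leq 2\sum_{s=0}^{\fl{W/2}}c\br{E_s^*}\leq 2\sum_{s=0}^{W-1}c\br{E_s^*}=2\,\tilde{c}\br{T^*}.
\]
This is the only inequality the subsequent hierarchical-clustering analysis uses. The middle expression in the statement should start at $t=0$.

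Finally, your remark that zero-demand clusters contribute nothing in either expression is wrong under the convention $E_0^*=E$. Such an edge contributes $2c\br{uv}$ to the right-hand side via $t=0,1$ and $0$ to $\tilde{c}\br{T^*}$. So even the corrected bound needs $w\br{V_H}\geq 1$ for every edge, or else $E_0^*$ defined by the same rule as the other levels.
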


Fix some non-leaf cluster $H$ in the tree $T$ computed by the algorithm and let $C_H$ be the cut returned by the \textsc{Graph Partitioning with Demands} algorithm with input $H$.
Let $r_H=w\br{V_H}$ and $s_H$ be the size of the largest child of $H$ with respect to the total demand.
With the above choice of $\eta=4/5$, we have $s_H\leq 4r_H/5$. The edges cut in $H$ contribute $r_H\cdot c\br{C_H}$ to the objective function. We wish to charge this cost to the edges cut in $E_{r_H/2}^*$, restricted to $H$. We have the following bound:
\begin{align*}
    \br{r_H-s_H}\cdot c\br{E_{r_H/2}^*\cap E_H}\leq \sum_{t=s_H+1}^{r_H}c\br{E_{\fl{t/2}}^*\cap E_H}
\end{align*}
since as the level $t$ decreases, the more edges belong to $E_t^*$. 

The contribution of $C_H$ to the objective $\Dasgupta\br{T}$ is:
\begin{align*}
    r_H\cdot c\br{C_H} \leq 5\cdot\br{r_H-s_H}\cdot c\br{E_{r_H/2}^*\cap E_H}\leq 5\cdot \sum_{t=s_H+1}^{r_H}c\br{E_{\fl{t/2}}^*\cap E_H}
\end{align*}
where the first inequality is due to the fact that $r_H-s_H\geq r_H/5$.

Now, since all child clusters of $H$ have size at most $s_H$, they do not contribute to $c\br{E_{t}^*}$ for $t > s_H$. Combining this observation with the fact that the clusters in $T$ contributing to a given level $t$ form a partition of $V$, and thus are disjoint, we get that:
\begin{align*}
    \sum_{H} \sum_{t=s_H+1}^{r_H}c\br{E_{t}^*\cap E_H} \leq \sum_{t=0}^{w\br{V}} c\br{E_{\fl{t/2}}^*}
\end{align*}

Thus by summing up over the contribution of all cuts $C_H$ we get that the cost of the tree returned by the algorithm is at most:
\begin{align*}
    \Dasgupta\br{T}=\sum_{H} r_H\cdot c\br{C_H} & \leq 5\cdot \sum_{H} \sum_{t=s_H+1}^{r_H}c\br{E_{t}^*\cap E_H} \\
    &\leq 5\cdot \sum_{t=0}^{w\br{V}} c\br{E_{\fl{t/2}}^*}\leq 10\cdot \Dasgupta\br{T^*}
\end{align*}
where the last inequality is by the Corollary~\ref{cor:upper-bound}.

We get the following theorem:
\begin{theorem}
    There is a polynomial-time $\bigo\br{\gamma}$-approximation algorithm for \textsc{Hierarchical Clustering with Demands}, where $\gamma$ is the approximation ratio for the \textsc{Graph Partitioning with Demands}.
\end{theorem}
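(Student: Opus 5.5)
The proof will simply collect the pieces established just before the statement. I will show that the top-down algorithm given there, which repeatedly calls a $\gamma$-approximation for \textsc{Graph Partitioning with Demands} with $\rho=1/2$ and $\eta=4/5$, returns a clustering tree $T$ with $\Dasgupta\br{T}\le \bigo\br{\gamma}\cdot\Dasgupta\br{T^*}$.

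\textbf{Polynomial running time.} The algorithm only splits clusters that are not singletons, and each call partitions $H_u$ into at least two nonempty components. Hence $T$ has at most $2n-1$ nodes, and we make polynomially many calls to a polynomial-time subroutine. If some call returns a single component (possible when $w\br{V_{H_u}}$ is tiny, for example $0$), I will split off an arbitrary vertex at cost at most $c\br{E_{H_u}}$. Such nodes contribute $w\br{V_{H_u}}\cdot c(\cdot)$, and I will argue this is $0$ or negligible when $w\br{V_{H_u}}$ is below the relevant threshold. This degenerate-case handling is the part I expect to need the most care.

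\textbf{Per-node charging.} Fix a non-leaf cluster $H$ with $r_H=w\br{V_H}$ and largest child demand $s_H\le 4r_H/5$. I will replace the paper's sketchy inequality $r_H c\br{C_H}\le 5(r_H-s_H)c\br{E^*_{r_H/2}\cap E_H}$ with a correct version that carries the factor $\gamma$.

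\begin{enumerate}
\item The restriction of $T^*$'s level cut $E^*_{\fl{r_H/2}}$ to $H$ is a feasible solution for \textsc{Graph Partitioning with Demands} on $H$ with $\rho=1/2$. The reason is that every cluster of $\cH_{\fl{r_H/2}}$ has demand at most $r_H/2$. Removing these edges from $H$ therefore leaves components that are sub-pieces of these clusters, and their internal demand is at most $r_H/2=\rho\, w\br{V_H}$.
\item Consequently $\OPTdem{H}{1/2}\le c\br{E^*_{\fl{r_H/2}}\cap E_H}$.
\item By the guarantee of the subroutine, $c\br{C_H}\le \gamma\, c\br{E^*_{\fl{r_H/2}}\cap E_H}$.
\item Since $r_H-s_H\ge r_H/5$ and the level cuts are monotone (lower level means a larger cut), for every $t\le r_H$ we have $E^*_{\fl{t/2}}\supseteq E^*_{\fl{r_H/2}}$. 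Combining this with the previous step gives
\[
r_H\, c\br{C_H}\le 5\gamma\sum_{t=s_H+1}^{r_H}c\br{E^*_{\fl{t/2}}\cap E_H}.
\]
\end{enumerate}

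\textbf{Summing over nodes.} For a fixed $t$, the clusters $H$ of $T$ with $s_H<t\le r_H$ are pairwise disjoint. Indeed, if one were a strict ancestor of another, the descendant would lie inside a child of the ancestor, whose demand is at most $s_{\text{anc}}<t$; this contradicts the descendant having $r\ge t$. So the sets $E_H$ over these clusters are edge-disjoint, and summing the displayed bound over all $H$ yields
\[
\Dasgupta\br{T}\le 5\gamma\sum_{t=0}^{w\br{V}}c\br{E^*_{\fl{t/2}}}.
\]
Corollary~\ref{cor:upper-bound} bounds the right-hand side by $10\gamma\,\Dasgupta\br{T^*}$. Finally, since the algorithm's tree may be non-binary, I will convert it into a binary tree without increasing the cost, as noted earlier.
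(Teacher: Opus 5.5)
Your proposal is correct and follows the paper's own argument: the level decomposition of $T^*$, the per-node charge $r_H\,c(C_H)\le 5\gamma\sum_{t=s_H+1}^{r_H}c(E^*_{\lfloor t/2\rfloor}\cap E_H)$, summation using the disjointness of clusters with $s_H<t\le r_H$, and Corollary~\ref{cor:upper-bound}. You also supply two steps the paper leaves implicit: the factor $\gamma$, which comes from feasibility of $E^*_{\lfloor r_H/2\rfloor}\cap E_H$ for $\rho=1/2$ on $H$, and the disjointness argument. Your degenerate case needs no hedging either, because every child has demand at most $\frac{4}{5}w(V_{H_u})$: a cluster with $w(V_{H_u})>0$ is therefore always split, and a single component can only be returned when $w(V_{H_u})=0$, in which case that node and all its descendants contribute exactly $0$.
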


And as an immediate corollary we get:
\begin{corollary}
    There exists an $\bigo\br{\log n}$-approximation algorithm for the \textsc{Hierarchical Clustering with Demands} on general graphs and an $\bigo\br{1}$-approximation algorithm for trees. Moreover, if the demand function is multiplicative, then the approximation ratio is $\bigo\br{\sqrt{\log n}}$.
\end{corollary}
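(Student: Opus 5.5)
The statement follows by chaining reductions already established, instantiated with the three choices of $\gamma$.

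First, the Hierarchical Clustering theorem just proved says that any $\gamma$-approximation for \textsc{Graph Partitioning with Demands} with the fixed parameters $\rho=1/2$, $\eta=4/5$ gives an $\bigo(\gamma)$-approximation for \textsc{Hierarchical Clustering with Demands}. In the charging argument the constants $5$ and $10$ do not depend on $n$, so it suffices to bound $\gamma$ for this pair $(\rho,\eta)$.

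Next, I apply the corollary of Theorem~\ref{thm:gpd-extension} with $\rho=1/2$ and $\eta=4/5$. Here $d=\cl{\log_{5/4}(5/4)}=1$ and $\kappa(1/2,4/5)=\max\{400, 1/((3/8)\cdot 4/5)^2\}=400$, so $\Lambda=400$ is a constant. The partitioning ratio is therefore $\bigo(\gamma')$, where $\gamma'$ is the ratio for \textsc{Generalized Conductance}. We can also use Theorem~\ref{thm:gpd-bicriteria} directly: $\delta_{1/2}=1/4$ and $\eta=4/5>3/4$, so $\eta+\delta_\rho-1=1/20$ and the constant is $400$.

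Finally, I plug in the values of $\gamma'$ from the three regimes.
\begin{enumerate}
\item For general graphs, the $\bigo(\alpha+\beta)$ theorem together with Theorem~\ref{thm:phiK} gives $\alpha=\bigo(\log n)$, and $\beta=\bigo(\log n)$ comes from \cite{AUnifiedApproachToApproximatingPartialCoveringProblems,OptimalHierarchicalDecompositionsForCongestionMinimizationInNetworks}.
\item For multiplicative demands, the multiplicative-demand theorem gives $\bigo(\sqrt{\log n})$.
\item For trees, Theorem~\ref{thm:phiK} solves the $K$-bounded sparsest cut exactly, so $\alpha=1$. For $\beta$, I need $k$-multicut on trees to be $\bigo(1)$-approximable; I would cite the partial-covering framework of \cite{AUnifiedApproachToApproximatingPartialCoveringProblems}, which gives constant factors on trees.
\end{enumerate}

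The main obstacle is checking that the tree structure survives the recursion. The algorithm applies \textsc{Graph Partitioning with Demands} to clusters $H_u$, and in the partitioning procedure it applies \textsc{Generalized Conductance} to the residual graphs $G'$. For a tree input, these are induced subgraphs of a tree, hence forests rather than trees. I would handle this by noting that the single-edge argument and the $k$-multicut tree algorithm both extend to forests: add zero-capacity edges to connect the components into a tree, which changes no cut cost. A further point is that the demands restricted to a cluster keep their multiplicative form with the same potential $\pi$, so the $\sqrt{\log n}$ bound also transfers through the recursion. Together these give the three claimed ratios.
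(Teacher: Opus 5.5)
Your proposal is correct and follows the paper's own argument: the corollary is obtained by plugging the graph-partitioning guarantee at the fixed parameters $\rho=1/2$, $\eta=4/5$ into the $\bigo(\gamma)$ reduction for \textsc{Hierarchical Clustering with Demands}, and then using the three \textsc{Generalized Conductance} ratios. Your constant $400$ (from $\eta+\delta_{\rho}-1=1/20$) is right, and your extra checks are sound details the paper leaves implicit: forests arising in the while loop, completed to trees with zero-capacity edges, and multiplicative demands staying multiplicative on clusters. One small addition: the auxiliary graph with the universal vertex $r$ is not a tree once $G$ has an edge, so in the tree case you should first delete its zero-capacity edges $ru$ and reattach $r$ by a single zero-capacity edge before invoking the tree algorithm.
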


\section{Conclusions}

We introduced and studied the \textsc{Generalized Conductance}, a problem of interest on its own, and also important from the perspective of potential applications. Our main algorithm combines two complementary reductions which yield an $\bigo\br{\log n}$-approximation on general graphs and an $\bigo\br{1}$-approximation on trees. For multiplicative demand functions, we further improve the approximation guarantee to $\bigo\br{\sqrt{\log n}}$.

Beyond the core objective, we showed that these approximation ratios carry over to natural downstream tasks. In particular, we obtained an $\bigo\br{\log n}$ bicriteria approximation for \textsc{Graph Partitioning with Demands} and an $\bigo\br{\log n}$-approximation for \textsc{Hierarchical Clustering with Demands}, with corresponding $\bigo\br{\sqrt{\log n}}$ guarantees in the multiplicative setting and $\bigo\br{1}$-approximation for trees. These consequences indicate that generalized conductance can serve as a robust algorithmic primitive, similarly to how sparsest-cut-type objectives are used in classical divide-and-conquer methods. It is of interest whether our algorithm is applicable to different tasks in the presence of demands, such as network design or routing.

One may consider generalizing our setup to the case when both cost and demand functions are given on arbitrary hypergraphs. Here, a similar situation occurs since few approximation algorithms for \textsc{Hypergraph Sparsest Cut} are known \cite{DiversityEmbeddingsAndTheHypergraphSparsestCut,BoundsOnTheMaxFlowMinCutRatioForDirectedMulticommodityFlows}, but we are not aware of any results on hypergraph analogues of other graph partitoning objectives studied in this work. What would probably be required in order to tackle these tasks is a new notion of hypergraph conductance, similar in spirit to our generalized conductance. Obtaining approximation algorithms for such a measure would be an interesting direction for future work.

Incorporating our setup into the framework of spectral graph theory is another promising direction. In particular, it is known that the classical conductance is closely related to the second eigenvalue of the Laplacian matrix of a graph. It would be of interest to see whether a similar relationship can be established for generalized conductance and some appropriate generalization of the Laplacian matrix.

\bibliographystyle{plain}
\bibliography{references}
\appendix
\section{Appendix}
\subsection{Proof of Theorem~\ref{thm:racke-dual}}\label{app:racke-proof}

\begin{theorem}
Let $G=(V,E,c_G)$ and let $\brc{(T_i,\lambda_i)}_{i=1}^{N}$ be as in Theorem~\ref{thm:convex_decomposotion}.
Define $\mu$ by $\Pr_{T\sim\mu}[T=T_i]=\lambda_i$ and set
$\rho=\bigo(\log n)$. Then for every cut $U\subsetneq V$:
\begin{enumerate}
  \item[(i)] $c_G\br{U,V\setminus U}\le \minCut{U}{V\setminus U}{T}$ for every $T$ in the support of $\mu$.
  \item[(ii)] $\Etree\left[\minCut{U}{V\setminus U}{T}\right]\le \rho\cdot c_G\br{U,V\setminus U}$.
\end{enumerate}
\end{theorem}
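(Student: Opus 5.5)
Part (i) comes directly from the definition of tree capacities. Part (ii) comes from max-flow/min-cut duality combined with Theorem~\ref{thm:convex_decomposotion}. I read $\minCut{U}{V\setminus U}{T}$ as the minimum cost $c_T$ of a set of tree edges separating the leaves $m'_V(U)$ from the leaves $m'_V(V\setminus U)$.

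For (i), fix $T$ in the support and let $F\subseteq E_T$ be a minimum cut separating $m'_V(U)$ from $m'_V(V\setminus U)$. Every edge $xy\in E$ with $x\in U$, $y\notin U$ is mapped by $m'_E$ to the tree path between $m'_V(x)$ and $m'_V(y)$. That path must use some edge $e\in F$, and removing $e$ puts $x$ and $y$ on opposite sides of the leaf partition $(V_u,V_v)$. Hence $c(x,y)$ is counted in $c_T(e)$. Summing over all crossing edges, each crossing edge is charged to at least one edge of $F$, and capacities are nonnegative, so
\[
c_G(U,\bar U)\le \sum_{e\in F}c_T(e)=\minCut{U}{V\setminus U}{T}.
\]

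For (ii), work in each tree $T_i$ with a super-source $s$ attached to the leaves of $U$ and a super-sink $t$ attached to the leaves of $V\setminus U$, using infinite capacities. By max-flow/min-cut in $T_i$, there is a flow $f_i$ of value $M_i:=\minCut{U}{V\setminus U}{T_i}$ from $U$-leaves to $(V\setminus U)$-leaves, with congestion at most $1$ in $T_i$. Decompose $f_i$ into paths to view it as a multicommodity flow between pairs $(u,v)$ with $u\in U$, $v\notin U$. Apply Theorem~\ref{thm:convex_decomposotion} to the family $\{f_i\}$. The combined flow $f=\sum_i\lambda_i m(f_i)$ in $G$ then has congestion at most $\rho$, and it routes $\sum_i\lambda_i M_i=\Etree[\minCut{U}{V\setminus U}{T}]$ units from $U$ to $V\setminus U$. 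Mapping through $m_E$ preserves each commodity's endpoints, because a tree path between leaves maps to a concatenated walk in $G$ between the corresponding vertices.

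Every unit of this flow must cross the cut $(U,\bar U)$ in $G$. Therefore the total flow value is at most the flow through the crossing edges, which is at most $\rho\cdot c_G(U,\bar U)$ by the congestion bound. This gives $\Etree[\minCut{U}{V\setminus U}{T}]\le\rho\, c_G(U,\bar U)$.

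The main obstacle is making the flow-mapping step rigorous. First, the internal tree nodes are images under $m_V$ of vertices of $G$, so the concatenated paths $m_E(e)$ give walks between the correct leaf endpoints. Second, Theorem~\ref{thm:convex_decomposotion} must apply to single-commodity source–sink flows once they are written as multicommodity flows between leaf pairs. I would handle both by path-decomposing each $f_i$ and mapping path by path, so that each mapped walk still connects a vertex of $U$ to a vertex of $V\setminus U$ in $G$.
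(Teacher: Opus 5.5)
Your proof is correct. Part (ii) is essentially the paper's argument: a maximum leaf-to-leaf flow in each $T_i$ of value $\minCut{U}{V\setminus U}{T_i}$ is combined through Theorem~\ref{thm:convex_decomposotion} into a flow in $G$ of congestion at most $\rho$, which is then compared with the cut. Your extra steps make the paper's terse argument rigorous: you path-decompose each $f_i$ into leaf-pair commodities so the theorem applies as stated, you check that concatenated $m_E$-paths connect the right endpoints, and you bound the value by the load on crossing edges rather than dividing by $c_G\br{U,\bar{U}}$.

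Part (i) takes a different route. The paper takes a maximum $U$--$\bar{U}$ flow in $G$ of value $c_G\br{U,\bar{U}}$, pushes it into $T$ via Theorem~\ref{thm:flow_mapping} to get congestion at most $1$, and concludes by weak duality in $T$. You instead charge each crossing edge $xy$ to an edge of the tree cut lying on the tree path between $m'_V(x)$ and $m'_V(y)$. You then use the explicit formula $c_T(e)=\sum_{x\in V_u,\,y\in V_v}c(x,y)$.

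Your version is more elementary and self-contained. It needs neither max-flow/min-cut in $G$ nor R\"{a}cke's flow-mapping theorem; in effect it inlines that theorem's proof for a single cut. It also avoids having to argue that the edge-wise mapped flow $m'(f)$ still routes $c_G\br{U,\bar{U}}$ units between the two leaf sets. In return, the paper's version treats (i) and (ii) uniformly via flow duality and uses Theorem~\ref{thm:flow_mapping} only as a black box. It therefore does not depend on the particular formula defining $c_T$.
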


\begin{proof}
Fix a cut $U\subsetneq V$.

For (i), let $F^*=c_G\br{U,V\setminus U}$. By max-flow/min-cut in $G$, there is a
single-commodity flow from $U$ to $V\setminus U$ of value $F^*$ and congestion $1$.
Apply Theorem~\ref{thm:flow_mapping} to map this flow to any tree $T$ in support. The mapped
flow in $T$ still has congestion at most $1$, hence can use at most the capacity
of the tree cut between $U$ and $V\setminus U$. Therefore,
\[
  F^*=c_G\br{U,V\setminus U}\le \minCut{U}{V\setminus U}{T}.
\]

For (ii), for each tree $T_i$ choose a maximum $\br{U,V\setminus U}$ flow $f_i$ in $T_i$.
Since $T_i$ is a tree and capacities are $c_{T_i}$, max-flow/min-cut gives $
  \mathrm{val}(f_i)=\minCut{U}{V\setminus U}{T_i}$,
and $f_i$ has congestion at most $1$ in $T_i$.
Apply Theorem~\ref{thm:convex_decomposotion} to the family $\brc{f_i}_{i=1}^{N}$: the mapped convex combination, denote it by
\[
  f=\sum_{i=1}^{N} \lambda_i \cdot m(f_i)
\]
has congestion at most $\rho$ in $G$. Its value is
\[
  F:=\sum_{i=1}^{N} \lambda_i\cdot \mathrm{val}(f_i)
   =\sum_{i=1}^{N}\lambda_i \cdot \minCut{U}{V\setminus U}{T_i}
  =\Etree\left[\minCut{U}{V\setminus U}{T}\right].
\]
Any $\br{U,V\setminus U}$ flow of value $F$ in $G$ must have congestion at least
$F/c_G\br{U,V\setminus U}$ (again by max-flow/min-cut), so
\[
  \frac{F}{c_G\br{U,V\setminus U}}\le \rho.
\]
Substituting $F=\Etree\left[\minCut{U}{V\setminus U}{T}\right]$ gives $
  \Etree\left[\minCut{U}{V\setminus U}{T}\right]\le \rho\cdot c_G\br{U,V\setminus U}$ which gives (ii).
\end{proof}

\end{document}